\newsavebox{\@brx}
\newcommand{\llangle}[1][]{\savebox{\@brx}{\(\m@th{#1\langle}\)}%
  \mathopen{\copy\@brx\kern-0.5\wd\@brx\usebox{\@brx}}}
\newcommand{\rrangle}[1][]{\savebox{\@brx}{\(\m@th{#1\rangle}\)}%
  \mathclose{\copy\@brx\kern-0.5\wd\@brx\usebox{\@brx}}}
\newlength{\dhatheight} 
\newtheorem{theorem}{Theorem}[section]
\newtheorem{lemma}[theorem]{Lemma}
\theoremstyle{definition}
\long\def\/*#1*/{}
\newcommand{\az}[1]{{\textcolor{teal}{[#1]}}}
\newcommand\az*[1]{{\textcolor{teal}{#1}}}
\newcommand{\Ord}{\mathcal{O}}
\newcommand{\Ot}{\widetilde{\mathcal{O}}}
\newcommand{\R}{\mathbb{R}}
\newcommand{\Z}{\mathbb{Z}}
\newcommand{\I}{\mathbb{I}}
\newcommand{\Spin}{\mathrm{Spin}}
\newcommand{\Pin}{\mathrm{Pin}}
\newcommand{\Orth}{\mathrm{O}}
\newcommand{\SO}{\mathrm{SO}}
\newcommand{\SU}{\mathrm{SU}}
\newcommand{\U}{\mathrm{U}}
\newcommand{\Cl}{\mathrm{Cl}}
\newcommand{\CNOT}{\mathrm{CNOT}}
\DeclareMathOperator{\sign}{sign}
\DeclareMathOperator*{\argmin}{arg\,min}
\DeclareMathOperator*{\argmax}{arg\,max}
\DeclareMathOperator{\spn}{span}
\DeclareMathOperator*{\E}{\mathbb{E}}
\DeclareMathOperator{\tr}{tr}
\newcommand{\op}[2]{\ket{#1}\!\bra{#2}}
\newcommand{\ip}[2]{\langle #1 | #2 \rangle}
\newcommand{\ev}[2]{\langle #2 | #1 | #2 \rangle}
\newcommand{\ket}[1]{| #1 \rangle}
\newcommand{\bra}[1]{\langle #1 |}
\renewcommand{\i}{\mathrm{i}}
\newcommand{\T}{\mathsf{T}}
\renewcommand{\l}[1]{\mathopen{}\left#1}
\renewcommand{\r}[1]{\right#1\mathclose{}}
\DeclareMathOperator{\conv}{conv}
\begin{document}


\title{Expanding the reach of quantum optimization with fermionic embeddings}

\author{Andrew Zhao}
\email{azhao@unm.edu}
\affiliation{Google Quantum AI, San Francisco, CA 94105, USA}
\affiliation{Center for Quantum Information and Control, Department of Physics and Astronomy, University of New Mexico, Albuquerque, NM 87106, USA}

\author{Nicholas C. Rubin}
\email{nickrubin@google.com}
\affiliation{Google Quantum AI, San Francisco, CA 94105, USA}


\begin{abstract}
    Quadratic programming over orthogonal matrices encompasses a broad class of hard optimization problems that do not have an efficient quantum representation. Such problems are instances of the little noncommutative Grothendieck problem (LNCG), a generalization of binary quadratic programs to continuous, noncommutative variables. In this work, we establish a natural embedding for this class of LNCG problems onto a fermionic Hamiltonian, thereby enabling the study of this classical problem with the tools of quantum information. This embedding is accomplished by a new representation of orthogonal matrices as fermionic quantum states, which we achieve through the well-known double covering of the orthogonal group. Correspondingly, the embedded LNCG Hamiltonian is a two-body fermion model. Determining extremal states of this Hamiltonian provides an outer approximation to the original problem, a quantum analogue to classical semidefinite relaxations. In particular, when optimizing over the \emph{special} orthogonal group our quantum relaxation obeys additional, powerful constraints based on the convex hull of rotation matrices. The classical size of this convex-hull representation is exponential in matrix dimension, whereas our quantum representation requires only a linear number of qubits. Finally, to project the relaxed solution back into the feasible space, we propose rounding procedures which return orthogonal matrices from appropriate measurements of the quantum state. Through numerical experiments we provide evidence that this rounded quantum relaxation can produce high-quality approximations.
\end{abstract}

\maketitle

\tableofcontents

\section{Introduction}
Finding computational tasks where a quantum computer could have a large speedup is a primary driver for the field of quantum algorithm development. While some examples of quantum advantage are known, such as quantum simulation~\cite{feynman1982,lloyd1996universal}, prime number factoring~\cite{shorfactoring}, and unstructured search~\cite{Grover1996}, generally speaking computational advantages for industrially relevant calculations are scarce. Specifically in the field of optimization, which has attracted a large amount of attention from quantum algorithms researchers due to the ubiquity and relevance of the computational problems, substantial quantum speedups, even on model problems, are difficult to identify.  This difficulty is in part because it is not obvious \emph{a priori} how the unique features of quantum mechanics---e.g., entanglement, unitarity, and interference---can be leveraged towards a computational advantage~\cite{Grover1996,PRXQuantum.2.010103,PRXQuantum.2.030312}.  

In this work we take steps toward understanding how to apply quantum computers to optimization problems by demonstrating that the class of optimization problems involving rotation matrices as decision variables has a natural quantum formulation and efficient embedding. Examples of such problems include the joint alignment of points in Euclidean space by isometries, which has applications within the contexts of structural biology via cryogenic electron microscopy (cryo-EM)~\cite{shkolnisky2012viewing,singer2018mathematics} and NMR spectroscopy~\cite{cucuringu2012eigenvector}, computer vision~\cite{arie2012global,ozyecsil2017survey}, robotics~\cite{rosen2019se,lajoie2019modeling}, and sensor network localization~\cite{cucuringu2012sensor}. The central difficulty in solving these problems is twofold:~first, the set of orthogonal transformations $\Orth(n)$ is nonconvex, making the optimization landscape challenging to navigate in general. Second, the objectives of these problems are quadratic in the decision variables, making them examples of quadratic programming under orthogonality constraints~\cite{nemirovski2007sums}. In this paper we specifically focus on the problem considered by Bandeira \emph{et al.}~\cite{bandeira2016approximating}, which is a special case of the real little noncommutative Grothendieck (LNCG) problem~\cite{briet2017tight}. While significant progress has been made in classical algorithms development for finding approximate solutions, for example by semidefinite relaxations~\cite{povh2010semidefinite,wang2013exact,naor2014efficient,saunderson2014semidefinite,bandeira2016approximating}, guaranteeing high-quality solutions remains difficult in general. This paper therefore provides a quantum formulation of the optimization problem, as a first step in exploring the potential use of a quantum computer to obtain more accurate solutions.

The difficulty of the LNCG problem becomes even more pronounced when restricting the decision variables to the group of rotation matrices $\SO(n)$~\cite{bandeira2017estimation,pumir2021generalized}. One promising approach to resolving this issue is through the convex relaxation of the problem, studied by Saunderson \emph{et al.}~\cite{saunderson2015semidefinite,saunderson2014semidefinite}. They identified that the convex hull of rotation matrices, $\conv\SO(n)$, is precisely the feasible region of a semidefinite program (SDP)~\cite{saunderson2015semidefinite}. Therefore, standard semidefinite relaxations of the quadratic optimization problem can be straightforwardly augmented with this convex-hull description as an additional constraint~\cite{saunderson2014semidefinite}. They prove that when the problem is defined over particular types of graphs, this enhanced SDP is exact, and for more general instances of the problem they numerically demonstrate that it yields significantly higher-quality approximations than the basic SDP. The use of this convex hull has since been explored in related optimization contexts~\cite{matni2014convex,rosen2015convex,saunderson2016convex}. Notably however, the semidefinite description of $\conv\SO(n)$ is exponentially large in $n$. Roughly speaking, this reflects the complexity of linearizing a nonlinear determinant constraint. One such representation is the so-called positive-semidefinite (PSD) lift of $\conv\SO(n)$, which is defined through linear functionals on the trace-1, PSD matrices of size $2^{n-1} \times 2^{n-1}$.

One may immediately recognize this description as the set of density operators on $n - 1$ qubits. In this paper we investigate this statement in detail and make a number of connections between the optimization of orthogonal/rotation matrices and the optimization of quantum states, namely fermionic states in second quantization. The upshot is that these connections provide us with a relaxation of the quadratic program into a quantum Hamiltonian problem. Although this relaxation admits solutions (quantum states) which lie outside the feasible space of the original problem, we show that it retains much of the important orthogonal-group structure due to this natural embedding. The notion of quantum relaxations have been previously considered in the context of combinatorial optimization (such as the Max-Cut problem), wherein quantum rounding protocols were proposed to return binary decision variables from the relaxed quantum state~\cite{fuller2021approximate}. In a similar spirit, in this paper we consider rounding protocols which return orthogonal/rotation matrices from our quantum relaxation.

Within the broader context of quantum information theory, our work here also provides an alternative perspective to relaxations of quantum Hamiltonian problems. There is a growing interest in classical methods for approximating quantum many-body problems based on SDP relaxations~\cite{brandao2013product, bravyi2019approximation, gharibian_et_al:LIPIcs:2019:11246,anshu_et_al:LIPIcs:2020:12066, parekh_et_al:LIPIcs.ESA.2021.74,parekh2021application,hastings2022optimizing,parekh2022optimal,hastings2022perturbation,king2022improved}. In that context, rounding procedures are more difficult to formulate because the space of quantum states is exponentially large. For instance, the algorithm may only round to a subset of quantum states with efficient classical descriptions such as product states~\cite{brandao2013product, bravyi2019approximation, gharibian_et_al:LIPIcs:2019:11246, parekh_et_al:LIPIcs.ESA.2021.74,parekh2022optimal} or low-entanglement states~\cite{anshu_et_al:LIPIcs:2020:12066,parekh2021application}, effectively restricting the approximation from representing the true ground state. Nonetheless, these algorithms can still obtain meaningful approximation ratios of the optimal energy, indicating that such states can at least capture some qualitative properties of the generically entangled ground state.

Our quantum relaxation can be viewed as working in the opposite direction:~we construct a many-body Hamiltonian where the optimal solution to the underlying classical quadratic program is essentially a product state. Therefore, we propose preparing an approximation to the ground state of the Hamiltonian,\footnote{While the physical problem typically considers the ground-state problem, this paper takes the convention of maximizing objectives.} which is then rounded to the nearest product state corresponding to the original classical solution space. This is not unlike quantum approaches to binary optimization such as quantum annealing or the quantum approximate optimization algorithm~\cite{PhysRevLett.101.130504,farhi2014quantum,PhysRevA.101.012320,hauke2020perspectives,PRXQuantum.2.030312}, which explore a state space outside the classical feasible region before projectively measuring, or rounding, the quantum state to binary decision variables. We furthermore provide numerical evidence that the physical qualitative similarity between optimal product and entangled states may translate into quantitative accuracy for the classical optimization problem, in a context beyond discrete combinatorial optimization.

Finally, we remark that Grothendieck-type problems and inequalities have a considerable historical connection to quantum theory. Tsirelson~\cite{tsirelson1987quantum} employed Clifford algebras to reformulate the commutative Grothendieck inequality into a statement about classical XOR games with entanglement. Regev and Vidick~\cite{regev2015quantum} later introduced the notion of quantum XOR games, which they studied through the generalization of such ideas to noncommutative Grothendieck inequalities. The mathematical work of Haagerup and Itoh~\cite{haagerup1995grothendieck} studied Grothendieck-type inequalities as the norms of operators on $C^*$-algebras;~their analysis makes prominent use of canonical anticommutation relation algebras over fermionic Fock spaces. Quadratic programming with orthogonality constraints has also been applied for classical approximation algorithms for quantum many-body problems, for instance by Bravyi \emph{et al.}~\cite{bravyi2019approximation}. Recasting noncommutative Grothendieck problems into a quantum Hamiltonian problem may therefore provide new insights into these connections.

The rest of this paper is organized as follows:~Section~\ref{sec:problem_statement} provides a formal description of the optimization problem that we study in this paper and reviews known complexity results of related problems. In Section~\ref{sec:lncg_applications} we describe two well-known applications of the problem:~the group synchronization problem and the generalized orthogonal Procrustes problem. Section~\ref{sec:summary_results} provides a summary of our quantum relaxation which embeds the optimization problem into a Hamiltonian, and two accompanying rounding protocols. In Section~\ref{sec:quantum_formalism} we derive an embedding of orthogonal matrices into quantum states via the Pin and Spin groups. We elaborate on the connection to fermionic theories and provide a quantum perspective on the convex hull of the orthogonal groups. From this embedding, Section~\ref{sec:quantum_relax_section} then establishes the quantum Hamiltonian relaxation of the quadratic optimization problem. Section~\ref{sec:rounding} describes both classical and quantum rounding protocols for relaxations of the problem. Notably, for the classical SDP we derive an approximation ratio for $\SO(n)$ rounding. Finally, in Section~\ref{sec:numerical_experiments} we demonstrate numerical experiments on random instances of the group synchronization problem for $\SO(3)$ on three-regular graphs and report the performance of various classical and quantum rounding protocols. For our simulations of the quantum relaxation, we consider two classes of quantum states:~maximal eigenstates of the Hamiltonian and quasi-adiabatically evolved states. We close in Section~\ref{sec:discussion} with a discussion on future lines of research.

\section{Problem statement}\label{sec:problem_statement}
In this paper we consider the class of little noncommutative Grothendieck (LNCG) problems over the orthogonal group, as studied previously by Bandeira \emph{et al.}~\cite{bandeira2016approximating}.\footnote{The authors also consider the complex-valued problem over the unitary group, which is outside the scope of this present paper.} Let $(V, E)$ be an undirected graph with $m = |V|$ vertices and edge set $E$. For integer $n \geq 1$, let $C \in \R^{mn \times mn}$ be a symmetric matrix, which for notation we partition into $n \times n$ blocks as
\begin{equation}
    C = \begin{bmatrix}
    C_{11} & \cdots & C_{1m}\\
    \vdots & \ddots & \vdots\\
    C_{m1} & \cdots & C_{mm}
    \end{bmatrix}.
\end{equation}
The quadratic program we wish to solve is of the form
\begin{equation}\label{eq:LNCG}
    \max_{R_1, \ldots, R_m \in G} \sum_{(u, v) \in E} \langle C_{uv}, R_u R_v^\T \rangle,
\end{equation}
where $G$ is either the orthogonal group
\begin{equation}
    \Orth(n) \coloneqq \{ R \in \R^{n \times n} \mid R^\T R = \I_n \},
\end{equation}
or the special orthogonal group
\begin{equation}
    \SO(n) \coloneqq \{ R \in \Orth(n) \mid \det R = 1 \}
\end{equation}
on $\R^n$. Here, $\langle A, B \rangle = \tr(A^\T B)$ denotes the Frobenius inner product on the space of real matrices and $\I_n$ is the $n \times n$ identity matrix. Note that when $G = \Orth(1) = \{\pm 1\}$, Problem~\eqref{eq:LNCG} reduces to combinatorial optimization of the form
\begin{equation}\label{eq:LCG}
    \max_{x_1, \ldots, x_m \in \{\pm 1\}} \sum_{(u, v) \in E} C_{uv} x_u x_v,
\end{equation}
where now $C \in \R^{m \times m}$. This is sometimes referred to as the commutative instance of the little Grothendieck problem. Problem~\eqref{eq:LNCG} can therefore be viewed as a natural generalization of quadratic binary optimization to 
the noncommutative matrix setting.

We now comment on the known hardness results of these optimization problems. The commutative problem~\eqref{eq:LCG} is already NP-hard in general, as can be seen by the fact that the Max-Cut problem can be expressed in this form. In particular, Khot \emph{et al.}~\cite{khot2007optimal} proved that, assuming the Unique Games conjecture, it is NP-hard to approximate the optimal Max-Cut solution to better than a fraction of $(2/\pi) \min_{\theta \in [0, \pi]} \frac{\theta}{1 - \cos\theta} \approx 0.878$. This value coincides with the approximation ratio achieved by the celebrated Goemans--Williamson (GW) algorithm for rounding the semidefinite relaxation of the problem~\cite{goemans1995improved}. More generally, let $(V, E)$ be fully connected and $C \succeq 0$ arbitrary. In this setting, Nesterov~\cite{nesterov1998semidefinite} showed that GW rounding guarantees an approximation ratio of $2/\pi \approx 0.636$, which Alon and Naor~\cite{alon2004approximating} showed matches the integrality gap of the semidefinite program. Khot and Naor~\cite{khot2009approximate} later demonstrated that this approximation ratio is also Unique-Games-hard to exceed, and finally Bri\"{e}t \emph{et al.}~\cite{briet2017tight} strengthened this result to be unconditionally NP-hard.

For the noncommutative problem~\eqref{eq:LNCG} that we are interested in, less is known about its hardness of approximability. However, it is a subclass of more general optimization problems for which some results are known. The most general instance is the ``big'' noncommutative Grothendieck problem, for which Naor \emph{et al.}~\cite{naor2014efficient} provided a rounding procedure of its semidefinite relaxation. Their algorithm achieves an approximation ratio of at least $1/2\sqrt{2} \approx 0.353$ in the real-valued setting, and $1/2$ in the complex-valued setting (wherein optimization is over the unitary group instead of the orthogonal group). This $1/2$ result was later shown to be tight by Bri\"{e}t \emph{et al.}~\cite{briet2017tight} for both the real- and complex-valued settings;~specifically, they showed that this value is the NP-hardness threshold of a special case of the problem, which is the \emph{little} noncommutative Grothendieck problem.\footnote{See Section 6 of Bri\"{e}t \emph{et al.}~\cite{briet2017tight} for the precise relation between the big and little NCG.} However, the threshold for Problem~\eqref{eq:LNCG}, which is an special case of the LNCG, is not known. Algorithmically, Bandeira \emph{et al.}~\cite{bandeira2016approximating} demonstrated constant approximation ratios for Problem~\eqref{eq:LNCG} when $C \succeq 0$ and $G = \Orth(n)$ or $\U(n)$ via an $(n \times n)$-dimensional generalization of GW rounding, along with matching integrality gaps. These approximation ratios exceed $1/2$, indicating that this subclass is less difficult than the more general formulation of LNCG. However, although the optimization of rotation matrices is of central importance to many applications, we are unaware of any approximation ratio guarantees for the $G = \SO(n)$ setting.

\section{Applications of the LNCG problem}\label{sec:lncg_applications}
Before describing our quantum relaxation, here we motivate the practical interest in Problem~\eqref{eq:LNCG} by briefly discussing some applications. Throughout, let $G = \Orth(n)$ or $\SO(n)$ and $(V, E)$ be a graph as before.

\subsection{Group synchronization}\label{sec:group_sync}

The group synchronization problem over orthogonal transformations has applications in a variety of disciplines, including structural biology, robotics, and wireless networking. For example, in structural biology the problem appears as part of the cryogenic electron microscopy (cryo-EM) technique. There, one uses electron microscopy on cryogenically frozen samples of a molecular structure to obtain a collection of noisy images of the structure. The images typically have low signal-to-noise ratio, and they feature the structure in different, unknown orientations. One approach to solving the group synchronization problem yields best-fit estimates for these orientations via least-squares minimization~\cite{boumal2013robust}, from which one can produce a model of the desired 3D structure.\footnote{Note that other loss functions are also considered in the literature, which may not necessarily have a reformulation as Problem~\eqref{eq:LNCG}.} See Ref.~\cite{singer2018mathematics} for a further overview, and Ref.~\cite{ozyecsil2017survey} for a survey of other applications of group synchronization.

The formal problem description is as follows. To each vertex $v \in [m] \coloneqq \{ 1, \ldots, m \}$ there is an unknown $g_v \in G$. An interaction between each pair of vertices connected by an edge $(u, v) \in E$ is modeled as $g_{uv} = g_u g_v^\T$. However, measurements of the interactions are typically corrupted by some form of noise. For instance, one may consider an additive noise model of the form $C_{uv} = g_{uv} + \sigma W_{uv}$, where $\sigma \geq 0$ characterizes the strength of the noise and each $W_{uv} \in \R^{n \times n}$ has independently, normally distributed entries. We would like to recover each $g_v$ given only access to the matrices $C_{uv}$. Therefore, as a proxy to the recovery problem one may cast the solution as the least-squares minimizer
\begin{equation}\label{eq:group_sync_argmin}
    \min_{\mathbf{R} \in G^m} \sum_{(u, v) \in E} \| C_{uv} - R_u R_v^\T \|_F^2,
\end{equation}
where $\| A \|_F = \sqrt{\langle A, A \rangle}$ is the Frobenius norm and we employ the notation $\mathbf{R} \equiv (R_1, \ldots, R_m)$. It is straightforward to see that the minimzer of this problem is equivalent to the maximizer of
\begin{equation}
    \max_{\mathbf{R} \in G^m} \sum_{(u, v) \in E} \langle C_{uv}, R_u R_v^\T \rangle,
\end{equation}
which is precisely in the form of Problem~\eqref{eq:LNCG}.

\subsection{Generalized orthogonal Procrustes problem}

Procrustes analysis has applications in fields such as shape and image recognition, as well as sensory analysis and market research on $n$-dimensional data. In this problem, one has a collection of point clouds, each representing for instance the important features of an image. One wishes to determine how similar these images are to each other collectively. This is achieved by simultaneously fitting each pair of point clouds to each other, allowing for arbitrary orthogonal transformations on each cloud to best align the individual points. We refer the reader to Ref.~\cite{gower2004procrustes} for a comprehensive review.

Consider $m$ sets of $K$ points in $\R^n$, $S_v = \{ x_{v,1}, \ldots, x_{v,K} \} \subset \R^n$ for each $v \in [m]$. We wish to find an orthogonal transformation $R_v \in G$ for each $S_v$ that best aligns all sets of points simultaneously. That is, for each $k \in [K]$ and $u, v \in [m]$ we wish to minimize the Euclidean distance $\| R_u^\T x_{u,k} - R_v^\T x_{v,k} \|_2$. Taking least-squares minimization as our objective, we seek to solve
\begin{equation}\label{eq:procrustes}
    \min_{\mathbf{R} \in G^m} \sum_{u, v \in [m]} \sum_{k \in [K]} \| R_u^\T x_{u,k} - R_v^\T x_{v,k} \|_2^2.
\end{equation}
From the relation between the vector 2-norm and matrix Frobenius norm, Eq.~\eqref{eq:procrustes} can be formulated as
\begin{equation}
    \max_{\mathbf{R} \in G^m} \sum_{u, v \in [m]} \langle C_{uv}, R_u R_v^\T \rangle,
\end{equation}
where each $C_{uv} \in \R^{n \times n}$ is defined as
\begin{equation}
    C_{uv} = \sum_{k \in [K]} x_{u,k} x_{v,k}^\T.
\end{equation}


\section{Summary of results}\label{sec:summary_results}
We now provide a high-level overview of the main contributions of this paper. We provide summary cartoon in Figure~\ref{fig:quantum_classical_round_cartoon}, depicting the quantum embedding of the problem and the quantum rounding protocols. Let $(V, E)$ be a graph where we label the vertices by $V = [m]$, and denote the objective function of Problem~\eqref{eq:LNCG} by
\begin{equation}\label{eq:summary_obj}
    f(\mathbf{R}) \coloneqq \sum_{(u, v) \in E} \langle C_{uv}, R_u R_v^\T \rangle.
\end{equation}

\begin{figure}
    \centering
    \includegraphics[width=0.75\textwidth]{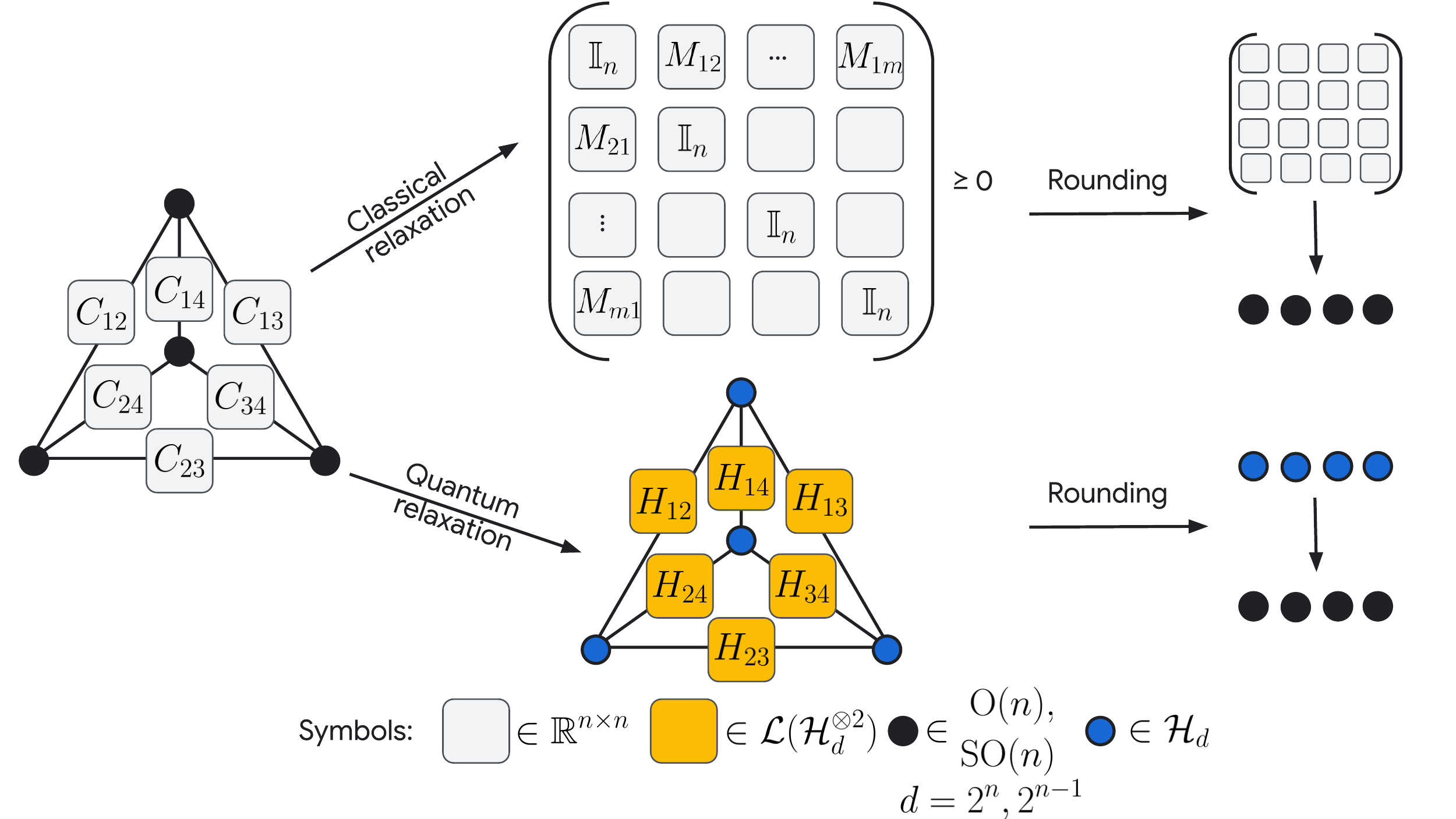}
    \caption{A cartoon description of our quantum encoding of the LNCG problem, compared to the classical formulation. (Left) The description of the problem that we consider, which is described by a graph $([m], E)$ and $n \times n$ matrices $C_{uv}$ for each edge $(u, v) \in E$. We wish to assign elements of $\Orth(n)$ or $\SO(n)$ to each vertex such that Eq.~\eqref{eq:summary_obj} is maximized. (Center top) The standard classical relaxation optimizes an $mn \times mn$ PSD matrix $M$ via semidefinite programming. (Right top) The classical rounding procedure, which returns a collection of orthogonal matrices from $M$. (Center bottom) Our quantum formulation of the problem as a two-body fermionic Hamiltonian. On each vertex we place a $d$-dimensional Hilbert space, with interaction terms $H_{uv}$ on the edges constructed from each $C_{uv}$. The classical solution to the LNCG problem lies as a subset of this full Hilbert space. (Right bottom) Our proposed quantum rounding protocols. One protocol requires knowledge of the two-body reduced density matrices across edges, while the other uses the one-body reduced density matrices on each vertex.}
    \label{fig:quantum_classical_round_cartoon}
\end{figure}

\subsection{Quantum Hamiltonian relaxation}

First, consider the setting in which $\mathbf{R} = (R_1, \ldots, R_m) \in \Orth(n)^m$. We embed this problem into a Hamiltonian by placing $n$ qubits on each vertex $v \in [m]$, resulting in a total Hilbert space $\mathcal{H}_{2^n}^{\otimes m}$ of $mn$ qubits. Define the $n$-qubit Pauli operators
\begin{equation}\label{eq:P_ij_summary}
    P_{ij} \coloneqq \begin{cases}
    -X_i Z_{i+1} \cdots Z_{j-1} X_j & i < j,\\
    Z_i & i = j,\\
    -Y_j Z_{j+1} \cdots Z_{i-1} Y_i & i > j,
    \end{cases}
\end{equation}
where $Z_i \coloneqq \I_2^{\otimes (i-1)} \otimes Z \otimes \I_2^{\otimes (n-i)}$ (similarly for $X_i$, $Y_i$). The Hamiltonian
\begin{equation}\label{eq:LNCG_H_summary}
    H \coloneqq \sum_{(u, v) \in E} \sum_{i, j \in [n]} [C_{uv}]_{ij} \sum_{k \in [n]} P_{ik}^{(u)} \otimes P_{jk}^{(v)}
\end{equation}
defines our quantum relaxation of the objective $f$ over $\Orth(n)^m$. The notation $A^{(v)}$ denotes the operator $A$ acting only on the Hilbert space of vertex $v$, and we overload this notation to indicate either the $n$-qubit operator or $mn$-qubit operator acting trivially on the remaining vertices. When the context is clear we typically omit writing the trivial support.

For optimization over $\SO(n)^m$, we consider instead the $(n-1)$-qubit Pauli operators
\begin{align}
    \widetilde{P}_{ij} &\coloneqq \Pi_0 P_{ij} \Pi_0^\T,\\
    \Pi_0 &= \frac{1}{\sqrt{2}} \l( \bra{+} \otimes \I_2^{\otimes (n-1)} + \bra{-} \otimes Z^{\otimes (n-1)} \r),
\end{align}
where $\Pi_0 : \mathcal{H}_{2^n} \to \mathcal{H}_{2^{n-1}}$ represents the projection onto the even-parity subspace of $\mathcal{H}_{2^n}$. The construction of the relaxed Hamiltonian for $\SO(n)$ is then analogous to Eq.~\eqref{eq:LNCG_H_summary}:
\begin{equation}
    \widetilde{H} \coloneqq \sum_{u, v \in E} \sum_{i, j \in [n]} [C_{uv}]_{ij} \sum_{k \in [n]} \widetilde{P}_{ik}^{(u)} \otimes \widetilde{P}_{jk}^{(v)},
\end{equation}
where now the relaxed quantum problem is defined over $m(n-1)$ qubits.

These Hamiltonians serve as relaxations to Problem~\eqref{eq:LNCG} as follows. It is known that the operators $P_{ij}$ are a representation of the double cover of the orthogonal group, in the sense that for each $R \in \Orth(n)$ there exists a unit vector $\ket{\phi(R)} \in \mathcal{H}_{2^n}$ such that all $R_{ij} = \ev{P_{ij}}{\phi(R)}$~\cite{atiyah1964clifford,saunderson2015semidefinite}. Additionally, $R \in \SO(n)$ if and only if such $\ket{\phi(R)}$ has even parity. We go beyond these classical results by identifying each $P_{ij}$ as a one-body fermion operator, which implies that each $\ket{\phi(R)}$ is a free-fermionic Gaussian state. Hence solving Problem~\eqref{eq:LNCG} is equivalent to optimizing the Hamiltonian over a subset of quantum states:
\begin{equation}\label{eq:quantum_reformulation}
    \max_{\mathbf{R} \in G^m} f(\mathbf{R}) = \max_{\substack{\ket{\psi} = \bigotimes_{v \in [m]} \ket{\phi(R_v)} \\ R_v \in G}} \ev{H}{\psi}.
\end{equation}
Dropping these product-state constraints on $\ket{\psi}$ implies the inequalities
\begin{align}
    \max_{\mathbf{R} \in \Orth(n)^m} f(\mathbf{R}) &\leq \max_{\rho \in \mathcal{D}(\mathcal{H}_{2^n}^{\otimes m})} \tr(H \rho),\\
    \max_{\mathbf{R} \in \SO(n)^m} f(\mathbf{R}) &\leq \max_{\rho \in \mathcal{D}(\mathcal{H}_{2^{n-1}}^{\otimes m})} \tr(\widetilde{H} \rho),
\end{align}
where $\mathcal{D}(\mathcal{H})$ denotes the set of density operators on a Hilbert space $\mathcal{H}$. This establishes our quantum Hamiltonian relaxation.


\subsection{Quantum rounding}

\begin{algorithm}[t]
\caption{$\conv G$-based rounding of edge marginals\label{alg:conv_rounding}}
\KwData{Quantum state $\rho \in \mathcal{D}(\mathcal{H}_d^{\otimes m})$ over a graph of $m$ vertices, each with local Hilbert space of dimension $d = 2^n$ if $G = \Orth(n)$, or $d = 2^{n-1}$ if $G = \SO(n)$}
\KwResult{Orthogonal matrices on each vertex, $R_1, \ldots, R_m \in G$}

$\mathcal{M} \gets \I_{mn}$ \;

\For{$u \neq v \in [m]$}{
    \For{$(i, j) \in [n]^2$}{
        \uIf{$G = \Orth(n)$}{
            $[\mathcal{M}_{uv}]_{ij} \gets \frac{1}{n} \tr(\Gamma_{ij}^{(u,v)} \rho)$ \;
        }
        \ElseIf{$G = \SO(n)$}{
            $[\mathcal{M}_{uv}]_{ij} \gets \frac{1}{n} \tr(\widetilde{\Gamma}_{ij}^{(u,v)} \rho)$ \;
        }
    }
}

\For{$v \in [m]$}{
    $R_v \gets \argmin_{Y \in G} \| Y - \mathcal{M}_{1v} \|_F$ \;
}
\end{algorithm}

\begin{algorithm}[t]

\caption{Rounding vertex marginals\label{alg:vertex_rounding}}
\KwData{Quantum state $\rho \in \mathcal{D}(\mathcal{H}_d^{\otimes m})$ over a graph of $m$ vertices, each with local Hilbert space of dimension $d = 2^n$ if $G = \Orth(n)$, or $d = 2^{n-1}$ if $G = \SO(n)$}
\KwResult{Orthogonal matrices on each vertex, $R_1, \ldots, R_m \in G$}

\For{$v \in [m]$}{
    ${Q}_v \gets 0 \in \R^{n \times n}$ \;
    \For{$(i, j) \in [n]^2$}{
    \uIf{$G = \Orth(n)$}{
        $[{Q}_v]_{ij} \gets \tr(P_{ij}^{(v)} \rho)$ \;
    }
    \ElseIf{$G = \SO(n)$}{
        $[{Q}_v]_{ij} \gets \tr(\widetilde{P}_{ij}^{(v)} \rho)$ \;
    }
    }
}

\For{$v \in [m]$}{
    $R_v \gets \argmin_{Y \in G} \| Y - {Q}_v \|_F$ \;
}
\end{algorithm}

In order to recover orthogonal matrices from a relaxed quantum solution $\rho$, we propose two rounding procedures, summarized in Algorithms~\ref{alg:conv_rounding} and \ref{alg:vertex_rounding}. These rounding procedures operate on local (i.e., single- or two-vertex observables) expectation values of $\rho$ stored in classical memory, which can be efficiently estimated, e.g., by partial state tomography.

Algorithm~\ref{alg:conv_rounding} is inspired by constructing a quantum analogue of the PSD variable appearing in semidefinite relaxations to Problem~\eqref{eq:LNCG}. Consider the $mn \times mn$ matrix of expectation values
\begin{equation}
    \mathcal{M} \coloneqq \begin{bmatrix}
    \I_n & T_{12} & \cdots & T_{1m}\\
    T_{21} & \I_n & \cdots & T_{2m}\\
    \vdots & \vdots & \ddots & \vdots\\
    T_{m1} & T_{m2} & \cdots & \I_n
    \end{bmatrix},
\end{equation}
where the off-diagonal blocks are defined as
\begin{align}
    T_{uv} &\coloneqq \frac{1}{n} \begin{bmatrix}
    \tr(\Gamma_{11}^{(u, v)} \rho) & \cdots & \tr(\Gamma_{1n}^{(u, v)} \rho)\\
    \vdots & \ddots & \vdots\\
    \tr(\Gamma_{n1}^{(u, v)} \rho) & \cdots & \tr(\Gamma_{nn}^{(u, v)} \rho)
    \end{bmatrix} = T_{vu}^\T,\\
    \Gamma_{ij}^{(u, v)} &\coloneqq \sum_{k \in [n]} P_{ik}^{(u)} \otimes P_{jk}^{(v)}
\end{align}
when $G = \Orth(n)$, and we replace the operators $P_{ij}$ with $\widetilde{P}_{ij}$ when $G = \SO(n)$. We show that $\mathcal{M}$ satisfies the following properties for all states $\rho$:
\begin{align}
    &\mathcal{M} \succeq 0,\\
    &\mathcal{M}_{uv} \in \conv G \quad \forall u, v \in [m],
\end{align}
where $\conv G$ is the convex hull of $G$. Notably, when $G = \SO(n)$, $\mathcal{M}$ obeys the same constraints as the $\conv\SO(n)$-based semidefinite relaxation proposed by Saunderson \emph{et al.}~\cite{saunderson2014semidefinite}. However, whereas the classical representation of the $\conv\SO(n)$ constraints involves matrices of size $2^{n-1} \times 2^{n-1}$ for each edge, our quantum state automatically satisfies these constraints (using only $n - 1$ qubits per vertex).

Algorithm~\ref{alg:vertex_rounding} uses the single-vertex information $\tr(P_{ij}^{(v)} \rho)$ of $\rho$, as opposed to the two-vertex information $\tr(\Gamma_{ij}^{(u, v)} \rho)$. We consider this rounding procedure due to the fact that, if $\rho$ is a pure Gaussian state satisfying the constraint of Eq.~\eqref{eq:quantum_reformulation}, then the matrix of expectation values
\begin{equation}
    Q_v \coloneqq \begin{bmatrix}
    \tr(P_{11}^{(v)} \rho) & \cdots & \tr(P_{1n}^{(v)} \rho)\\
    \vdots & \ddots & \vdots\\
    \tr(P_{n1}^{(v)} \rho) & \cdots & \tr(P_{nn}^{(v)} \rho)
    \end{bmatrix}
\end{equation}
lies in $\Orth(n)$. On the other hand, for arbitrary density matrices we have the relaxation $Q_v \in \conv\Orth(n)$, and again when we replace $P_{ij}^{(v)}$ with $\widetilde{P}_{ij}^{(v)}$ then $Q_v \in \conv\SO(n)$.

Both rounding procedures use the standard projection~\cite{schonemann1966generalized} of the matrices $X = T_{uv}$ or $Q_v$ to some $R \in G$ by finding the nearest orthogonal matrix according to Frobenius-norm distance:
\begin{equation}\label{eq:nearest_ortho_summary}
    R = \argmin_{Y \in G} \|X - Y\|_F.
\end{equation}
This can be solved efficiently as a classical postprocessing step, essentially by computing the singular value decomposition of $X = U \Sigma V^\T$. When $G = \Orth(n)$, the solution is $R = UV^\T$. When $G = \SO(n)$, we instead use the so-called special singular value decomposition~\cite{sanyal2011orbitopes} of $X = U \widetilde{\Sigma} \widetilde{V}^\T$, where $\widetilde{\Sigma} = \Sigma J$ and $\widetilde{V} = VJ$, with $J$ being the diagonal matrix
\begin{equation}\label{eq:summary_J}
    J = \begin{bmatrix}
    \I_{n-1} & 0\\
    0 & \det(UV^\T)
    \end{bmatrix},
\end{equation}
assuming that the singular values $\sigma_i(X)$ are in descending order, $\sigma_1(X) \geq \cdots \geq \sigma_n(X)$. Then the solution to Eq.~\eqref{eq:nearest_ortho_summary} is $R = U \widetilde{V}^\T \in \SO(n)$.

\section{\label{sec:quantum_formalism}Quantum formalism for optimization over orthogonal matrices}
Our key insight into encoding orthogonal matrices into quantum states comes from the construction of the orthogonal group from a Clifford algebra~\cite{atiyah1964clifford}. We review this mathematical construction in Appendix~\ref{sec:clifford} and only discuss the main aspects here. The Clifford algebra $\Cl(n)$ is a $2^n$-dimensional real vector space equipped with an inner product and multiplication operation satisfying the anticommutation relation
\begin{equation}
    e_i e_j + e_j e_i = -2 \delta_{ij} \openone,
\end{equation}
where $e_1, \ldots, e_n$ is an orthonormal basis for $\R^n$ and $\openone$ is the multiplicative identity of the algebra. The orthogonal group is then realized through a quadratic map $Q : \Cl(n) \to \R^{n \times n}$ and the identification of a subgroup $\Pin(n) \subset \Cl(n)$ such that $Q(\Pin(n)) = \Orth(n)$. Notably, the elements of $\Pin(n)$ have unit norm (with respect to the inner product on $\Cl(n)$). The special orthogonal group, meanwhile, is constructed by considering only the even-parity elements of $\Cl(n)$, denoted by $\Cl^0(n)$. The group $\Spin(n) = \Pin(n) \cap \Cl^0(n)$ then yields $Q(\Spin(n)) = \SO(n)$.

Because the Clifford algebra $\Cl(n)$ is a $2^n$-dimensional vector space, we observe that it can be identified with a Hilbert space of $n$ qubits.\footnote{In fact, $n$ rebits suffice since $\Cl(n)$ is a real vector space, but to keep the presentation straightforward we will not make such a distinction.} In this section we explore this connection in detail, showing how to represent orthogonal matrices as quantum states and how the mapping $Q$ acts as a linear functional on those states.

\subsection{Qubit representation of the Clifford algebra}

First we describe the canonical isomorphism between $\Cl(n)$ and $\mathcal{H}_{2^n} \coloneqq (\R^2)^{\otimes n}$ as Hilbert spaces. We denote the standard basis of $\Cl(n)$ by $\{ e_I \coloneqq e_{i_1} \cdots e_{i_k} \mid I = \{i_1, \ldots, i_k\} \subseteq [n] \}$. By convention we assume that the elements of $I$ are ordered as $i_1 < \cdots < i_k$. Each basis element $e_I$ maps onto to a computational basis state $\ket{b}$, where $b = b_1 \cdots b_n \in \{0, 1\}^n$, via the correspondence
\begin{equation}
    e_I \equiv \bigotimes_{i \in [n]} \ket{b_i}, \quad \text{where } b_i = \begin{cases}
1  & \text{if } i \in I,\\
0 & \text{otherwise}.
\end{cases}
\end{equation}
The inner products on both spaces coincide since this associates one orthonormal basis to another. This correspondence also naturally equates the grade $|I|$ of the Clifford algebra with the Hamming weight $|b|$ of the qubits. The notion of parity, $|I| \mod 2 = |b| \mod 2$, is therefore preserved, so $\Cl^0(n)$ corresponds to the subspace of $\mathcal{H}_{2^n}$ with even Hamming weight.

To represent the multiplication of algebra elements in this Hilbert space, we use the fact that left- and right-multiplication are linear automorphisms on $\Cl(n)$, which are denoted by
\begin{equation}
    \lambda_x(y) = xy, \quad \rho_x(y) = yx.
\end{equation}
The action of the algebra can therefore be represented on $\mathcal{H}_{2^n}$ as linear operators. We shall use the matrix representation provided in Ref.~\cite{saunderson2015semidefinite}, as it precisely coincides with the $n$-qubit computational basis described above. Because of linearity, it suffices to specify left- and right-multiplication by the generators $e_i$, which are the operators
\begin{align}
    \lambda_i &\equiv Z^{\otimes (i - 1)} \otimes (-\i Y) \otimes \I_2^{\otimes (n - i)}, \label{eq:lambda}\\
    \rho_i &\equiv \I_2^{\otimes (i - 1)} \otimes (-\i Y) \otimes Z^{\otimes (n - i)}. \label{eq:rho}
\end{align}
It will also be useful to write down the parity automorphism $\alpha(e_I) = (-1)^{|I|} e_I$ under this matrix representation. As the notion of parity is equivalent between $\Cl(n)$ and $\mathcal{H}_{2^n}$, $\alpha$ is simply the $n$-qubit parity operator,
\begin{equation}\label{eq:alpha}
    \alpha \equiv Z^{\otimes n}.
\end{equation}

It will also be useful to represent the subspace $\Cl^0(n)$ explicitly as an $(n-1)$-qubit Hilbert space. This is achieved by the projection from $\Cl(n)$ to $\Cl^0(n)$, expressed in Ref.~\cite{saunderson2015semidefinite} as the $2^{n-1} \times 2^n$ matrix
\begin{equation}\label{eq:projector}
    \Pi_{0} \coloneqq \frac{1}{\sqrt{2}} \l( \bra{+} \otimes \I_2^{\otimes (n - 1)} + \bra{-} \otimes Z^{\otimes (n - 1)} \r).
\end{equation}
It is straightforward to check that $\Pi_0 \ket{b} = 0$ if $|b| \mod 2 = 1$, and that its image is a $2^{n-1}$-dimensional Hilbert space.

\subsection{The quadratic mapping as quantum expectation values}

The quadratic map $Q : \Cl(n) \to \R^{n \times n}$ is defined as
\begin{equation}
    Q(x)(v) \coloneqq \pi_{\R^n}(\alpha(x) v \overline{x}) \quad \forall x \in \Cl(n), v \in \R^n,
\end{equation}
where $\pi_{\R^n}$ is the projector from $\Cl(n)$ to $\R^n$,
\begin{equation}
    \pi_{\R^n}(x) \coloneqq \sum_{i \in [n]} \langle e_i, x \rangle e_i \quad \forall x \in \Cl(n),
\end{equation}
and the conjugation operation $x \mapsto \overline{x}$ is defined as the linear extension of $\overline{e_I} = (-1)^{|I|} e_{i_k} \cdots e_{i_1}$. This map associates Clifford algebra elements with orthogonal matrices via the relations $Q(\Pin(n)) = \Orth(n)$ and $Q(\Spin(n)) = \SO(n)$ (see Appendix~\ref{sec:clifford} for a review of the construction). In the standard basis of $\R^n$, the linear map $Q(x) : \R^n \to \R^n$ has the matrix elements
\begin{equation}
\begin{split}
    [Q(x)]_{ij} &= \langle e_i, Q(x)(e_j) \rangle\\
    &= \langle e_i, \alpha(x) e_j \overline{x} \rangle.
\end{split}
\end{equation}
Using the linear maps $\lambda_i, \rho_j$ of left- and right-multiplication by $e_i, e_j$, as well as the conjugation identity $\langle x, y\overline{z} \rangle = \langle xz, y \rangle$ in the Clifford algebra, these matrix elements of $Q(x)$ can be rearranged as
\begin{equation}\label{eq:Q_matrix_elements}
\begin{split}
    [Q(x)]_{ij} &= \langle e_i, \alpha(x) e_j \overline{x} \rangle\\
    &= \langle e_i x, \alpha(x) e_j \rangle\\
    &= \langle \lambda_i(x), \rho_j(\alpha(x)) \rangle\\
    &= \langle x, \lambda_i^\dagger(\rho_j(\alpha(x))) \rangle.
\end{split}
\end{equation}

We now transfer this classical expression to the quantum representation developed above. First, define the following $n$-qubit Pauli operators as the composition of the linear maps appearing in Eq.~\eqref{eq:Q_matrix_elements}:
\begin{equation}
    P_{ij} \coloneqq \lambda_i^\dagger \rho_j \alpha = \begin{cases}
    -\I_2^{\otimes (i - 1)} \otimes X \otimes Z^{\otimes (j - i - 1)} \otimes X \otimes \I_2^{\otimes (n - j)} & i < j,\\
    \I_2^{\otimes (i - 1)} \otimes Z \otimes \I_2^{\otimes (n - i)} & i = j,\\
    -\I_2^{\otimes (j - 1)} \otimes Y \otimes Z^{\otimes (i - j - 1)} \otimes Y \otimes \I_2^{\otimes (n - i)} & i > j,
    \end{cases}
\end{equation}
where the expressions in terms of Pauli matrices follow from Eqs.~\eqref{eq:lambda} to \eqref{eq:alpha}. Then we may rewrite Eq.~\eqref{eq:Q_matrix_elements} as
\begin{equation}
    [Q(x)]_{ij} = \langle x | P_{ij} | x \rangle,
\end{equation}
where $\ket{x} \in \mathcal{H}_{2^n}$ is the quantum state identified with $x \in \Cl(n)$. Hence, the matrix elements of $Q(x) \in \R^{n \times n}$ possess the interpretation as expectation values of a collection of $n^2$ Pauli observables $\{ P_{ij} \}_{i, j \in [n]}$. Furthermore, recall that $Q(x) \in \Orth(n)$ if and only if $x \in \Pin(n)$, and $Q(x) \in \SO(n)$ if and only if $x \in \Spin(n)$. Because $\Spin(n) = \Pin(n) \cap \Cl^0(n)$, one can work in the even-parity sector directly by projecting the operators as
\begin{equation}\label{eq:P_ij_son}
    \widetilde{P}_{ij} \coloneqq \Pi_0 P_{ij} \Pi_0^\T.
\end{equation}
These are $(n-1)$-qubit Pauli operators, and we provide explicit expressions in Appendix~\ref{sec:even_subspace}. When necessary, we may specify another map $\widetilde{Q} : \Cl^0(n) \to \R^{n \times n}$,
\begin{equation}
    [\widetilde{Q}(x)]_{ij} \coloneqq \langle x | \widetilde{P}_{ij} | x \rangle,
\end{equation}
for which $\widetilde{Q}(\Spin(n)) = \SO(n)$.

In general, these double covers are only a subset of the unit sphere in $\mathcal{H}_{d}$ ($d = 2^n$ or $2^{n-1}$), so not all quantum states mapped by $Q$ yield orthogonal matrices. In Section~\ref{sec:free-fermion} we characterize the elements of $\Pin(n)$ and $\Spin(n)$ as a class of well-studied quantum states, namely, pure fermionic Gaussian states.

\subsection{\label{sec:free-fermion}Fermionic representation of the construction}

\subsubsection{Notation}

First we establish some notation. A system of $n$ fermionic modes, described by the creation operators $a_1^\dagger, \ldots, a_n^\dagger$, can be equivalently represented by the $2n$ Majorana operators
\begin{align}
    \gamma_i &= a_i + a_i^\dagger,\\
    \widetilde{\gamma}_i &= -\i(a_i - a_i^\dagger),
\end{align}
for all $i \in [n]$. These operators form a representation for the Clifford algebra $\Cl(2n)$, as they satisfy\footnote{Note that we adopt the physicist's convention here, which takes the generators to be Hermitian, as opposed to Eq.~\eqref{eq:anticommutator_cl_n} wherein they square to $-\openone$.}
\begin{align}
    \gamma_i \gamma_j + \gamma_j \gamma_i = \widetilde{\gamma}_i \widetilde{\gamma}_j + \widetilde{\gamma}_j \widetilde{\gamma}_i &= 2 \delta_{ij} \openone,\\
    \gamma_i \widetilde{\gamma}_j + \widetilde{\gamma}_j \gamma_i &= 0.
\end{align}
The Jordan--Wigner mapping allows us to identify this fermionic system with an $n$-qubit system via the relations
\begin{align}
    \gamma_{i} &= Z^{\otimes (i - 1)} \otimes X \otimes \I_2^{\otimes (n - i)}, \label{eq:gamma_X}\\
    \widetilde{\gamma}_i &= Z^{\otimes (i - 1)} \otimes Y \otimes \I_2^{\otimes (n - i)}.\label{eq:gamma_Y}
\end{align}
We will work with the two representations interchangeably.

A central tool for describing noninteracting fermions is the Bogoliubov transformation $\bm{\gamma} \mapsto O \bm{\gamma}$, where $O \in \Orth(2n)$ and
\begin{equation}
    \bm{\gamma} \coloneqq \begin{bmatrix}
    \widetilde{\gamma}_1 & \cdots & \widetilde{\gamma}_n & \gamma_1 & \cdots & \gamma_n
    \end{bmatrix}^\T.
\end{equation}
This transformation is achieved by fermionic Gaussian unitaries, which are equivalent to matchgate circuits on qubits under the Jordan--Wigner mapping~\cite{knill2001fermionic,terhal2002classical,jozsa2008matchgates}. In particular, we will make use of a subgroup of such unitaries corresponding to $\Orth(n) \times \Orth(n) \subset \Orth(2n)$. For any $U, V \in \Orth(n)$, let $\mathcal{U}_{(U, V)}$ be the fermionic Gaussian unitary with the adjoint action
\begin{align}
    \mathcal{U}_{(U, V)} \widetilde{\gamma}_i \mathcal{U}_{(U, V)}^\dagger &= \sum_{j \in [n]} U_{ij} \widetilde{\gamma}_j, \label{eq:gamma_tilde_gaussian}\\
    \mathcal{U}_{(U, V)} \gamma_i \mathcal{U}_{(U, V)}^\dagger &= \sum_{j \in [n]} V_{ij} \gamma_j. \label{eq:gamma_gaussian}
\end{align}
In contrast to arbitrary $\Orth(2n)$ transformations, these unitaries do not mix between the $\gamma$- and $\widetilde{\gamma}$-type Majorana operators.

\subsubsection{\label{sec:linear_free_fermion}Linear optimization as free-fermion models}

Applying the representation of Majorana operators under the Jordan--Wigner transformation, Eqs.~\eqref{eq:gamma_X} and \eqref{eq:gamma_Y},  to the Clifford algebra automorphisms, Eqs.~\eqref{eq:lambda} to \eqref{eq:alpha}, we see that $\lambda_i^\dagger = \i \widetilde{\gamma}_i$ and $\rho_j \alpha = \gamma_j$. Therefore the Pauli operators $P_{ij}$ defining the quadratic map $Q$ are equivalent to fermionic one-body operators,
\begin{equation}
    P_{ij} = \i \widetilde{\gamma}_i \gamma_j.
\end{equation}
Consider now a linear objective function $\ell(X) \coloneqq \langle C, X \rangle$ for some fixed $C \in \R^{n \times n}$, which we wish to optimize over $\Orth(n)$:
\begin{equation}\label{eq:linear_obj}
    \max_{X \in \Orth(n)} \ell(X) = \max_{X \in \Orth(n)} \langle C, X \rangle.
\end{equation}
Because we require $X \in \Orth(n)$, it is equivalent to search over all $x \in \Pin(n)$ through $Q$:
\begin{equation}
    \max_{X \in \Orth(n)} \langle C, X \rangle = \max_{x \in \Pin(n)} \langle C, Q(x) \rangle.
\end{equation}
Writing out the matrix elements explicitly, we see that the objective takes the form
\begin{equation}
\begin{split}
    \ell(X) &= \sum_{i,j \in [n]} C_{ij} [Q(x)]_{ij}\\
    &= \sum_{i,j \in [n]} C_{ij} \langle x | P_{ij} | x \rangle\\
    &= \langle x | F(C) | x \rangle,
\end{split}
\end{equation}
where we have defined the noninteracting fermionic Hamiltonian
\begin{equation}
    F(C) \coloneqq \sum_{i,j \in [n]} C_{ij} P_{ij} = \i \sum_{i,j \in [n]} C_{ij} \widetilde{\gamma}_i \gamma_j.
\end{equation}
The linear optimization problem is therefore equivalent to solving a free-fermion model,
\begin{equation}\label{eq:linear_obj_Pin}
    \max_{X \in \Orth(n)} \langle C, X \rangle = \max_{x \in \Pin(n)} \langle x | F(C) | x \rangle,
\end{equation}
the eigenvectors of which are fermionic Gaussian states. As such, this problem can be solved efficiently by a classical algorithm. In fact, the known classical algorithm for solving the optimization problem is exactly the same as that used for diagonalizing $F(C)$.

We now review the standard method to diagonalize $F(C)$. Consider the singular value decomposition of $C = U \Sigma V^\T$, which is computable in time $\Ord(n^3)$. This decomposition immediately reveals the diagonal form of the Hamiltonian:
\begin{equation}
\begin{split}
    F(C) &= \i \sum_{i, j \in [n]} [U \Sigma V^\T]_{ij} \widetilde{\gamma}_i \gamma_j\\
    &= \i \sum_{k \in [n]} \sigma_k(C) \l( \sum_{i \in [n]} [U^\T]_{ki} \widetilde{\gamma}_i \r) \l( \sum_{j \in [n]} [V^\T]_{kj} \gamma_j \r)\\
    &= {\mathcal{U}}_{(U, V)}^\dagger \l( \sum_{k \in [n]} \sigma_k(C) \i \widetilde{\gamma}_k \gamma_k \r) {\mathcal{U}}_{(U, V)}.
\end{split}
\end{equation}
Because $\i \widetilde{\gamma}_k \gamma_k = Z_k$, it follows that the eigenvectors of $F(C)$ are the fermionic Gaussian states
\begin{equation}
    \ket{\phi_b} = {\mathcal{U}}_{(U, V)}^\dagger \ket{b}, \quad b \in \{0, 1\}^n,
\end{equation}
with eigenvalues
\begin{equation}
    E_b = \sum_{k \in [n]} (-1)^{b_k} \sigma_k(C).
\end{equation}
The maximum energy is $E_{0^n} = \tr \Sigma$ since all singular values are nonnegative. The corresponding eigenstate $\ket{\phi_{0^n}}$ is the maximizer of Eq.~\eqref{eq:linear_obj_Pin}, so it corresponds to an element $\phi_{0^n} \in \Pin(n)$. It is straightforward to see this by recognizing that $[Q(\phi_{0^n})]_{ij} = \ev{\i \widetilde{\gamma}_i \gamma_j}{\phi_{0^n}} = [UV^\T]_{ij}$. The fact that $Q(\phi_{0^n}) \in \Orth(n)$ if and only if $\phi_{0^n} \in \Pin(n)$ concludes the argument. 


Indeed, the standard classical algorithm~\cite{schonemann1966generalized} for solving Eq.~\eqref{eq:linear_obj} uses precisely the same decomposition. From the cyclic property of the trace and the fact that $\Orth(n)$ is a group, we have
\begin{equation}
    \max_{X \in \Orth(n)} \langle U \Sigma V^\T, X \rangle = \max_{X' \in \Orth(n)} \langle \Sigma, X' \rangle,
\end{equation}
where we have employed the change of variables $X' \coloneqq U^\T X V$. Again, because $\Sigma$ has only nonnegative entries, $\langle \Sigma, X' \rangle$ achieves its maximum, $\tr \Sigma$, when $X' = \I_n$. This implies that the optimal solution is $X = UV^\T$. Note that this problem is equivalent to minimizing the Frobenius-norm distance, since
\begin{equation}
\begin{split}
    \argmin_{X \in \Orth(n)} \| C - X \|_F^2 &= \argmin_{X \in \Orth(n)} \l( \|C\|_F^2 + \|X\|_F^2 - 2 \langle C, X \rangle \r)\\
    &= \argmax_{X \in \Orth(n)} \, \langle C, X \rangle.
\end{split}
\end{equation}

Now suppose we wish to optimize $\ell$ over $\SO(n)$. In this setting, one instead computes $X = U\widetilde{V}^\T$ from the special singular value decomposition of $C = U \widetilde{\Sigma} \widetilde{V}^\T$. This ensures that $\det(X) = 1$ while maximizing $\ell(X)$, as only the smallest singular value $\sigma_n(C)$ has its sign potentially flipped to guarantee the positive determinant constraint. This sign flip also has a direct analogue within the free-fermion perspective. Recall that the determinant of $Q(x) \in \Orth(n)$ is given by the parity of $x \in \Pin(n)$, or equivalently the parity of the state $\ket{x}$ in the computational basis. Note also that all fermionic states are eigenstates of the parity operator. To optimize over $\SO(n)$, we therefore seek the maximal eigenstate $\ket{\phi_b}$ of $F(C)$ which has even parity. If $\ev{Z^{\otimes n}}{\phi_{0^n}} = 1$ then we are done. On the other hand, if $\ev{Z^{\otimes n}}{\phi_{0^n}} = -1$ then we need to flip only a single bit in $0^n$ to reach an even-parity state. The smallest change in energy by such a flip is achieved from changing the occupation of the mode corresponding to the smallest singular value of $C$. The resulting eigenstate $\ket{\phi_{0^{n-1}1}}$ is then the even-parity state with the largest energy, $E_{0^{n-1}1} = \tr\Sigma - 2\sigma_n(C)$.

Finally, we point out that all elements of $\Pin(n)$ are free-fermion states. To see this, observe that $C$ is arbitrary. We can therefore construct the family of Hamiltonians $\{F(C) \mid C \in \Orth(n)\}$. Clearly, the maximum $\langle C, X \rangle = n$ within this family is achieved when $X = C$, each of which corresponds to a fermionic Gaussian state $\ket{\phi}$ satisfying $F(C)\ket{\phi} = n\ket{\phi}$ and $Q(\phi) = C$. We note that this argument generalizes the mathematical one presented in Ref.~\cite{saunderson2015semidefinite}, which only considered the eigenvectors lying in $\Spin(n)$.

\subsubsection{\label{sec:mixed_states}Mixed states and the convex hull}

First we review descriptions of the convex hull of orthogonal and rotation matrices, the latter of which was characterized by Saunderson \emph{et al.}~\cite{saunderson2015semidefinite}. The convex hull of $\Orth(n)$ is the set of all matrices with operator norm bounded by 1,
\begin{equation}\label{eq:convOn_singval}
    \conv\Orth(n) = \l\{ X \in \R^{n \times n} \mid \sigma_1(X) \leq 1 \r\}.
\end{equation}
On the other hand, the convex hull of $\SO(n)$ has a more complicated description in terms of special singular values:
\begin{equation}\label{eq:convSOn_singval}
    \conv\SO(n) = \l\{ X \in \R^{n \times n} \mathrel{\Bigg|} \sum_{i \in [n] \setminus I} \widetilde{\sigma}_i(X) - \sum_{i \in I} \widetilde{\sigma}_i(X) \leq n - 2 \quad \forall I \subseteq [n], |I| \text{ odd} \r\}.
\end{equation}
Saunderson \emph{et al.}~\cite{saunderson2015semidefinite} establish that this convex body is a spectrahedron, the feasible region of a semidefinite program. The representation that we will be interested in is called a PSD lift:
\begin{equation}\label{eq:conv-son_psd-lift}
    \conv\SO(n) = \l\{
    \begin{bmatrix}
    \langle \widetilde{P}_{11}, \rho \rangle & \cdots & \langle \widetilde{P}_{1n}, \rho \rangle\\
    \vdots & \ddots & \vdots\\
    \langle \widetilde{P}_{n1}, \rho \rangle & \cdots & \langle \widetilde{P}_{nn}, \rho \rangle
    \end{bmatrix}
    \mathrel{\Bigg|} \rho \succeq 0, \tr \rho = 1 \r\},
\end{equation}
where the $2^{n-1} \times 2^{n-1}$ matrices $\widetilde{P}_{ij}$ are defined in Eq.~\eqref{eq:P_ij_son}.\footnote{Technically, Saunderson \emph{et al.}~\cite{saunderson2015semidefinite} use the definition $\widetilde{P}_{ij} = -\Pi_0 \lambda_i \rho_j \Pi_0^\T$ because they employ the standard adjoint representation, which differs from our use of the twisted adjoint representation which includes the parity automorphism $\alpha$. However since $\alpha(x) = x$ for all $x \in \Cl^0(n)$, both definitions of $\widetilde{P}_{ij}$ coincide.} 

Recall that the density operators on a Hilbert space $\mathcal{H}$ form the convex hull of its pure states:
\begin{equation}
    \mathcal{D}(\mathcal{H}) \coloneqq \conv\{ \op{\psi}{\psi} \mid \ket{\psi} \in \mathcal{H}, \ip{\psi}{\psi} = 1 \} = \{ \rho \in \mathcal{L}(\mathcal{H}) \mid \rho \succeq 0, \tr\rho = 1 \}.
\end{equation}
From Eq.~\eqref{eq:conv-son_psd-lift} one immediately recognizes that the PSD lift of $\conv\SO(n)$ corresponds to $\mathcal{D}(\mathcal{H}_{2^{n-1}})$, where we recognize that $\mathcal{H}_{2^{n-1}} \cong \Cl^0(n)$. Furthermore, the projection of the lift is achieved through the convexification of the map $Q : \Cl(n) \to \R^{n \times n}$, where the fact that $Q$ is quadratic in $\Cl(n)$ translates to being linear in $\mathcal{D}(\Cl(n))$. Specifically, by a slight abuse of notation we shall extend the definition of $Q$ to act on density operators as
\begin{equation}
    Q(\rho) = \sum_{\mu} p_\mu Q(x_\mu), \quad \text{where } \rho = \sum_\mu p_\mu \op{x_\mu}{x_\mu}.
\end{equation}
Then Eq.~\eqref{eq:conv-son_psd-lift} is the statement that $Q(\mathcal{D}(\Cl^0(n))) = \conv\SO(n)$.

In Appendix~\ref{sec:psd_lift_proof} we show that this statement straightforwardly generalizes for $Q(\mathcal{D}(\Cl(n))) = \conv\Orth(n)$. We prove this using the fermionic representation developed in Section~\ref{sec:linear_free_fermion}, and furthermore use these techniques to provide an alternative derivation for the PSD lift of $\conv\SO(n)$. The core of our argument is showing that the singular-value conditions of Eqs.~\eqref{eq:convOn_singval} and \eqref{eq:convSOn_singval} translate into bounds on the largest eigenvalue of corresponding $n$-qubit observables:
\begin{align}
    \sigma_i(X) &= \tr(\i \widetilde{\gamma}_i \gamma_i \rho) \leq 1,\\
    \sum_{i \in [n] \setminus I} \widetilde{\sigma}_i(X) - \sum_{i \in I} \widetilde{\sigma}_i(X) &= \tr\l[ \rho_0 \l( \sum_{i \in [n] \setminus I} \i \widetilde{\gamma}_i \gamma_i - \sum_{i \in I} \i \widetilde{\gamma}_i \gamma_i \r) \r] \leq n - 2,
\end{align}
where $\rho \in \mathcal{D}(\Cl(n))$ and $\rho_0 \in \mathcal{D}(\Cl^0(n))$. The physical interpretation here is that not all pure quantum states map onto to orthogonal or rotation matrices (which is clear from the fact that fermionic Gaussian states are only a subset of quantum states). However, all density operators \emph{do} map onto to their convex hulls, and the distinction between $\conv\Orth(n)$ and $\conv\SO(n)$ can be automatically specified by restricting the support of $\rho$ to the even-parity subspace.


\section{Quantum relaxation for the quadratic problem}\label{sec:quantum_relax_section}
We now arrive at the primary problem of interest in this work, the little noncommutative Grothendieck problem over the (special) orthogonal group. While the linear problem of Eq.~\eqref{eq:linear_obj} can be solved classically in polynomial time, quadratic programs are considerably more difficult. Here, we use the quantum formalism of the Pin and Spin groups developed above to construct a quantum relaxation of this problem. Then in Section~\ref{sec:rounding} we describe rounding procedures to recover a collection of orthogonal matrices from the quantum solution to this relaxation.

Recall the description of the input to Problem~\eqref{eq:LNCG}. Let $(V, E)$ be a graph, and associate to each edge $(u, v) \in E$ a matrix $C_{uv} \in \R^{n \times n}$. We label the vertices as $V = [m]$. We wish to maximize the objective
\begin{equation}
    f(R_1, \ldots, R_m) = \sum_{(u, v) \in E} \langle C_{uv}, R_u R_v^\T \rangle.
\end{equation}
over $(R_1, \ldots, R_m) \in \Orth(n)^m$. First, expand this expression in terms of matrix elements:
\begin{equation}
    \sum_{(u, v) \in E} \langle C_{uv}, R_u R_v^\T \rangle = \sum_{(u, v) \in E} \sum_{i,j \in [n]} [C_{uv}]_{ij} \sum_{k \in [n]} [R_u]_{ik} [R_v^\T]_{kj}.
\end{equation}
From the quadratic mapping $Q : \Cl(n) \to \R^{n \times n}$, we know that for each $R \in G$ there exists some $\phi \in \Pin(n)$ such that $R_{ij} = \langle \phi | P_{ij} | \phi \rangle$. Hence we can express the matrix product as
\begin{equation}
    \begin{split}
    [R_u]_{ik} [R_v^\T]_{kj} &= \langle \phi_u | P_{ik} | \phi_u \rangle \langle \phi_v | P_{jk} | \phi_v \rangle\\
    &= \langle \phi_u \otimes \phi_v | P_{ik} \otimes P_{jk} | \phi_u \otimes \phi_v \rangle,
    \end{split}
\end{equation}
which is now the expectation value of a $2n$-qubit Pauli operator with respect to a product state of two Gaussian states $\ket{\phi_u}$, $\ket{\phi_v}$. To extend this over the entire graph, we define a Hilbert space of $m$ registers of $n$ qubits each. For each edge $(u, v) \in E$ we introduce the Hamiltonian terms
\begin{equation}
    H_{uv} \coloneqq \sum_{i,j \in [n]} [C_{uv}]_{ij} \Gamma_{ij}^{(u, v)},
\end{equation}
where
\begin{equation}
    \Gamma_{ij}^{(u, v)} \coloneqq \l(\sum_{k \in [n]} P_{ik}^{(u)} \otimes P_{jk}^{(v)}\r) \bigotimes_{w \in V \setminus \{u, v\}} \I_{2^n}^{(w)}.
\end{equation}
To simplify notation, we shall omit the trivial support $\bigotimes_{w \in V \setminus \{u, v\}} \I_{2^n}^{(w)}$ when the context is clear.

The problem is now reformulated as optimizing the $mn$-qubit Hamiltonian
\begin{equation}\label{eq:H_LNCG}
    H \coloneqq \sum_{(u, v) \in E} H_{uv} = \sum_{(u, v) \in E} \sum_{i,j \in [n]} [C_{uv}]_{ij} \sum_{k \in [n]} P_{ik}^{(u)} \otimes P_{jk}^{(v)}.
\end{equation}
The exact LNCG problem over $\Orth(n)$ then corresponds to
\begin{equation}\label{eq:quantum_exact}
    \max_{\mathbf{R} \in \Orth(n)^m} f(\mathbf{R}) = \max_{\ket{\psi} \in \mathcal{H}_{2^n}^{\otimes m}} \ev{H}{\psi} \quad \text{subject to } \begin{cases}
    \ip{\psi}{\psi} = 1,\\
    \ket{\psi} = \bigotimes_{v \in [m]} \ket{\phi_v},\\
    \ket{\phi_v} = {\mathcal{U}}_{(R_v, \I_n)} \ket{0^n}, & R_v \in \Orth(n) \ \forall v \in [m].
    \end{cases}
\end{equation}
The hardness of this problem is therefore related to finding the optimal separable state for local Hamiltonians, which is NP-hard in general~\cite{gurvits2004classical,ioannou2006computational,gharibian2008strong}. Dropping these constraints on the state provides a relaxation of the problem, since
\begin{equation}
    \max_{\substack{\ket{\psi} \in \mathcal{H}_{2^n}^{\otimes m}, \\ \ip{\psi}{\psi} = 1}} \ev{H}{\psi} \geq \max_{\mathbf{R} \in G^m} f(\mathbf{R}).
\end{equation}



We point out here that the Hamiltonian terms $H_{uv}$ can be interpreted as two-body fermionic interactions. Note that there is an important distinction between two-body fermionic operators (Clifford-algebra products of four Majorana operators) and two-body qudit operators (tensor products of two qudit Pauli operators). Recall that $P_{ij} = \i \widetilde{\gamma}_i \gamma_j$ is one-body in the fermionic sense. While the operators $P_{ik}^{(u)} \otimes P_{jk}^{(v)}$ appear to mix both notions, here they in fact coincide. To see this, we consider a global algebra of Majorana operators $\{\gamma_{i + (v-1)n}, \widetilde{\gamma}_{i + (v-1)n} \mid i \in [n], v \in [m] \}$ acting on a Hilbert space of $mn$ fermionic modes. While it is not true that the local single-mode Majorana operators map onto the global single-mode operators, i.e.,
\begin{align}
    \gamma_i^{(v)} \bigotimes_{w \in V \setminus \{v\}} \I_{2^n}^{(w)} \neq \gamma_{i + (v-1)n},
\end{align}
the local two-mode Majorana operators in fact do correspond to global two-mode operators:
\begin{equation}
    \widetilde{\gamma}_i^{(v)} \gamma_j^{(v)} \bigotimes_{w \in V \setminus \{v\}} \I_{2^n}^{(w)} = \widetilde{\gamma}_{i + (v-1)n} \gamma_{j + (v-1)n}.
\end{equation}
Thus, taking the tensor product of two local two-mode Majorana operators on different vertices is equivalent to taking the product of two global two-mode Majorana operators:
\begin{equation}
    \widetilde{\gamma}_i^{(u)} \gamma_j^{(u)} \otimes \widetilde{\gamma}_k^{(v)} \gamma_l^{(v)} \bigotimes_{w \in V \setminus \{u, v\}} \I_{2^n}^{(w)} = \widetilde{\gamma}_{i + (u-1)n} \gamma_{j + (u-1)n} \widetilde{\gamma}_{k + (v-1)n} \gamma_{l + (v-1)n}.
\end{equation}
Therefore Eq.~\eqref{eq:H_LNCG} can be equivalently expressed as a Hamiltonian with two-body fermionic interactions.

Finally, when we wish to optimize over $(R_1, \ldots, R_m) \in \SO(n)^m$, it is straightforward to see that we can simply replace the terms $P_{ij}$ with $\widetilde{P}_{ij}$. Defining
\begin{align}
    \widetilde{\Gamma}_{ij}^{(u, v)} &\coloneqq \l(\sum_{k \in [n]} \widetilde{P}_{ik}^{(u)} \otimes \widetilde{P}_{jk}^{(v)}\r) \bigotimes_{w \in V \setminus \{u, v\}} \I_{2^{n-1}}^{(w)},\\
    \widetilde{H}_{uv} &\coloneqq \sum_{i,j \in [n]} [C_{uv}]_{ij} \widetilde{\Gamma}_{ij}^{(u, v)},
\end{align}
the quantum relaxation for the $\SO(n)$ problem is given by the $m(n-1)$-qubit Hamiltonian
\begin{equation}
    \widetilde{H} \coloneqq \sum_{(u, v) \in E} \widetilde{H}_{uv}.
\end{equation}

Clearly, as local (fermionic) Hamiltonians, both $H$ and $\widetilde{H}$ lie in QMA~\cite{kempe2006complexity,liu2007quantum}. In fact, they are also QMA-hard, which can be shown by a reduction to an arbitrary instance of the XY model~\cite{cubitt2016complexity}. We prove this reduction in Appendix~\ref{sec:qma-hardness_proof}. Therefore rather than trying to solve the QMA-hard problem, our goal will be to produce an approximate ground state in polynomial time. This approximation is then classically rounded to the desired collection of orthogonal matrices, which we subsequently describe in Section~\ref{sec:rounding}. Afterwards, in Section~\ref{sec:numerical_experiments} we will explore the effectiveness of this approximation with numerical experiments.

\section{\label{sec:rounding}Rounding algorithms}
Optimizing the energy of a local Hamiltonian is a well-studied problem, both from the perspective of quantum and classical algorithms. In this section we will assume that such an algorithm has been used to produce the state $\rho \in \mathcal{D}(\mathcal{H}_d^{\otimes m})$ which (approximately) maximizes the energy $\tr(H \rho)$. We wish to round this state into the feasible space, namely the set of product states of Gaussian states. We do so by rounding the expectation values of $\rho$ appropriately, such that we return some valid approximation $R_1, \ldots, R_m \in G$. In this section we propose two approaches to perform this quantum rounding.

The first uses insight from the fact that our quantum relaxation is equivalent to a classical semidefinite relaxation with additional constraints based on the convex hull of the orthogonal group. This is approach is particularly advantageous when optimizing over $\SO(n)$, as $\conv\SO(n)$ has a matrix representation exponential in $n$ (its PSD lift). To build the semidefinite variable from the quantum state, we require measurements of the expectation values of the two-vertex operators $\Gamma_{ij}^{(u, v)} = \sum_{k \in [n]} P_{ik}^{(u)} \otimes P_{jk}^{(v)}$ for each pair of vertices $(u, v)$. We refer this procedure as \emph{$\conv\SO(n)$-based rounding}.\footnote{This rounding can also be applied to the optimization problem over $\Orth(n)$ as well, but we are particularly interested in the $\conv\SO(n)$ constraints due to their exponentially large classical representation.} Our second rounding protocol uses the expectation values of $P_{ij}^{(v)}$ of each vertex $v$ directly. In this case, rather than expectation values of two-vertex operators as before, we only require the information of single-vertex marginals $\rho_v \coloneqq \tr_{\neg v}(\rho)$. Therefore we call this approach \emph{vertex-marginal rounding}.


If $\rho$ is produced by a deterministic classical algorithm, then the relevant expectation values can be exactly computed (to machine precision). However if the state is produced by a randomized algorithm, or is otherwise prepared by a quantum computer, then we can only estimate the expectation values to within statistical error by some form of sampling. In the quantum setting, this can be achieved either by partial state tomography~\cite{bonet2020nearly,zhao2021fermionic} or a more sophisticated measurement protocol~\cite{huggins2021nearly}.\footnote{For the present discussion we do not consider the effects of finite sampling, although we expect that rounding is fairly robust to such errors since it will always return a solution in the feasible space.} See Appendix~\ref{sec:tomography} for further comments on this quantum measurement aspect. The rounding algorithms then operate entirely as classical postprocessing after estimating the necessary expectation values.

\subsection{Approximation ratios for rounding the classical SDP}

Before describing our quantum rounding protocols, we first review classical relaxations and rounding procedures for Problem~\eqref{eq:LNCG} in order to put our results in context. The standard semidefinite relaxation can be expressed as the SDP
\begin{equation}\label{eq:LNCG_relaxed}
    \max_{M \in \R^{mn \times mn}} \langle C, M \rangle \quad \text{subject to } \begin{cases}
    M \succeq 0,\\
    M_{vv} = \I_n & \forall v \in [m],
    \end{cases}
\end{equation}
where $C \in \R^{mn \times mn}$ is the matrix with $n \times n$ blocks $C_{uv}$. If an additional nonconvex constraint $\mathrm{rank}(M) = n$ is imposed, then the solution would be exact:
\begin{equation}
    M = \mathbf{R} \mathbf{R}^\T = \begin{bmatrix}
    \I_n & R_1 R_2^\T & \cdots & R_1 R_m^\T\\
    R_2 R_1^\T & \I_n & \cdots & R_2 R_m^\T\\
    \vdots & \vdots & \ddots & \vdots\\
    R_m R_1^\T & R_m R_2^\T & \cdots & \I_n
    \end{bmatrix}.
\end{equation}
Problem~\eqref{eq:LNCG_relaxed}, without the rank constraint, is therefore a relaxation of the original problem. However, the solution $M \in \R^{mn \times mn}$ is still PSD, so it can be decomposed as $M = \mathbf{X} \mathbf{X}^\T$, where
\begin{equation}
    \mathbf{X} = \begin{bmatrix}
    X_1\\
    \vdots\\
    X_m
    \end{bmatrix}, \quad X_v \in \R^{n \times mn}.
\end{equation}
The rounding algorithm of Bandeira \emph{et al.}~\cite{bandeira2016approximating} then computes, for each $v \in [m]$,
\begin{equation}
    O_v = \mathcal{P}(X_v Z) \coloneqq \argmin_{Y \in \Orth(n)} \| Y - X_v Z \|_F,
\end{equation}
where $Z$ is an $mn \times n$ Gaussian random matrix whose entries are drawn i.i.d.~from $\mathcal{N}(0, 1/n)$. When optimizing over $G = \Orth(n)$, this rounded solution guarantees (in expectation) an approximation ratio of
\begin{equation}
    \alpha_{\Orth(n)}^2 = \E\l[ \frac{1}{n} \sum_{i \in [n]} \sigma_i(Z_1) \r{]^2},
\end{equation}
where $Z_1$ is a random $n \times n$ matrix with i.i.d.~entries from $\mathcal{N}(0, 1 / n)$. This expression can be evaluated for any given $n$, and some numerical values are provided in Ref.~\cite{bandeira2016approximating}.

\subsubsection{The $\SO(n)$ setting}

The approximation ratios when one demands rounding to $\SO(n)$ elements were not originally considered by Bandeira \emph{et al.}~\cite{bandeira2016approximating}. However, as the special orthogonal setting is one of the main draws of our quantum formulation, we first establish analogous classical approximation results here. In Appendix~\ref{sec:son_approx_ratio}, we show how the argument of Bandeira \emph{et al.}~\cite{bandeira2016approximating} can be extended to the $\SO(n)$ setting, yielding an approximation ratio of
\begin{equation}
    \alpha_{\SO(n)}^2 = \E\l[ \frac{1}{n} \sum_{i \in [n-1]} \sigma_i(Z_1) \r{]^2}.
\end{equation}
The only algorithmic change is that the rounding procedure is modified to project to the nearest $\SO(n)$ element via $\widetilde{\mathcal{P}}$, defined as
\begin{equation}
    \widetilde{\mathcal{P}}(X) \coloneqq \argmin_{Y \in \SO(n)} \|Y - X \|_F.
\end{equation}
We also show that $\E[\sigma_n(Z_1)] > 0$ for all finite $n$, hence it follows that $\alpha_{\SO(n)}^2 < \alpha_{\Orth(n)}^2$. This provides evidence that solving for rotations is generally a more difficult problem than optimizing over all orthogonal matrices (see Ref.~\cite[Section~4.3]{pumir2021generalized} for a brief discussion). For small values of $n$, the numerical values of both the $\Orth(n)$ and $\SO(n)$ approximation ratios are (computed using Mathematica):
\begin{align}
    \alpha_{\Orth(2)}^2 \approx 0.6564, &\qquad \alpha_{\SO(2)}^2 \approx 0.3927,\\
    \alpha_{\Orth(3)}^2 \approx 0.6704, &\qquad \alpha_{\SO(3)}^2 \approx 0.5476,\\
    \alpha_{\Orth(4)}^2 \approx 0.6795, &\qquad \alpha_{\SO(4)}^2 \approx 0.6096.
\end{align}
In Appendix~\ref{sec:son_approx_ratio} we provide an integral expression for $\alpha_{\SO(n)}$ which can be evaluated for arbitrary $n$.

The problem over $\SO(n)$ can be augmented further by introducing constraints to the SDP based on the convex hull of $\SO(n)$. Saunderson \emph{et al.}~\cite{saunderson2014semidefinite} apply such constraints on each block of $M$:
\begin{equation}\label{eq:LNCG_SD_conv}
    \max_{M \in \R^{mn \times mn}} \langle C, M \rangle \quad \text{subject to } \begin{cases}
    M \succeq 0,\\
    M_{vv} = \I_n & \forall v \in [m],\\
    M_{uv} \in \conv\SO(n) & \forall u, v \in [m].
    \end{cases}
\end{equation}
Although they do not prove approximation guarantees for this enhanced SDP, they first show that, if one reintroduces the rank constraint on $M$, then the convex constraint $M_{uv} \in \conv\SO(n)$ is sufficient to guarantee the much stronger condition $M_{uv} \in \SO(n)$. Then, when dropping the rank constraint (but leaving the $\conv\SO(n)$ constraint) they show that the relaxed problem is still exact over certain types of graphs, such as tree graphs. Finally, they provide numerical evidence that even when the relaxation is not exact, it returns substantially more accurate approximations than the standard SDP.

However, the cost of this augmented SDP is the exponential size of representing $\conv\SO(n)$ elements, as seen by the PSD lift in Eq.~\eqref{eq:conv-son_psd-lift}. In a different work by the same authors, Saunderson \emph{et al.}~\cite{saunderson2015semidefinite} proved that, roughly speaking, semidefinite representations of $\conv\SO(n)$ necessarily have size at least $\frac{1}{n} 2^{\Omega(n)}$.

\subsection{\label{sec:quantum_gram}Quantum Gram matrix}

We now introduce the rounding procedures for our quantum relaxed solution. Analogous to the classical SDP solution $M$, we can form a matrix $\mathcal{M} \in \R^{mn \times mn}$ from the expectation values of $\rho$ as
\begin{equation}
    \mathcal{M} \coloneqq \begin{bmatrix}
    \I_n & T_{12} & \cdots & T_{1m}\\
    T_{21} & \I_n & \cdots & T_{2m}\\
    \vdots & \vdots & \ddots & \vdots\\
    T_{m1} & T_{m2} & \cdots & \I_n
    \end{bmatrix},
\end{equation}
where
\begin{equation}\label{eq:T_uv}
    T_{uv} \coloneqq \frac{1}{n} \begin{bmatrix}
    \tr(\Gamma_{11}^{(u, v)} \rho) & \cdots & \tr(\Gamma_{1n}^{(u, v)} \rho)\\
    \vdots & \ddots & \vdots\\
    \tr(\Gamma_{n1}^{(u, v)} \rho) & \cdots & \tr(\Gamma_{nn}^{(u, v)} \rho)
    \end{bmatrix}
\end{equation}
and $T_{vu} = T_{uv}^\T$ for $u < v$. Just as $\langle C, M \rangle$ gives the relaxed objective value (up to rescaling and constant shifts), here we have that $\langle C, \mathcal{M} \rangle = \frac{2}{n} \tr(H \rho) + \tr(C)$. In Appendix~\ref{sec:quantum_conv_son} we show that for any quantum state, $\mathcal{M}$ satisfies the following properties:
\begin{equation}
\begin{cases}\label{eq:quantum_M_conditions}
    \mathcal{M} \succeq 0,\\
    \mathcal{M}_{vv} = \I_n & \forall v \in [m],\\
    \mathcal{M}_{uv} \in \conv\Orth(n) & \forall u, v \in [m].
\end{cases}
\end{equation}
Furthermore, we show that when $\rho$ is supported only on the even subspace of each single-vertex Hilbert space (or equivalently, if we replace $\Gamma_{ij}^{(u, v)}$ with $\widetilde{\Gamma}_{ij}^{(u, v)}$ in Eq.~\eqref{eq:T_uv}), then
\begin{equation}\label{eq:conv_son_T}
    \frac{1}{n} \begin{bmatrix}
    \tr(\widetilde{\Gamma}_{11}^{(u, v)} \rho) & \cdots & \tr(\widetilde{\Gamma}_{1n}^{(u, v)} \rho)\\
    \vdots & \ddots & \vdots\\
    \tr(\widetilde{\Gamma}_{n1}^{(u, v)} \rho) & \cdots & \tr(\widetilde{\Gamma}_{nn}^{(u, v)} \rho)
    \end{bmatrix} \in \conv\SO(n) \quad \forall \rho \in \mathcal{D}(\mathcal{H}_{2^{n-1}}^{\otimes m}).
\end{equation}
Therefore when optimizing the relaxed Hamiltonian $\widetilde{H}$ for the $\SO(n)$ setting, we are guaranteed to automatically satisfy the $\conv\SO(n)$ constraints.

\subsection{\label{sec:conv_rounding}$\conv\SO(n)$-based rounding}

Given the construction of the $\mathcal{M}$ from quantum expectation values, we proceed to round the Gram matrix as in the classical SDP with $\conv\SO(n)$ constraints~\cite{saunderson2014semidefinite}. This consists of computing the matrices
\begin{equation}
    R_v = \widetilde{\mathcal{P}}(\mathcal{M}_{1v}),
\end{equation}
where the projection to $\SO(n)$ can be efficiently computed from the special singular value decomposition, i.e.,
\begin{equation}
    \widetilde{\mathcal{P}}(X) = U \widetilde{V}^\T
\end{equation}
(recall Eq.~\eqref{eq:summary_J}). Our choice of rounding using the first $n \times mn$ ``row'' of $\mathcal{M}$ amounts to fixing $R_1 = \I_n$. We note that the same rounding procedure can naturally be applied to the $\Orth(n)$ setting as well, replacing $\widetilde{\mathcal{P}}$ with $\mathcal{P}$.

\subsection{\label{sec:vertex_rounding}Vertex-marginal rounding}

The single-vertex marginals are obtained by tracing out the qudits associated to all but one vertex $v \in [m]$,
\begin{equation}
    \rho_v = \tr_{\neg v}(\rho).
\end{equation}
As $\rho_v \in \mathcal{D}(\mathcal{H}_d^{\otimes m})$, from Section~\ref{sec:mixed_states} we have that $Q(\rho_v) \in \conv G$, where we linearly extend the definition of $Q$ to
\begin{equation}
    Q(\rho_v) \coloneqq \begin{bmatrix}
    \tr(P_{11}^{(v)} \rho) & \cdots & \tr(P_{1n}^{(v)} \rho)\\
    \vdots & \ddots & \vdots\\
    \tr(P_{n1}^{(v)} \rho) & \cdots & \tr(P_{nn}^{(v)} \rho)
    \end{bmatrix}.
\end{equation}
The rounding scheme we propose here then projects $Q(\rho_v)$ to $G$ using either $\mathcal{P}$ or $\widetilde{\mathcal{P}}$:
\begin{equation}
    R_v = \argmin_{Y \in G} \|Y - Q(\rho_v)\|_F.
\end{equation}

We point out that the relaxed Hamiltonian only has two-vertex terms which we seek to maximize. In Appendix~\ref{sec:symmetry_vertex} we show that $H$ commutes with $\mathcal{U}_{(\I_n, V)}^{\otimes m}$ for all $V \in \Orth(n)$, which we further show implies that $H$ may possess eigenstates whose single-vertex marginals obey $Q(\sigma_v) = 0$. This indicates that there may exist eigenstates of $H$ whose single-vertex marginals yield no information, despite the fact that their two-vertex marginals are nontrivial. In our numerical studies, we observe that breaking this symmetry resolves this issue. We accomplish this by including small perturbative one-body terms which correspond to the trace of $Q(\sigma_v)$:
\begin{equation}
    H_1 = \sum_{v \in [m]} \sum_{i \in [n]} P_{ii}^{(v)},
\end{equation}
Note that this trace quantity is importantly invariant with respect to the choice of basis for $\R^n$. We then augment the objective Hamiltonian with $H_1$, defining
\begin{equation}
    H'(\zeta) \coloneqq H + \zeta H_1
\end{equation}
where $\zeta > 0$ is a small regularizing parameter. While this one-body perturbation does not correspond to any terms in the original quadratic objective function, any arbitrarily small $\zeta > 0$ suffices to break the $\Orth(n)$ symmetry. Furthermore, the rounding procedure always guarantees that the solution is projected back into the feasible space $G^m$. When $G = \SO(n)$ we define $\widetilde{H}'(\zeta)$ analogously.

\section{Numerical experiments}\label{sec:numerical_experiments}
To explore the potential of our quantum relaxation and rounding procedures, we performed numerical experiments on randomly generated instances of the group synchronization problem. Because the Hilbert-space dimension grows exponentially in both $m$ and $n$, our classical simulations here are limited to small problem sizes. However, optimizing over rotations in $\R^3$ (requiring only two qubits per vertex) is highly relevant to many practical applications, so here we focus on the problem of $\SO(3)$ group synchronization. For example, this problem appears in the context of cryo-EM as described in Section~\ref{sec:group_sync}.

To model the problem, we generated random instances by selecting random three-regular graphs $([m], E)$, uniformly sampling $m$ rotations $g_1, \ldots, g_m \in \SO(n)$ according to the Haar measure, and then constructing $C_{uv} = g_u g_v^\T + \sigma W_{uv}$ for each $(u, v) \in E$, where the Gaussian noise matrix $W_{uv} \in \R^{n \times n}$ has i.i.d.~elements drawn from $\mathcal{N}(0, 1)$ and $\sigma \geq 0$ represents the strength (standard deviation) of this noise. This model describes a typical instance of group synchronization wherein the data is inherently corrupted by a low signal-to-noise ratio~\cite{ling2023solving}. Therefore our numerics serve to demonstrate the average-case performance of our quantum relaxation and rounding, as opposed to a worst-case theoretical analysis.

While the classical $\conv\SO(n)$-based SDP is not guaranteed to find the optimal solution, the problems studied here were selected for such that this enhanced SDP in fact does solve the exact problem. We verify this property by confirming that $\mathrm{rank}(M) = n$ before rounding on each problem instance. In this way we are able to calculate an approximation ratio for the other methods (as it is not clear how to solve for the globally optimal solution in general, even with an exponential-time classical algorithm). The methods compared here include our quantum relaxation with $\conv\SO(n)$-based rounding (denoted CR), vertex-marginal rounding (VR), and the classical SDP (without $\conv\SO(n)$ constraints but using the $\widetilde{\mathcal{P}}$ projection to guarantee that the rounded solutions are elements of $\SO(n)$). When using the vertex-rounding method, we employ $\widetilde{H}'(\zeta)$ as the objective Hamiltonian with $\zeta = 10^{-6}$.

We explore two classes of quantum states to probe the effectiveness of our relaxation. First, in Section~\ref{sec:eigenvector_numerics} we study the maximal eigenstate of the Hamiltonian. This represents the optimal solution to the relaxation, which can be seen as the quantum analogue to solving the classical SDP before rounding. However, computing extremal eigenstates of local Hamiltonians is QMA-hard in general (which is harder than the underlying classical NP-hard problem). Therefore, we next consider in Section~\ref{sec:asp_numerics} quantum states which can be prepared in at most polynomial time on a quantum computer. This allows us to study the potential for developing an efficient quantum algorithm derived from the quantum relaxation.

\subsection{\label{sec:eigenvector_numerics}Exact eigenvectors}

First, we consider the solution obtained by rounding the maximum eigenvector of $\widetilde{H}$. Although the hardness of preparing such a state is equivalent that of the ground-state problem, this nonetheless provides us with a benchmark for the ultimate approximation quality of our quantum relaxation. In Figure~\ref{fig:eigen_group-sync} we plot the approximation ratio of the rounded quantum states and compare to that of the classical SDP on the same problem instances. Each violin plot was constructed from the results of 50 random instances.

The results here demonstrate that, while the approximation quality of the classical SDP quickly falls off with larger graph sizes, our rounded quantum solutions maintain high approximation ratios, at least for the problem sizes probed here. Notably, the $\conv\SO(n)$-based rounding on the quantum state is significantly more powerful and consistent than the vertex-marginal rounding. This feature is not unexpected since, as discussed in Section~\ref{sec:vertex_rounding}, we are maximizing an objective Hamiltonian with only two-body terms, whereas the single-vertex rounding uses strictly one-body expectation values. Furthermore, as demonstrated in previous works~\cite{saunderson2014semidefinite, matni2014convex} the $\conv\SO(n)$ constraints are powerful in practice, and so we expect that the quantum rounding protocol which makes use of this structure enjoys the same advantages.

Meanwhile, when varying the noise parameter $\sigma$, we observe that all methods are fairly consistent. In particular, the $\conv\SO(n)$-based rounding only shows an appreciable decrease in approximation quality when the noise is considerable (note that $\sigma = 0.5 \approx 10^{-0.3}$ is a relatively large amount of noise, since $g_u g_v^\T$ is an orthogonal matrix and therefore has matrix elements bounded in magnitude by 1).
\begin{figure}
    \includegraphics[width=0.495\textwidth]{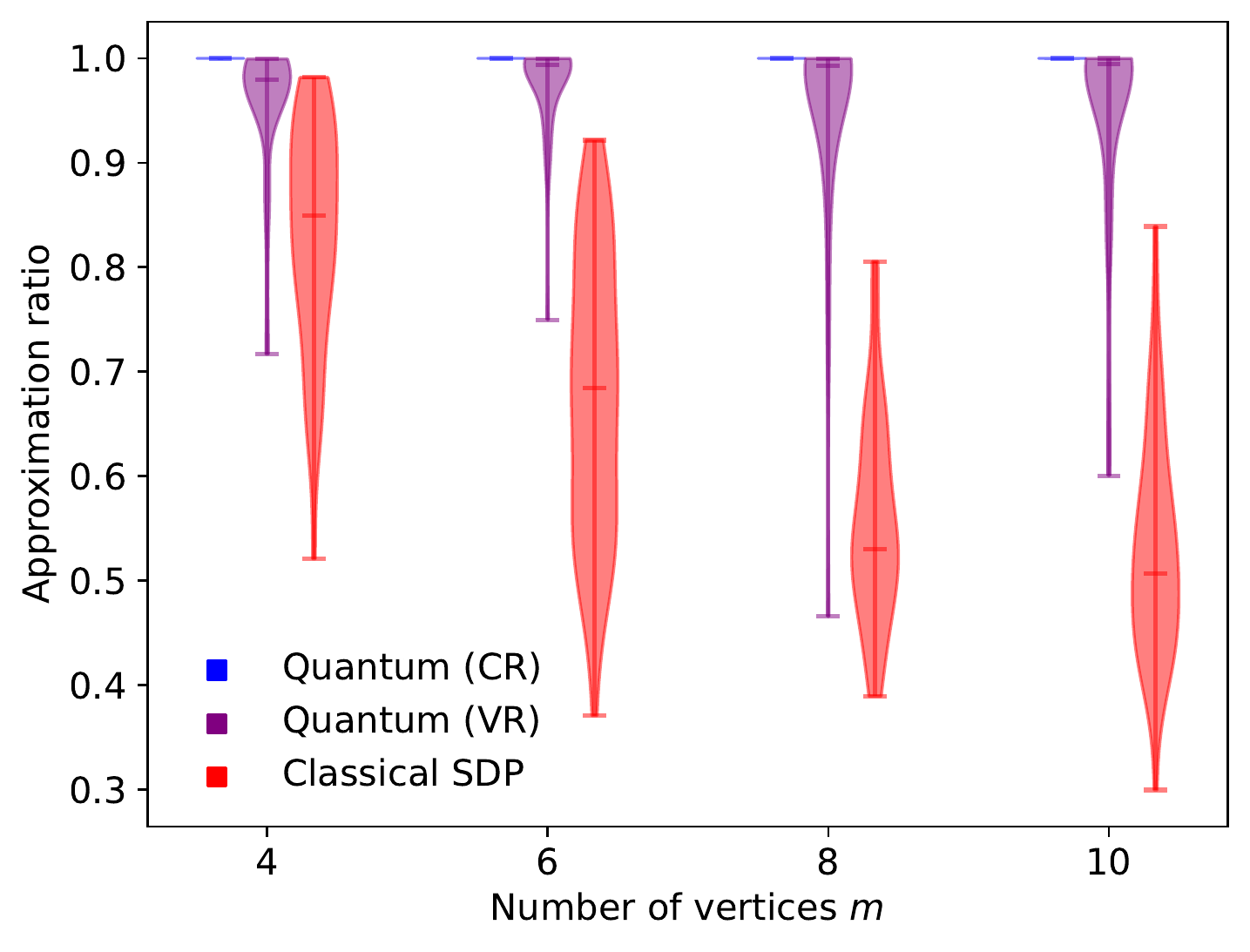}
    \includegraphics[width=0.495\textwidth]{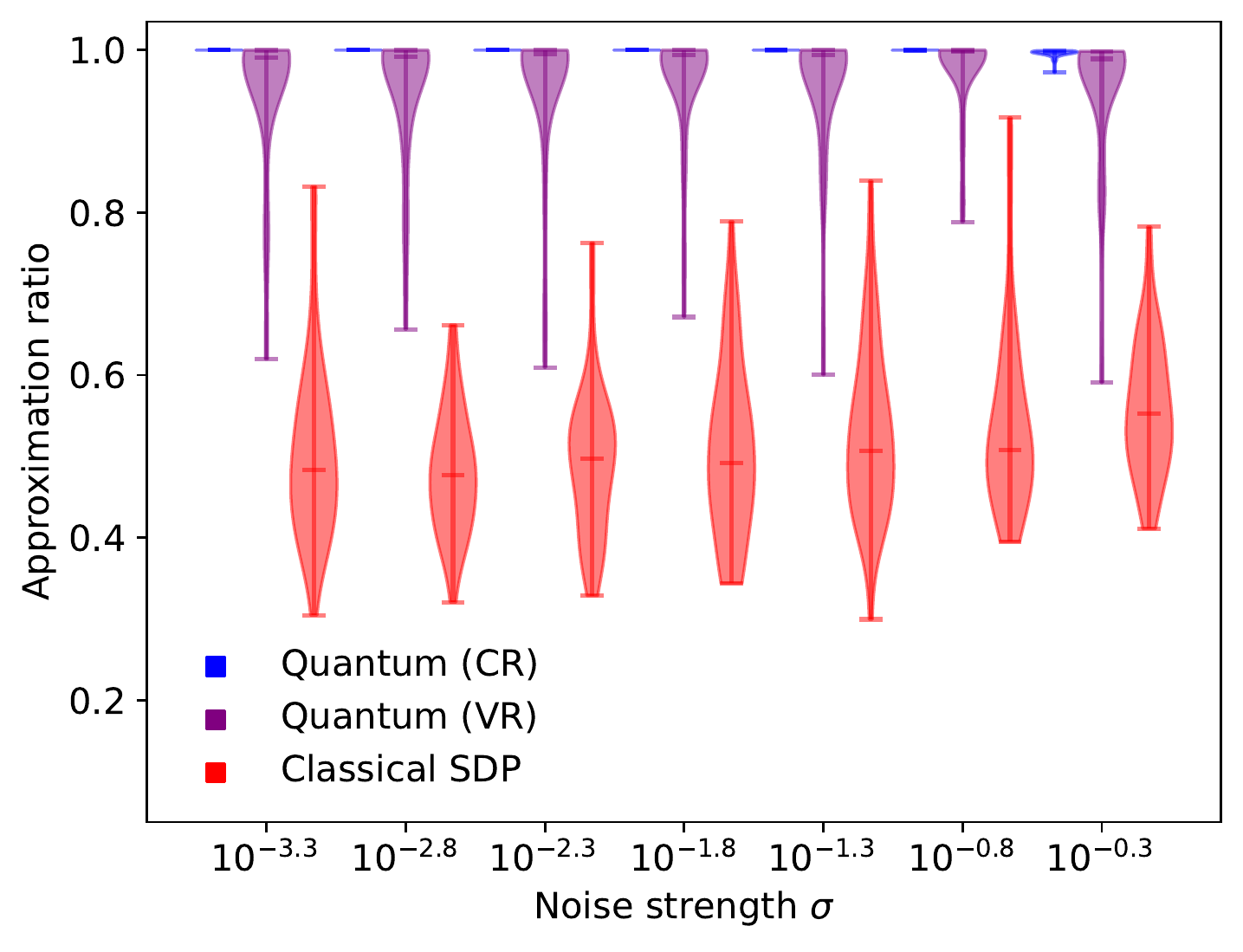}
    \caption{Approximation ratios for solutions obtained from rounding the maximum eigenvector of the relaxed Hamiltonian $\widetilde{H}$. Violin plots show the distribution of approximation ratios over 50 randomly generated instance, and with the median being indicated by the center marker. CR refers to rounding according to the $\conv\SO(n)$-based scheme (Section~\ref{sec:conv_rounding}), while VR denotes the vertex-marginal rounding scheme (Section~\ref{sec:vertex_rounding}). The classical SDP solution was rounded by the standard randomized algorithm~\cite{bandeira2016approximating}, and we report the best solution over 1000 rounding trials. (Left) Varying the number of vertices $m$ in the graph (random 3-regular graphs). Note that the number of qubits required here is $2m$. (Right) Varying the noise strength parameter $\sigma$ which defines the problem via $C_{uv} = g_u g_v^\T + \sigma W_{uv}$.}
    \label{fig:eigen_group-sync}
\end{figure}





\subsection{\label{sec:asp_numerics}Quasi-adiabatic state preparation}

\begin{figure}
    \centering
    \includegraphics[width=0.5\textwidth]{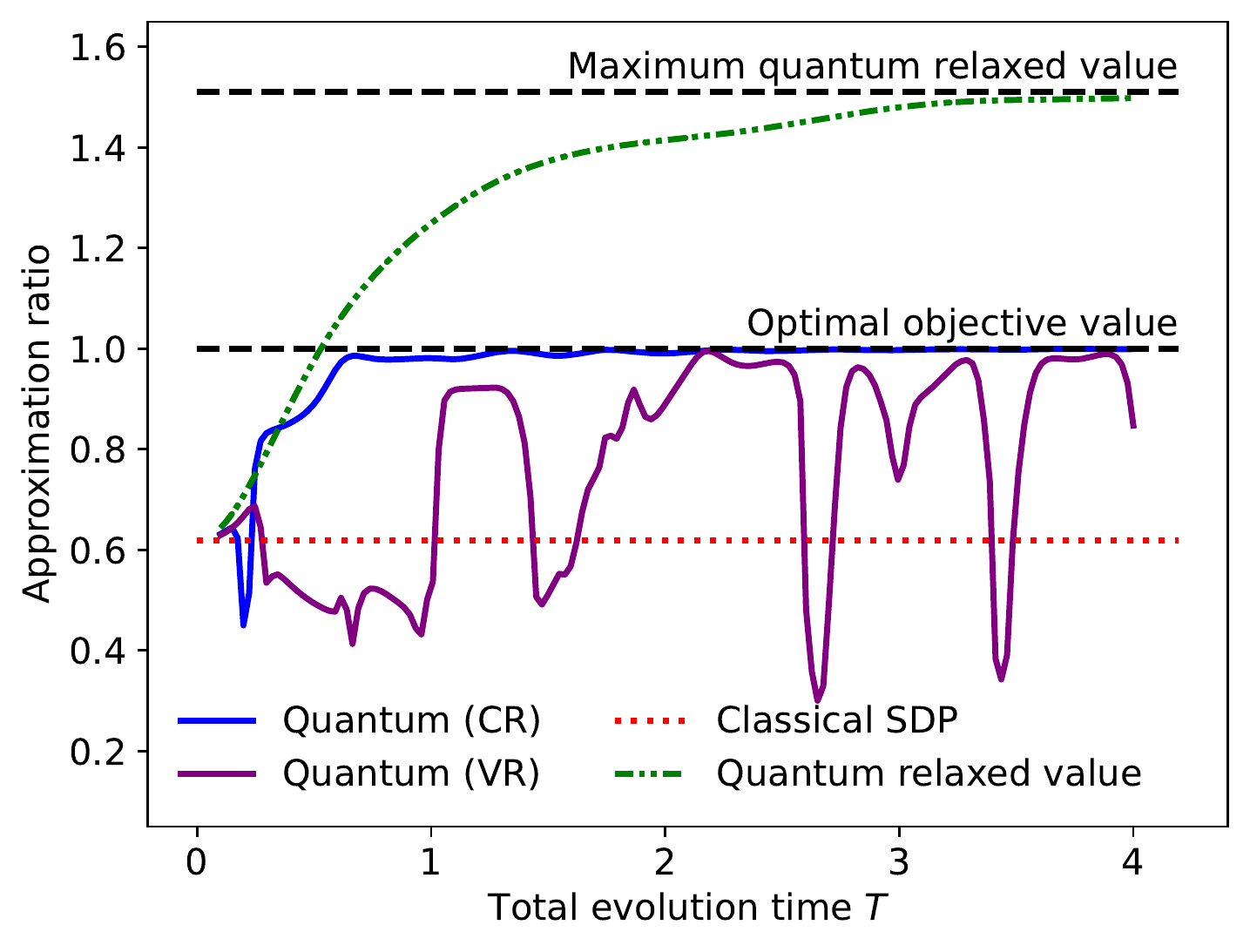}
    \caption{Demonstration of a typical instance of adiabatic state preparation for preparing relaxed quantum solutions. The initial state is the product of Gaussian states corresponding to the rounded solution of the classical SDP. As the total evolution time $T$ increases, the evolution becomes more adiabatic, indicated by the convergence of the relaxed value to the maximum eigenvalue (in units of the original problem's optimal value). The rounded solutions of course can never exceed the original problem's optimal value.}
    \label{fig:adiabatic_group-sync}
\end{figure}

Because it may be unrealistic to prepare the maximum eigenvector of $\widetilde{H}$, here we consider preparing states using ideas from adiabatic quantum computation~\cite{RevModPhys.90.015002}. Specifically, we wish to demonstrate that states whose relaxed energy may be far from the maximum eigenvalue can still provide high-quality approximations after rounding. If this is the case then we do not need to prepare very close approximations to the maximum eigenstate of $\widetilde{H}$, so the rigorous conditions of adiabatic state preparation may not be required in this context. Hence we consider ``quasi-adiabatic'' state preparation, wherein we explore how time-evolution speeds far from the adiabatic limit may still return high-quality approximations. Our numerical experiments here provide a preliminary investigation into this conjecture.

For simplicity of the demonstration, we consider a linear annealing schedule according to the time-dependent Hamiltonian
\begin{equation}
    H(t) = \l( 1 - \frac{t}{T} \r) H_i + \frac{t}{T} H_f,
\end{equation}
which prepares the state
\begin{equation}
    \ket{\psi(T)} = \mathcal{T}\exp\l( -\i \int_0^T \mathrm{d}t \, H(t) \r) \ket{\psi(0)}
\end{equation}
for some $T > 0$, where $\mathcal{T}$ is the time-ordering operator. The final Hamiltonian $H_f$ is the desired objective LNCG Hamiltonian,
\begin{equation}
    H_f = \widetilde{H}.
\end{equation}
The initial Hamiltonian $H_i$ is the parent Hamiltonian of the initial state, which we choose to be the approximation obtained from the classical SDP, as it can be obtained classically in polynomial time. Let $R_1, \ldots, R_m \in \SO(n)$ be the SDP solution. Our initial state is then the product of Gaussian states
\begin{equation}
    \ket{\psi(0)} = \bigotimes_{v \in [m]} \ket{\phi(R_v)},
\end{equation}
where each $\ket{\phi(R_v)}$ is the maximum eigenvector of the free-fermion Hamiltonian
\begin{equation}
    F(R_v) = \i \sum_{i, j \in [n]} [R_v]_{ij} \widetilde{\gamma}_i \gamma_j.
\end{equation}
Therefore the initial Hamiltonian $H_i$ is a sum of such free-fermion Hamiltonians (here we include the even-subspace projection since we are working with $\SO(n)$):
\begin{equation}
    H_i = \sum_{v \in [m]} \Pi_0 F(R_v) \Pi_0^\T.
\end{equation}
As a Gaussian state, $\ket{\phi(R_v)}$ can be prepared exactly from a quantum circuit of $\Ord(n^2)$ gates~\cite{PhysRevApplied.9.044036,zhao2024group}. Note that since we are working directly in the even subspace of $n - 1$ qubits here, this $n$-qubit circuit must be projected appropriately using $\Pi_0$. We discuss how to perform this circuit recompilation in Appendix~\ref{sec:even_subspace}. We comment that this choice of initial state is that of a mean-field state for non-number-preserving fermionic systems, for instance as obtained from Hartree--Fock--Bogoliubov theory. Suitably, the final Hamiltonian we evolve into is non-number-preserving two-body fermionic Hamiltonian.

In adiabatic state preparation, the total evolution time $T$ controls how close the final state $\ket{\psi(T)}$ is to the maximum eigenstate\footnote{We remind the reader that we are starting in the maximum eigenstate of the initial Hamiltonian, whereas in the physics literature, adiabatic theorems are typically stated in terms of ground states. Of course, the two perspectives are equivalent by simply an overall sign change (note that all Hamiltonians here are traceless).} of the final Hamiltonian $H_f$. One metric of closeness is how the energy of the prepared state, $\ev{H_f}{\psi(T)}$, compares to the maximum eigenvalue of $H_f$. On the other hand, as a relaxation, this maximum energy is already larger than the optimal objective value of the original problem. We showcase this in Figure~\ref{fig:adiabatic_group-sync}, using one random problem instance as a demonstrative (typical) example on a graph of $m = 6$ vertices (12 qubits). For each total evolution time point $T$, we computed $\ket{\psi(T)}$ by numerically integrating the time-dependent Schr\"{o}dinger equation, and we plot its relaxed energy as well as its rounded objective values. For large $T$ we approach the maximum eigenstate of $\widetilde{H}$ as expected (thereby also demonstrating that the initial ``mean-field'' state $\ket{\psi(0)}$ has appreciable overlap). Particularly interesting is the behavior for relatively small total evolution times $T$, wherein the energy of $\ket{\psi(T)}$ is far from the maximum eigenenergy. Despite this, the approximation quality after rounding the state using $\mathcal{M}$ is nearly exact around $T \approx 1$. On the other hand, the approximation quality of vertex-marginal rounding is highly inconsistent, which again we attribute to the fact that the single-vertex information is not directly seen by the final Hamiltonian $H_f$.

\begin{figure}
    \centering
    \includegraphics[width=0.495\textwidth]{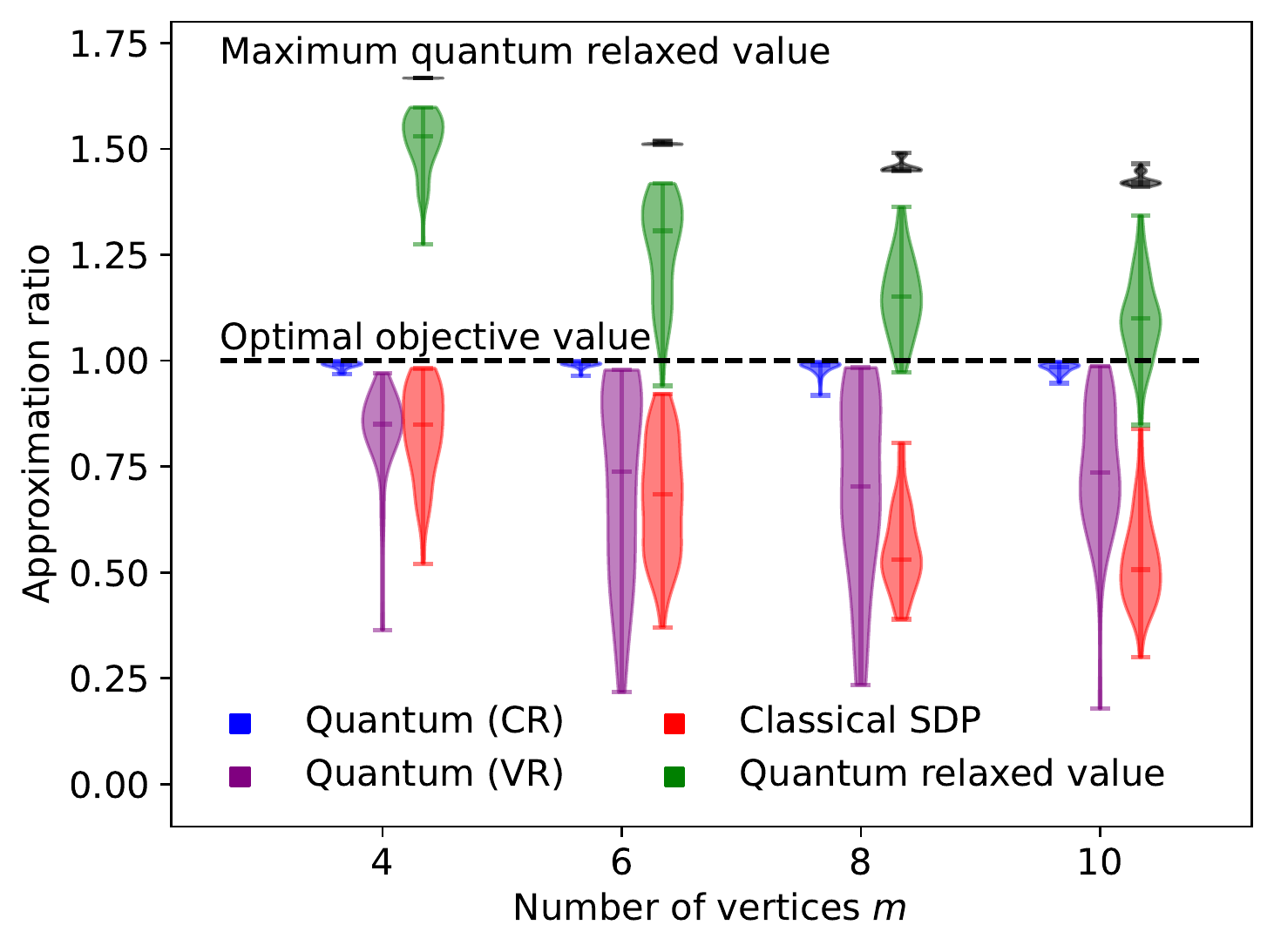}
    \includegraphics[width=0.495\textwidth]{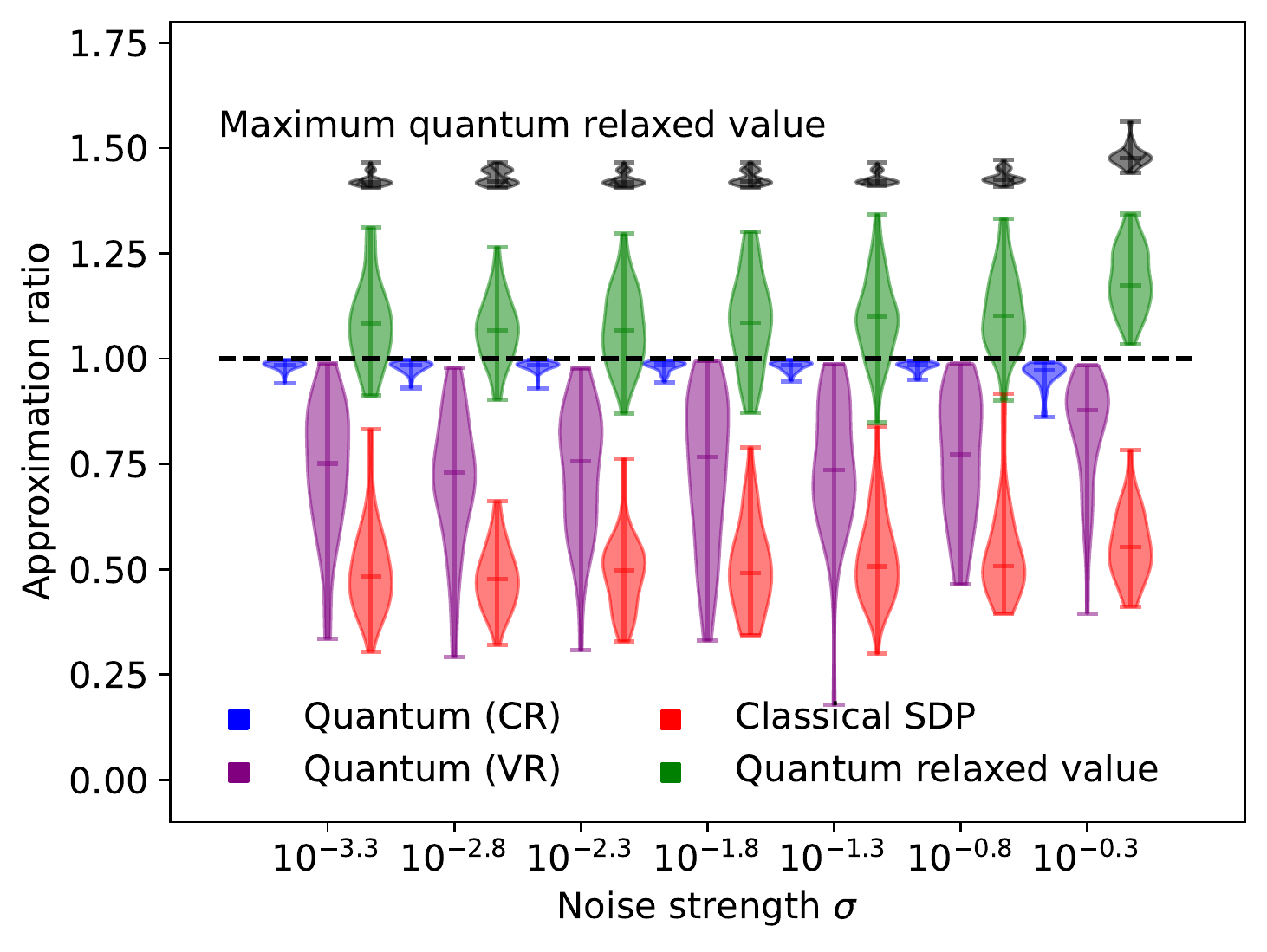}
    \caption{Approximation ratios for solutions obtained from rounding the ``adiabatically'' evolved state $\ket{\psi(T)}$ with fixed $T = 1$ for all $m, \sigma$. Problem instances and visualization is the same as in Figure~\ref{fig:eigen_group-sync}. We also include the maximum eigenvalue of the relaxed Hamiltonian and the energy of the prepared unrounded state, to demonstrate how far $\ket{\psi(T)}$ is from the exact maximum eigenvector. (Left) Varying the number of vertices $m$ in the graph. Note that the number of qubits required here is $2m$. (Right) Varying the noise strength parameter $\sigma$.}
    \label{fig:adiabatic_group-sync_violin}
\end{figure}

Then in Figure~\ref{fig:adiabatic_group-sync_violin} we plot the same 50 problem instances (per graph size/noise level) as in Figure~\ref{fig:eigen_group-sync}, but using the quasi-adibatically prepared state $\ket{\psi(T)}$ where we have fixed $T = 1$ for all graph sizes. The classical SDP results are the same as in Figure~\ref{fig:adiabatic_group-sync_violin}, and for reference we include the energy of the unrounded quantum state and the maximum eigenvalue of the relaxed Hamiltonian (normalized with respect to the optimal objective value). Qualitatively, we observe features similar to those seen in Figure~\ref{fig:adiabatic_group-sync}. Namely, although the annealing schedule is too fast to prepare a close approximation to the maximum eigenstate, the rounded solutions (using the $\conv\SO(n)$-based protocol) consistently have high approximation ratios. Meanwhile, the vertex-rounded solutions are highly inconsistent, which reflects the highly fluctuating behavior seen in Figure~\ref{fig:adiabatic_group-sync}.

\section{Discussion and future work}\label{sec:discussion}
In this paper we have developed a quantum relaxation for a quadratic program over orthogonal and rotation matrices, known as an instance of the little noncommutative Grothendieck problem. The embedding of the classical objective is achieved by recognizing an intimate connection between the geometric-algebra construction of the orthogonal group and the structure of quantum mechanics, in particular the formalism of fermions in second quantization. From this perspective, the determinant condition of $\SO(n)$ is succinctly captured by a simple linear property of the state---its parity---and the convex bodies $\conv\Orth(n)$ and $\conv\SO(n)$ (relevant to convex relaxations of optimization over orthogonal matrices) are completely characterized by density operators on $n$ and $n - 1$ qubits, respectively. Recognizing that the reduced state on each vertex therefore corresponds to an element of this convex hull, we proposed vertex-marginal rounding which classically rounds the measured one-body reduced density matrix of each vertex.

We additionally showed that these convex hulls are characterized by density operators on $2n$ and $2(n-1)$ qubits as well, where the linear functionals defining this PSD lift are the Hamiltonian terms appearing in our quantum relaxation. This insight enables our second proposed rounding scheme, $\conv G$-based edge rounding, which is inspired by the fact that the a quantum Gram matrix $\mathcal{M}$ can be constructed from the expectation values of the quantum state which obeys the same properties as the classical SDP of Saunderson~\emph{et al.}~\cite{saunderson2014semidefinite}. Numerically we observe that this approach to quantum rounding is significantly more accurate and consistent than vertex rounding, and it consistently achieves larger approximation ratios than the basic SDP relaxation. However, we are severely limited by the exponential scaling of classically simulating quantum states;~further investigations would be valuable to ascertain the empirical performance of these ideas at larger scales.

The primary goal of this work was to formulate the problem of orthogonal-matrix optimization into a familiar quantum Hamiltonian problem, and to establish the notion of a quantum relaxation for such optimization problems over continuous-valued decision variables. A clear next step is to prove nontrivial approximation ratios from our quantum relaxation. If such approximation ratios exceed known guarantees by classical algorithms, for example on certain types of graphs, then this would potentially provide a quantum advantage for a class of applications not previously considered in the quantum literature. We have proposed one standard, realistically preparable class of states---quasi-adiabatic time evolution---but a variety of energy-optimizing ansatze exist in the literature, especially considering that the constructed Hamiltonian is an interacting-fermion model. From this perspective, it would also be interesting to see if a classical many-body method can produce states which round down to high-quality approximations, even heuristically. Such an approach would constitute a potential example of a quantum-inspired classical algorithm.

From a broader perspective, the quantum formalism described here may also provide new insights into the computational hardness of the classical problem. First, the NP-hard thresholds for Problem~\eqref{eq:LNCG} are not currently known. However, by establishing the classical problem as an instance of Gaussian product state optimization on the many-body Hamiltonian, it may be possible to import tools from quantum computational complexity to study the classical problem. This idea also applies to the more general instances of noncommutative Grothendieck problems,
\begin{equation}\label{eq:NCG}
    \max_{U, V \in \Orth(N)} \sum_{i,j,k,l \in [N]} T_{ijkl} U_{ij} V_{kl},
\end{equation}
where the $N \times N \times N \times N$ tensor $T$ specifies the problem input. It is straightforward to apply our quantum relaxation construction to this problem, yielding a $2N$-qubit Hamiltonian whose terms are of the form $P_{ij} \otimes P_{kl}$. While Bri\"{e}t \emph{et al.}~\cite{briet2017tight} showed that the NP-hardness threshold of approximating this problem is $1/2$, it remains an open problem to construct an algorithm which is guaranteed to achieve this approximation ratio.

Although we have provided new approximation ratios for the instance of Problem~\eqref{eq:LNCG} over $\SO(n)$, it is unclear precisely how much harder the $\SO(n)$ problem is compared to the $\Orth(n)$ problem. The work by Saunderson \emph{et al.}~\cite{saunderson2015semidefinite} establishes a clear distinction between the representation sizes required for $\conv\Orth(n)$ and $\conv\SO(n)$, and this paper has connected this structure to properties of quantum states on $n$ qubits. However this does not yet establish a difference of hardness for the corresponding quadratic programs. Again it would be interesting to see if the tools of quantum information theory can be used to further understand this classical problem. For example, one might study the NP-hardness threshold of Problem~\eqref{eq:NCG} where instead $U, V \in \SO(N)$ and leverage the quantum (or equivalently, Clifford-algebraic) representation of $\SO(N)$. In such a setting, the size of the problem is given by a single parameter $N$ and so the exponentially large parametrization of $\conv\SO(N)$ appears to signify a central difficulty of this problem.

We note that it is straightforward to extend our quantum relaxation to the unitary groups $\U(n)$ and $\SU(n)$, essentially by doubling the number of qubits per vertex via the inclusions $\U(n) \subset \Orth(2n)$ and $\SU(n) \subset \SO(2n)$. However this is likely an inefficient embedding, since the $n$-qubit Majorana operators already form a representation of $\Cl(2n)$. It may therefore be possible to encode complex-valued matrices via a complexification of $Q$, using the same amount of quantum space. It is interesting to note that Bri\"{e}t \emph{et al.}~\cite{briet2017tight} in fact utilize  a ``complex extension'' of Clifford algebras when considering Problem~\eqref{eq:NCG} over the unitary group, although the usage is different from ours.

\begin{acknowledgments}
    The authors thank Ryan Babbush, Bill Huggins, Robin Kothari, Jarrod McClean, Chaithanya Rayudu, and Jun Takahashi for helpful discussions and feedback on the manuscript. AZ thanks Akimasa Miyake for support.
\end{acknowledgments}

\printbibliography

\appendix

\section{\label{sec:clifford}Clifford algebras and the orthogonal group}
In this appendix we review the key components for constructing the orthogonal and special orthogonal groups from a Clifford algebra. Our presentation of this material broadly follows Refs.~\cite{atiyah1964clifford,saunderson2015semidefinite}.

The Clifford algebra $\Cl(n)$ of $\R^n$ is a $2^n$-dimensional real vector space, equipped with an inner product $\langle \cdot, \cdot \rangle : \Cl(n) \times \Cl(n) \to \R$ and a multiplication operation satisfying the anticommutation relation
\begin{equation}\label{eq:anticommutator_cl_n}
    e_i e_j + e_j e_i = -2 \delta_{ij} \openone,
\end{equation}
where $\{e_1, \ldots, e_n\}$ is an orthonormal basis of $\R^n$ and $\openone$ is the multiplicative identity of the algebra. The basis elements $e_i$ are called the generators of the Clifford algebra, in the sense that they generate all other basis vectors of $\Cl(n)$ as
\begin{equation}
    e_I \coloneqq e_{i_1} \cdots e_{i_k}, \quad I = \{i_1, \ldots, i_k\} \subseteq [n].
\end{equation}
By convention we order the indices $i_1 < \cdots < i_k$, and the empty set corresponds to the identity, $e_\varnothing = \openone$. Taking all subsets $I \subseteq [n]$ and extending the inner product definition from $\R^n$ to $\Cl(n)$, it follows that $\{e_I \mid I \subseteq [n]\}$ is an orthonormal basis with $2^n$ elements. Specifically, we can write any element $x \in \Cl(n)$ as
\begin{equation}
    x = \sum_{I \subseteq [n]} x_I e_I
\end{equation}
with each $x_I \in \R$, and the inner product on $\Cl(n)$ is\footnote{Equipping an inner product to the vector representation of $\Cl(n)$ elements is achieved using the fact that algebra elements square to a multiple of the identity.}
\begin{equation}
    \langle x, y \rangle = \sum_{I \subseteq [n]} x_I y_I,
\end{equation}
where $y = \sum_{I \subseteq [n]} y_I e_I$. Hence $\Cl(n)$ is isomorphic as a Hilbert space to $\R^{2^n}$.

Now we show how to realize the orthogonal group $\Orth(n)$ from this algebra. First observe the inclusion $\R^n = \spn\{e_i \mid i \in [n]\} \subset \Cl(n)$. We shall identify the sphere $S^{n-1} \subset \R^n$ as all $u \in \R^n$ satisfying $\langle u, u \rangle = 1$. We then define the Pin group as all possible products of $S^{n-1}$ elements:
\begin{equation}\label{eq:pin}
    \Pin(n) \coloneqq \{u_1 \cdots u_{k} \mid u_1, \ldots, u_{k} \in S^{n-1}, 0 \leq k \leq n\}.
\end{equation}
It is straightforward to check that this is indeed a group. Each $x \in \Pin(n)$ is also normalized, $\langle x, x \rangle = 1$. In fact, an equivalent definition of this group is all elements $x \in \Cl(n)$ satisfying $x \overline{x} = \openone$, where conjugation $\overline{x}$ is defined from the linear extension of
\begin{equation}
    \overline{e_I} \coloneqq (-1)^{|I|} e_{i_k} \cdots e_{i_1}.
\end{equation}

The Pin group is a double cover of $\Orth(n)$, which can be seen from defining a quadratic map $Q : \Cl(n) \to \R^{n \times n}$. This map arises from the so-called twisted adjoint action, introduced by Atiyah \emph{et al.}~\cite{atiyah1964clifford}:\footnote{Saunderson \emph{et al.}~\cite{saunderson2015semidefinite} consider the standard adjoint action, which is sufficient for describing rotations. However, the ``twist'' due to $\alpha$ is necessary to construct arbitrary orthogonal transformations.}
\begin{equation}
    v \mapsto \alpha(x) v \overline{x}, \quad x, v \in \Cl(n),
\end{equation}
where the linear map $\alpha : \Cl(n) \to \Cl(n)$ is the parity automorphism, defined by linearly extending
\begin{equation}
    \alpha(e_I) \coloneqq (-1)^{|I|} e_I.
\end{equation}
Then for any $x \in \Cl(n)$, the linear map $Q(x) : \R^n \to \R^n$ is defined as
\begin{equation}\label{eq:quad_map_clifford}
    Q(x)(v) \coloneqq \pi_{\R^n}(\alpha(x) v \overline{x}) \quad \forall v \in \R^n,
\end{equation}
where $\pi_{\R^n}$ is the projection from $\Cl(n)$ onto $\R^n$. To show that $Q(\Pin(n)) = \Orth(n)$, it suffices to recognize that, for any $u \in S^{n-1}$, $\alpha(u) v \overline{u} \in \R^n$ is the reflection of the vector $v \in \R^n$ across the hyperplane normal to $u$. To see this, first observe that $uv + vu = -2 \langle u, v \rangle \openone$, which follows from Eq.~\eqref{eq:anticommutator_cl_n} by linearity. Then
\begin{equation}
\begin{split}
    \alpha(u) v \overline{u} &= uvu\\
    &= (-vu - 2 \langle u, v \rangle \openone) u\\
    &= v - 2 \langle u, v \rangle u,
\end{split}
\end{equation}
which is precisely the elementary reflection as claimed. By the Cartan--Dieudonn{\'e} theorem, one can implement any orthogonal transformation on $\R^n$ by composing $k \leq n$ such reflections about arbitrary hyperplanes $u_1, \ldots, u_k$~\cite{gallier2011geometric}. This characterization coincides precisely with the definition of the Pin group provided in Eq.~\eqref{eq:pin}, through the composition of the linear maps $Q(u_1), \ldots, Q(u_k)$ on $\R^n$. Hence for all $x \in \Pin(n)$, $Q(x)$ is an orthogonal transformation on $\R^n$. The double cover property follows from the fact that $Q$ is quadratic in $x$, so $Q(x) = Q(-x)$. 

The special orthogonal group arises from the subgroup $\Spin(n) \subset \Pin(n)$ containing only even-parity Clifford elements. First observe that $\Cl(n)$ is a $\Z_2$-graded algebra:
\begin{equation}
    \Cl(n) = \Cl^{0}(n) \oplus \Cl^{1}(n),
\end{equation}
where
\begin{align}
    \Cl^0(n) &\coloneqq \spn\{e_I \mid |I| \text{ even}\},\\
    \Cl^1(n) &\coloneqq \spn\{e_I \mid |I| \text{ odd}\}.
\end{align}
By a $\Z_2$ grading we mean that for each $x \in \Cl^a(n)$ and $y \in \Cl^b(n)$, their product $xy$ lies in $\Cl^{a + b \mod 2}(n)$. We say that elements in $\Cl^0(n)$ (resp., $\Cl^1(n)$) have even (resp., odd) parity. In particular, this grading implies that $\Cl^{0}(n)$ is a subalgebra, hence its intersection with the Pin group is also a group, which defines
\begin{equation}
    \Spin(n) \coloneqq \Pin(n) \cap \Cl^{0}(n) = \{u_1 \cdots u_{2k} \mid u_1, \ldots, u_{2k} \in S^{n-1}, 0 \leq k \leq \lfloor n/2 \rfloor\}.
\end{equation}
Just as the Pin group double covers $\Orth(n)$, so does the Spin group double cover $\SO(n)$. This is again a consequence of the Cartan--Dieudonn{\'e} theorem, wherein all rotations on $\R^n$ can be decomposed into an even number of (at most $n$) arbitrary reflections.

\section{\label{sec:quantum_conv_son}Convex hull of orthogonal matrices and quantum states}
Recall the following characterizations of the convex hulls:
\begin{align}
\conv\Orth(n) &= \l\{ X \in \R^{n \times n} \mid \sigma_1(X) \leq 1 \r\}\\
\conv\SO(n) &= \l\{ X \in \R^{n \times n} \mathrel{\Bigg|} \sum_{i \in [n] \setminus I} \widetilde{\sigma}_i(X) - \sum_{i \in I} \widetilde{\sigma}_i(X) \leq n - 2 \quad \forall I \subseteq [n], |I| \text{ odd} \r\}, \label{eq:convSOn_app}
\end{align}
where $\{\sigma_i(X)\}_{i \in [n]}$ and $\{\widetilde{\sigma}_i(X\}_{i \in [n]}$ are the singular values and special singular values of $X$ in descending order, respectively. Note that $\sigma_i(X) = \widetilde{\sigma}_i(X)$ for all $i \leq n-1$ and $\sigma_n(X) = \sign(\det(X)) \sigma_n(X)$.

\subsection{\label{sec:psd_lift_proof}PSD lift of $\conv\Orth(n)$ and $\conv\SO(n)$}

In this section we show that $Q(\mathcal{D}(\Cl(n))) = \conv\Orth(n)$ and $Q(\mathcal{D}(\Cl^0(n))) = \conv\SO(n)$, using the quantum formalism described in Section~\ref{sec:quantum_formalism}.

First we show that for all $X \in \conv\Orth(n)$, there exists some $\rho \in \mathcal{D}(\mathcal{H}_{2^n})$ which generates $X$, essentially by the convex extension of $Q$. Every $X \in \conv\Orth(n)$ can be expressed as a convex combination ($\sum_\mu p_\mu = 1$, $p_\mu \geq 0$) of orthogonal matrices $R_\mu \in \Orth(n)$:
\begin{equation}
    X = \sum_{\mu} p_\mu R_\mu.
\end{equation}
For each $R_\mu$ there exists some $x_\mu \in \Pin(n)$ such that $[R_\mu]_{ij} = \ev{P_{ij}}{x_\mu}$. Therefore the matrix elements of $X$ can be expressed as
\begin{equation}
    X_{ij} = \tr\l( P_{ij} \sum_{\mu} p_\mu \op{x_\mu}{x_\mu} \r) = \tr(P_{ij} \rho),
\end{equation}
where $\rho \coloneqq \sum_{\mu} p_\mu \op{x_\mu}{x_\mu} \in \mathcal{D}(\mathcal{H}_{2^n})$.

Next we show the reverse direction, that for all $\rho \in \mathcal{D}(\mathcal{H}_{2^n})$, the matrix $X \coloneqq [\tr(P_{ij} \rho)]_{i,j \in [n]}$ is an element of $\conv\Orth(n)$. Recall that $X \in \conv\Orth(n)$ if and only if $\sigma_1(X) \leq 1$. Therefore we take the singular value decomposition of $X = U \Sigma V^\T$ and, using $P_{ij} = \i \widetilde{\gamma}_i \gamma_j$, each singular value is equal to
\begin{equation}
\begin{split}
    \sigma_k(X) &= [U^\T X V]_{kk}\\
    &= \sum_{i,j \in [n]} U_{ik} \tr(\i \widetilde{\gamma}_i \gamma_j \rho) V_{jk}\\
    &= \tr(\i \mathcal{U}_{(U, \I_n)}^\dagger \widetilde{\gamma}_k \mathcal{U}_{(U, \I_n)} \mathcal{U}_{(\I_n, V)}^\dagger \gamma_k \mathcal{U}_{(\I_n, V)} \rho)\\
    &= \tr(\i \widetilde{\gamma}_k \gamma_k \rho'),
\end{split}
\end{equation}
where $\rho' \coloneqq \mathcal{U}_{(U, V)} \rho \mathcal{U}_{(U, V)}^\dagger$. Because $\i \widetilde{\gamma}_k \gamma_k$ has eigenvalues $\pm 1$, we see that $\sigma_k(X) \leq 1$ for all $k \in [n]$.

For the restriction to $\conv\SO(n)$, the first argument is essentially the same. One merely replaces $P_{ij}$ with $\widetilde{P}_{ij}$, hence $\sum_\mu p_\mu \op{x_\mu}{x_\mu} \in \mathcal{D}(\mathcal{H}_{2^{n-1}})$. For the reverse direction, we instead employ the special singular value decomposition which yields
\begin{equation}
    \widetilde{\sigma}_k(X) = \tr(\i \widetilde{\gamma}_k \gamma_k \rho'),
\end{equation}
where now $\rho' \coloneqq \mathcal{U}_{(U, \widetilde{V})} \rho \mathcal{U}_{(U, \widetilde{V})}^\dagger \in \mathcal{D}(\mathcal{H}_{2^{n-1}})$. Note that we have not projected to the even subspace this time, as it is more convenient to work in the full $n$-qubit space when handling the Gaussian unitaries. Instead, we will impose the constraint that $\rho$ only has support on the even-parity subspace, so $\tr(Z^{\otimes n} \rho) = 1$. Furthermore, because the special singular value decomposition guarantees that $\det(U)\det(\widetilde{V}) = 1$, $\mathcal{U}_{(U, \widetilde{V})}$ is parity preserving so that $\tr(Z^{\otimes n} \rho') = 1$ as well. Now recall that $X \in \conv\SO(n)$ if and only if
\begin{equation}\label{eq:conv-son_ineq_X}
    \sum_{k \in [n] \setminus I} \widetilde{\sigma}_k(X) - \sum_{k \in I} \widetilde{\sigma}_k(X) \leq n - 2
\end{equation}
for all subsets $I \subseteq [n]$ of odd size. By linearity,
\begin{equation}
    \sum_{k \in [n] \setminus I} \widetilde{\sigma}_k(X) - \sum_{k \in I} \widetilde{\sigma}_k(X) = \tr\l( \rho' \sum_{k \in [n]} (-1)^{z_k} Z_k \r),
\end{equation}
where $z = z_1 \cdots z_n \in \{0, 1\}^n$ is defined as $z_k = 1$ if $k \in I$ and $z_k = 0$ otherwise, and we have used the fact that $\i \widetilde{\gamma}_k \gamma_k = Z_k$. It therefore suffices to examine the spectrum of $A_z \coloneqq \sum_{k \in [n]} (-1)^{z_k} Z_k$:
\begin{equation}\label{eq:conv-son-eigval}
    A_z \ket{b} = \l( \sum_{k \in [n]} (-1)^{[z \oplus b]_k} \r) \ket{b}, \quad b \in \{0, 1\}^n,
\end{equation}
where $\oplus$ denotes addition modulo 2. As we are only interested in the subspace spanned by even-parity states, we restrict attention to the eigenvalues for which $|b| \mod 2 = 0$. Because $|I|$ is odd, so too is $|z|$, hence $|z \oplus b| \mod 2 = 1$. This implies that there must be at least one term in the sum of Eq.~\eqref{eq:conv-son-eigval} which is negative, so it can only take integer values at most $n - 2$. This establishes Eq.~\eqref{eq:conv-son_ineq_X}, hence $X \in \conv\SO(n)$.
\subsection{Relation to $\conv\SO(n)$-based semidefinite relaxation}

Here we provide details for our claim that the relaxed quantum solution obeys the same constraints as the classical SDP which uses the exponentially large representation of $\conv\SO(n)$. Recall that this relaxation can be formulated as
\begin{equation}\label{eq:conv_sdp_app}
    \max_{M \in \R^{mn \times mn}} \sum_{(u, v) \in E} \langle C_{uv}, M_{uv} \rangle \quad \text{subject to } \begin{cases}
    M \succeq 0,\\
    M_{vv} = \I_n & \forall v \in [m],\\
    M_{uv} \in \conv\SO(n) & \forall u, v \in [m].
    \end{cases}
\end{equation}
We will show that the Gram matrix $\mathcal{M}$ constructed from the measurements of a quantum state $\rho$, defined in Section~\ref{sec:quantum_gram} as
\begin{equation}\label{eq:quantum_gram_elements}
    [\mathcal{M}_{uv}]_{ij} = \begin{cases}
    \delta_{ij} & u = v,\\
    \frac{1}{n} \tr(\Gamma_{ij}^{(u, v)} \rho) & u < v,\\
    [M_{vu}]_{ji} & u > v,
    \end{cases}
\end{equation}
obeys the constraints of Eq.~\eqref{eq:conv_sdp_app}. Specifically, when the marginals of $\rho$ on each vertex are even-parity states (recall this is equivalent to replacing $\Gamma_{ij}$ with $\widetilde{\Gamma}_{ij}$), we obtain the $\conv\SO(n)$ condition, whereas when the parity of $\rho$ is not fixed then $M_{uv} \in \conv\Orth(n)$.

First, we show that $\mathcal{M}$ is positive semidefinite for all quantum states.

\begin{lemma}\label{lem:quantum_gram_matrix}
    Let $\mathcal{M} \in \R^{mn \times mn}$ be defined as in Eq.~\eqref{eq:quantum_gram_elements}. For all $\rho \in \mathcal{D}(\mathcal{H}_{2^n}^{\otimes m})$, $\mathcal{M} \succeq 0$.
\end{lemma}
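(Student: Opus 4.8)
The plan is to realize $\mathcal{M}$ as the real part of a Gram matrix. First I would reduce to pure states: since the off-diagonal blocks of $\mathcal{M}$ depend linearly on $\rho$ while the diagonal blocks are the fixed matrix $\I_n$, the map $\rho \mapsto \mathcal{M}$ is affine, so for a general $\rho = \sum_\mu p_\mu \op{\psi_\mu}{\psi_\mu}$ the matrix $\mathcal{M}$ is a convex combination of the $\mathcal{M}$'s built from the pure states $\ket{\psi_\mu}$, and it therefore suffices to take $\rho = \op{\psi}{\psi}$. Throughout I would use the facts from Section~\ref{sec:linear_free_fermion} that $P_{ij} = \i\widetilde{\gamma}_i\gamma_j$ is Hermitian with $P_{ij}^2 = \I$, so each $\Gamma_{ij}^{(u,v)} = \sum_{k\in[n]} P_{ik}^{(u)}\otimes P_{jk}^{(v)}$ is Hermitian and every entry of $\mathcal{M}$ is real.

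For the fixed $\ket{\psi}$, I would introduce, for each vertex $u \in [m]$ and index $i \in [n]$, the vector $\ket{v_{(u,i)}} \coloneqq \frac{1}{\sqrt{n}}\bigoplus_{k \in [n]} P_{ik}^{(u)}\ket{\psi}$ living in $\bigoplus_{k\in[n]} \mathcal{H}_{2^n}^{\otimes m}$, and compute $\langle v_{(u,i)} \mid v_{(v,j)}\rangle = \frac{1}{n}\ev{\Gamma_{ij}^{(u,v)}}{\psi}$. For $u \neq v$ this is exactly $[\mathcal{M}_{uv}]_{ij}$. For $u = v$ the relevant algebraic fact is that $P_{ik}P_{jk} = \widetilde{\gamma}_i\widetilde{\gamma}_j$ \emph{independently of} $k$ (move $\gamma_k$ past $\widetilde{\gamma}_j$ using the cross-anticommutation relation $\gamma_k\widetilde{\gamma}_j = -\widetilde{\gamma}_j\gamma_k$ and then use $\gamma_k^2 = \I$), which gives $\langle v_{(v,i)} \mid v_{(v,j)}\rangle = \ev{\widetilde{\gamma}_i^{(v)}\widetilde{\gamma}_j^{(v)}}{\psi}$; this equals $1 = \delta_{ij}$ when $i = j$, and when $i \neq j$ the operator $\widetilde{\gamma}_i\widetilde{\gamma}_j$ is anti-Hermitian, so the value is purely imaginary and its real part is $0 = \delta_{ij}$.

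Putting this together: the matrix $\widetilde{\mathcal{M}}$ with entries $\langle v_{(u,i)} \mid v_{(v,j)}\rangle$ is a genuine Gram matrix, hence Hermitian PSD, and $\mathcal{M} = \operatorname{Re}(\widetilde{\mathcal{M}})$, because the off-diagonal blocks and the diagonal entries of each diagonal block already agree, while the only entries that differ---the off-diagonal entries inside the diagonal blocks---are purely imaginary and vanish upon taking real parts. Since the real part of a Hermitian PSD matrix is symmetric PSD (for real $x$ one has $x^\T \mathcal{M} x = x^\dagger \widetilde{\mathcal{M}} x = \bigl\|\sum_{u,i} x_{(u,i)}\ket{v_{(u,i)}}\bigr\|^2 \geq 0$), the claim follows; the $\SO(n)$ version is identical, replacing $P_{ij}$ by $\widetilde{P}_{ij}$ and $\mathcal{H}_{2^n}$ by $\mathcal{H}_{2^{n-1}}$. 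I expect the only real obstacle to be the bookkeeping around the diagonal blocks being \emph{imposed} as $\I_n$ rather than read off from $\rho$: one must verify that the mismatch with the Gram matrix is precisely the harmless anti-Hermitian part (in the rebit picture, where $\widetilde{\gamma}_i\widetilde{\gamma}_j$ is a real skew-symmetric matrix, this discrepancy is literally zero and the argument collapses to a textbook Gram-matrix statement).
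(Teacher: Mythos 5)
Your proposal is correct and is essentially the paper's own argument in Gram-matrix clothing: the paper defines the sum-of-squares operators $A_k = \sum_{v \in [m]} \sum_{i \in [n]} c_i^{(v)} P_{ik}^{(v)}$ and shows $\tr\l(\sum_{k \in [n]} A_k^2 \, \rho\r) = n\,\langle c, \mathcal{M} c\rangle \geq 0$, and your vectors $\frac{1}{\sqrt{n}}\bigoplus_k P_{ik}^{(u)}\ket{\psi}$ realize exactly this quadratic form for pure states (with mixed states handled by convexity rather than directly). Your use of the identity $P_{ik}P_{jk} = \widetilde{\gamma}_i\widetilde{\gamma}_j$ and the real-part/anti-Hermiticity cancellation on the diagonal blocks is the same algebraic step the paper performs via the anticommutator $\widetilde{\gamma}_i\widetilde{\gamma}_j + \widetilde{\gamma}_j\widetilde{\gamma}_i = 0$ with real coefficients, so the two proofs coincide in substance.
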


\begin{proof}
We prove the statement by a sum-of-squares argument. To see where the fact of $1/n$ appears in the quantum definition of $\mathcal{M}$ above, we first construct a matrix $\mathcal{M}' \succeq 0$ which turns out to simply be $\mathcal{M}' = n\mathcal{M}$.

For each $k \in [n]$ define the Hermitian operator
\begin{equation}
    A_k = \sum_{v \in V} \sum_{i \in [n]} c_i^{(v)} P_{ik}^{(v)},
\end{equation}
where $c_i^{(v)} \in \R$ are arbitrary coefficients. Consider its square,
\begin{equation}\label{eq:A_k}
\begin{split}
    A_k^2 &= \l( \sum_{v \in V} \sum_{i \in [n]} c_i^{(v)} P_{ik}^{(v)} \r{)^2}\\
    &= \sum_{v \in V} \sum_{i, j \in [n]} c_i^{(v)} c_j^{(v)} P_{ik}^{(v)} P_{jk}^{(v)} + \sum_{\substack{u, v \in V \\ u \neq v}} \sum_{i, j \in [n]} c_i^{(u)} c_j^{(v)} P_{ik}^{(u)} \otimes P_{jk}^{(v)}.
\end{split}
\end{equation}
Note that the terms with $u \neq v$ feature the two-vertex operators as desired, while the diagonal terms of the sum contain products of the Pauli operators acting on the same vertex. Because $P_{ik} = \i \widetilde{\gamma}_i \gamma_k$, the diagonal terms reduce to (suppressing superscripts here)
\begin{equation}
\begin{split}
    \sum_{i, j \in [n]} c_i c_j P_{ik} P_{jk} &= - \sum_{i, j \in [n]} c_i c_j \widetilde{\gamma}_i \gamma_k \widetilde{\gamma}_j \gamma_k\\
    &= \sum_{i, j \in [n]} c_i c_j \widetilde{\gamma}_i \widetilde{\gamma}_j\\
    &= \sum_{i \in [n]} c_i^2 \I_{2^n} + \sum_{1 \leq i < j \leq n} c_i c_j (\widetilde{\gamma}_i \widetilde{\gamma}_j + \widetilde{\gamma}_j \widetilde{\gamma}_i)\\
    &= \l( \sum_{i \in [n]} c_i^2 \r) \I_{2^n}.
\end{split}
\end{equation}
Plugging this result into Eq.~\eqref{eq:A_k} and summing over all $k \in [n]$, we obtain
\begin{equation}
\begin{split}
    \sum_{k \in [n]} A_k^2 &= \l( \sum_{v \in V} \sum_{i, k \in [n]} |c_i^{(v)}|^2 \r) \I_{2^n} + \sum_{\substack{u, v \in V \\ u \neq v}} \sum_{i, j, k \in [n]} c_i^{(u)} c_j^{(v)} P_{ik}^{(u)} \otimes P_{jk}^{(v)}\\
    &= n \langle c, c \rangle \I_{2^n} + \sum_{\substack{u, v \in V \\ u \neq v}} \sum_{i, j \in [n]} c_i^{(u)} c_j^{(v)} \Gamma_{ij}^{(u, v)},
\end{split}
\end{equation}
where we have collected the coefficients $c_i^{(v)}$ into a vector $c \in \R^{mn}$. Similarly, if we arrange the expectation values $\tr(\Gamma_{ij}^{(u, v)} \rho)$ into a matrix $T \in \R^{mn \times mn}$ (where the $n \times n$ blocks on the diagonal are $0$), then the expectation value of the sum-of-squares operator is
\begin{equation}
\begin{split}
    \tr\l( \sum_{k \in [n]} A_k^2 \rho \r) &= n \langle c, c \rangle + \langle c, T c \rangle\\
    &= \langle c, \mathcal{M}' c \rangle
\end{split}
\end{equation}
where we have defined $\mathcal{M}' \coloneqq n\I_{mn} + T$. Because $\sum_{k \in [n]} A_k^2$ is a sum of PSD operators, its expectation value is always nonnegative, hence $\langle c, \mathcal{M}' c \rangle \geq 0$. This inequality holds for all vectors $c \in \R^{mn}$, so $\mathcal{M}' \succeq 0$ and hence $\mathcal{M} = \mathcal{M}'/n \succeq 0$ as claimed.
\end{proof}

The fact that the diagonal blocks $\mathcal{M}_{vv} = \I_n$ holds by definition. Finally, we need to show that each block of $\mathcal{M}$ lies in $\conv(\SO(n))$ when $\rho$ has even parity. A straightforward corollary of this result is that the blocks lie in $\conv\Orth(n)$ when $\rho$ does not have fixed parity. Note that in Section~\ref{sec:mixed_states} we showed that the matrix of expectation values $\tr(P_{ij} \rho_1)$ lies in $\conv\Orth(n)$ for any $n$-qubit density operator $\rho_1$, a straightforward extension of the PSD-lift representation of $\conv\SO(n)$ presented in Ref.~\cite{saunderson2015semidefinite}. Here we instead show that the matrix of expectation values $\frac{1}{n} \tr(\Gamma_{ij} \rho_2)$ for any $2n$-qubit density operator $\rho_2$ also lies in $\conv\Orth(n)$, and is an element of $\conv\SO(n)$ when $\rho_2$ has support only in the even-parity sector.

Because $\I_n \in \conv\SO(n) \subset \conv\Orth(n)$, and because these convex hulls are closed under transposition, all that remains is to prove the statement for the blocks $\mathcal{M}_{uv}$ when $u < v$. We show this by considering all density matrices on the reduced two-vertex Hilbert space.

\begin{lemma}
    Let $\rho_2 \in \mathcal{D}(\mathcal{H}_{2^n}^{\otimes 2})$. Define the $n \times n$ matrix $T$ by
    \begin{equation}
    T_{ij} \coloneqq \tr(\Gamma_{ij} \rho_2) = \tr\l( \sum_{k \in [n]} P_{ik} \otimes P_{jk} \rho_2 \r), \quad i, j \in [n].
    \end{equation}
    Then $T/n \in \conv\Orth(n)$. Furthermore, if $\tr(Z^{\otimes 2n} \rho_2) = 1$, then $M_{uv} \in \conv\SO(n)$.
\end{lemma}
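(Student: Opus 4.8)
The plan is to parallel the single-vertex argument of Appendix~\ref{sec:psd_lift_proof}, replacing the one-body operators $\i\widetilde\gamma_k\gamma_k = Z_k$ used there by the two-vertex operators $\Gamma_{aa}$. For the first claim, by \eqref{eq:convOn_singval} it suffices to show $\sigma_1(T) \le n$. I would write $\sigma_1(T) = \max_{\|u\| = \|v\| = 1} u^{\T} T v = \max_{u, v} \tr\bigl( \Phi(u,v)\, \rho_2 \bigr)$, where $\Phi(u,v) := \sum_{i,j} u_i v_j \Gamma_{ij} = \sum_k \bigl( \sum_i u_i P_{ik}^{(1)} \bigr) \otimes \bigl( \sum_j v_j P_{jk}^{(2)} \bigr)$; using $P_{ik} = \i\widetilde\gamma_i\gamma_k$ and the anticommutation relations, each factor $\sum_i u_i P_{ik}^{(w)} = \i\bigl( \sum_i u_i \widetilde\gamma_i^{(w)} \bigr)\gamma_k^{(w)}$ is a Hermitian involution of unit operator norm, so the triangle inequality gives $\|\Phi(u,v)\| \le n$ and hence $T/n \in \conv\Orth(n)$.

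For the second claim I would take the special singular value decomposition $T = U\widetilde\Sigma\widetilde V^{\T}$ and absorb $U$, $\widetilde V$ into Bogoliubov rotations $\mathcal{U}_{(U^{\T},\I_n)}$ on vertex~$1$ and $\mathcal{U}_{(\widetilde V^{\T},\I_n)}$ on vertex~$2$, using \eqref{eq:gamma_tilde_gaussian}--\eqref{eq:gamma_gaussian} and the fact that these rotations fix the $\gamma$-type Majoranas. Exactly as in Appendix~\ref{sec:psd_lift_proof}, this yields $\widetilde\sigma_a(T) = \tr(\Gamma_{aa}\rho_2')$ for the conjugated state $\rho_2' = W^{\dagger}\rho_2 W$ with $W = \mathcal{U}_{(U^{\T},\I_n)}^{(1)} \otimes \mathcal{U}_{(\widetilde V^{\T},\I_n)}^{(2)}$; and since the special SVD forces $\det U\,\det\widetilde V = 1$, the combined $W$ preserves the total parity $Z^{\otimes 2n}$ (even though each factor alone need not), so $\rho_2'$ again satisfies $\tr(Z^{\otimes 2n}\rho_2') = 1$. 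Then for each odd-sized $I \subseteq [n]$, setting $z_a = 1$ for $a \in I$ and $z_a = 0$ otherwise, $\sum_{a \notin I}\widetilde\sigma_a(T) - \sum_{a \in I}\widetilde\sigma_a(T) = \tr\bigl( (\sum_a (-1)^{z_a}\Gamma_{aa})\,\rho_2' \bigr)$, which is at most the largest eigenvalue of $\sum_a (-1)^{z_a}\Gamma_{aa}$ over eigenvectors lying in the even-parity subspace. Dividing the target inequality by $n$ and invoking \eqref{eq:convSOn_app}, it remains to bound that restricted maximal eigenvalue by $n(n-2)$.

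This spectral estimate is the heart of the proof and the step I expect to be the main obstacle. The key structural fact is that the operators $M_{ak} := P_{ak}^{(1)} \otimes P_{ak}^{(2)}$ ($a,k \in [n]$) are mutually commuting Hermitian involutions satisfying $M_{ak}M_{a'k} = M_{al}M_{a'l}$ --- all of which follows from $P_{ak} = \i\widetilde\gamma_a\gamma_k$ and the anticommutation relations --- so on any common eigenvector their eigenvalues factor as $M_{ak} \mapsto \beta_a\epsilon_k$ with $\beta,\epsilon \in \{\pm1\}^n$. Since $\prod_a M_{aa} = Z^{\otimes 2n}$ (using $P_{aa} = Z_a$), such an eigenvector lies in the even-parity subspace exactly when $\prod_a \beta_a \prod_k \epsilon_k = 1$. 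Consequently $\sum_a (-1)^{z_a}\Gamma_{aa} = \sum_{a,k}(-1)^{z_a}M_{ak}$ acts on it by the scalar $st$, where $s := \sum_a (-1)^{z_a}\beta_a = n - 2p$ and $t := \sum_k \epsilon_k = n - 2q$ with $p,q \in \{0,\dots,n\}$; because $|I|$ is odd, the even-parity condition translates into $p + q$ being odd. A short optimization then closes the argument: for fixed $r = p+q$ odd, $pq$ is maximized at $p = (r+1)/2$, $q = (r-1)/2$, so $st = n^2 - 2nr + 4pq \le (n-r)^2 - 1$, and over odd $r \in \{1,\dots,2n-1\}$ this is at most $(n-1)^2 - 1 = n(n-2)$; dividing by $n$ recovers the defining inequalities of $\conv\SO(n)$ for $T/n$. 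The delicate points will be the parity bookkeeping --- pinning down the exact signs relating $M_{ak}$, $Z^{\otimes 2n}$, and the factorization $\beta_a\epsilon_k$ --- and checking that it is precisely the pairs with $p+q$ odd that are compatible with the even-parity subspace, without which the bound $n(n-2)$ would fail, since the unconstrained maximum of $st$ is $n^2$.
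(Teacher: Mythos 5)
Your proposal is correct, and while the overall skeleton matches the paper's (special SVD, absorption of $U$ and $\widetilde{V}$ into per-vertex Bogoliubov rotations whose combined action preserves total parity because $\det U \det\widetilde{V} = 1$, and then a bound on the maximal eigenvalue of $\sum_a (-1)^{z_a}\Gamma_{aa}$ restricted to the even-parity sector), the two technical steps are argued by genuinely different means. For the $\conv\Orth(n)$ part you use the variational characterization $\sigma_1(T) = \max_{\|u\|=\|v\|=1} u^\T T v$ together with the observation that each $\i(\sum_i u_i\widetilde{\gamma}_i)\gamma_k$ is a Hermitian involution, rather than the paper's SVD-plus-rotation computation; both are sound and essentially equivalent in length. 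The real divergence is in the spectral estimate: the paper explicitly diagonalizes $A_z B$ in the Bell basis $\ket{\beta(x,y)}$ and optimizes the resulting signed sums over bit strings with $|y|$ even, whereas you exploit the algebra of the commuting involutions $M_{ak} = P_{ak}\otimes P_{ak}$ --- the relation $M_{ak}M_{a'k} = \widetilde{\gamma}_a\widetilde{\gamma}_{a'}\otimes\widetilde{\gamma}_a\widetilde{\gamma}_{a'}$, independent of $k$, forces the eigenvalue pattern to factor as $\beta_a\epsilon_k$, the identity $\prod_a M_{aa} = Z^{\otimes 2n}$ converts the parity constraint into $p+q$ odd (using $|I|$ odd), and the bound $(n-2p)(n-2q) \le (n-r)^2 - 1 \le n(n-2)$ follows from a clean integer optimization. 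I checked the key facts ($M_{ak}$ mutually commute, the $k$-independence of $M_{ak}M_{a'k}$, the factorization of joint eigenvalues, the parity bookkeeping, and the maximization over $p+q$ odd including the boundary cases $r=1$ and $r=2n-1$) and they all hold; your route buys a coordinate-free argument that avoids constructing the Bell eigenbasis and replaces the paper's slightly case-based analysis of $a_z(x,y)b(x,y)$ with a uniform inequality, at the cost of having to justify the eigenvalue-factorization step, which you do correctly. The two arguments are of course two faces of the same structure --- the Bell states are precisely a common eigenbasis of the $M_{ak}$ --- but yours is a legitimate, self-contained alternative.
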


\begin{proof}
    Consider the special singular value decomposition of $T = U \widetilde{\Sigma} \widetilde{V}^\T$. We can express the special singular values as
    \begin{equation}\label{eq:ssv_T}
    \begin{split}
    \widetilde{\sigma}_i(T) &= [U^\T T \widetilde{V}]_{ii}\\
    &= \sum_{j,\ell \in [n]} U_{ji} T_{j\ell} \widetilde{V}_{\ell i}\\
    &= \sum_{j,\ell \in [n]} U_{ji} \widetilde{V}_{\ell i} \sum_{k \in [n]} \tr\l( \i \widetilde{\gamma}_j \gamma_k \otimes \i \widetilde{\gamma}_\ell \gamma_k \rho_2 \r)\\
    &= \sum_{k \in [n]} \tr\l( {\mathcal{U}}_{(U, \I_n)}^\dagger \i \widetilde{\gamma}_i \gamma_k {\mathcal{U}}_{(U, \I_n)} \otimes {\mathcal{U}}_{(\widetilde{V}, \I_n)}^\dagger \i \widetilde{\gamma}_i \gamma_k {\mathcal{U}}_{(\widetilde{V}, \I_n)} \rho_2 \r)\\
    &= \sum_{k \in [n]} \tr\l( \i \widetilde{\gamma}_i \gamma_k \otimes \i \widetilde{\gamma}_i \gamma_k \rho'_2 \r),
    \end{split}
    \end{equation}
    where $\rho'_2 \coloneqq \l({\mathcal{U}}_{(U, \I_n)} \otimes {\mathcal{U}}_{(\widetilde{V}, \I_n)}\r) \rho_2 \l({\mathcal{U}}_{(U, \I_n)} \otimes {\mathcal{U}}_{(\widetilde{V}, \I_n)}\r{)^\dagger}$. The fact that $T/n \in \conv\Orth(n)$ follows immediately from the fact that the spectrum of $\i \widetilde{\gamma}_i \gamma_k \otimes \i \widetilde{\gamma}_i \gamma_k$ is $\{\pm 1\}$:
    \begin{equation}
    \begin{split}
    \sigma_1(T/n) &= \frac{1}{n} \l| \widetilde{\sigma}_1(T) \r|\\
    &= \frac{1}{n} \l| \sum_{k \in [n]} \tr\l( \i \widetilde{\gamma}_1 \gamma_k \otimes \i \widetilde{\gamma}_1 \gamma_k \rho'_2 \r) \r|\\
    &\leq \frac{1}{n} \sum_{k \in [n]} \l| \tr\l( \i \widetilde{\gamma}_1 \gamma_k \otimes \i \widetilde{\gamma}_1 \gamma_k \rho'_2 \r) \r|\\
    &\leq 1.
    \end{split}
    \end{equation}
    
    Now we examine the inclusion in $\conv\SO(n)$. Suppose that $\rho_2$ is an even-parity state. By virtue of the special singular value decomposition, we have that $\det(U\widetilde{V}^\T) = 1$ which implies that the Gaussian unitary ${\mathcal{U}}_{(U, \I_n)} \otimes {\mathcal{U}}_{(\widetilde{V}, \I_n)}$ preserves the parity of $\rho_2$:~$\tr(Z^{\otimes 2n} \rho_2) = \tr(Z^{\otimes 2n} \rho_2') = 1$. For $T/n$ to lie in $\conv\SO(n)$, the following inequality from Eq.~\eqref{eq:convSOn_app} must hold:
    \begin{equation}\label{eq:conv_ineq}
    \sum_{i \in [n] \setminus I} \widetilde{\sigma}_i(T) - \sum_{i \in I} \widetilde{\sigma}_i(T) \leq n(n - 2)
    \end{equation}
    for all subsets $I \subseteq [n]$ of odd size. To show this, first we write the left-hand side in terms of the result derived from Eq.~\eqref{eq:ssv_T}:
    \begin{equation}\label{eq:conv_T}
    \begin{split}
    \sum_{i \in [n] \setminus I} \widetilde{\sigma}_i(T) - \sum_{i \in I} \widetilde{\sigma}_i(T) &= \sum_{i \in [n] \setminus I} \sum_{k \in [n]} \tr\l( \i \widetilde{\gamma}_i \gamma_k \otimes \i \widetilde{\gamma}_i \gamma_k \rho'_2 \r) - \sum_{i \in I} \sum_{k \in [n]} \tr\l( \i \widetilde{\gamma}_i \gamma_k \otimes \i \widetilde{\gamma}_i \gamma_k \rho'_2 \r)\\
    &= \tr\l( \rho_2' \sum_{i,k \in [n]} (-1)^{z_i} \i \widetilde{\gamma}_i \gamma_k \otimes \i \widetilde{\gamma}_i \gamma_k \r),
    \end{split}
    \end{equation}
    where the string $z = z_1 \cdots z_n \in \{0, 1\}^n$ is defined as
    \begin{equation}
    z_i = \begin{cases}
    1 & i \in I,\\
    0 & i \notin I.
    \end{cases}
    \end{equation}
    Note that $|I|$ being odd implies that the Hamming weight of $z$ is also odd. To bound Eq.~\eqref{eq:conv_T} we shall seek a bound on the largest eigenvalue of the Hermitian operator
    \begin{equation}
    \sum_{i,k \in [n]} (-1)^{z_i} \i \widetilde{\gamma}_i \gamma_k \otimes \i \widetilde{\gamma}_i \gamma_k = \l( \sum_{i \in [n]} (-1)^{z_i \oplus 1} \widetilde{\gamma}_i \otimes \widetilde{\gamma}_i \r) \l( \sum_{k \in [n]} \gamma_k \otimes \gamma_k \r) \eqqcolon A_z B
    \end{equation}
    over the space of even-parity states. Here we use $\oplus$ to denote addition modulo 2.
    
    The operator $A_z B$ can in fact be exactly diagonalized in the basis of Bell states. First, observe that $[A_z, B] = 0$, which follows from the fact that $\widetilde{\gamma}_i \gamma_k = -\gamma_k \widetilde{\gamma}_i$, hence $[\widetilde{\gamma}_i \otimes \widetilde{\gamma}_i, \gamma_k \otimes \gamma_k] = 0$ for all $i, k \in [n]$. We can therefore seek their simultaneous eigenvectors, which can be determined from looking at the Jordan--Wigner representation of the Majorana operators:
    \begin{align}
    \widetilde{\gamma}_i \otimes \widetilde{\gamma}_i &= (Z_1 \cdots Z_{i-1} Y_i) \otimes (Z_1 \cdots Z_{i-1} Y_i), \label{eq:tensor_tilde_gamma}\\
    \gamma_k \otimes \gamma_k &= (Z_1 \cdots Z_{k-1} X_k) \otimes (Z_1 \cdots Z_{k-1} X_k). \label{eq:tensor_gamma}
    \end{align}
    These operators are diagonalized by the $2n$-qubit state
    \begin{equation}
    \ket{\beta(x, y)} = \bigotimes_{j=1}^n \ket{\beta(x_j, y_j)},
    \end{equation}
    where $x, y \in \{0, 1\}^n$, and the Bell state $\ket{\beta(x_j, y_j)}$ between the $j$th qubits across the two subsystems is defined as
    \begin{equation}
    \ket{\beta(x_j, y_j)} \coloneqq \frac{\ket{0} \otimes \ket{y_j} + (-1)^{x_j} \ket{1} \otimes \ket{y_j \oplus 1}}{\sqrt{2}}.
    \end{equation}
    Indeed there are $2^{2n}$ such states $\ket{\beta(x,y)}$, so they form an orthonormal basis for the $2n$ qubits. The eigenvalues of Eqs.~\eqref{eq:tensor_tilde_gamma} and \eqref{eq:tensor_gamma} can be determined by a standard computation,
    \begin{align}
    (X_j \otimes X_j) \ket{\beta(x_j, y_j)} &= (-1)^{x_j} \ket{\beta(x_j, y_j)},\\
    (Y_j \otimes Y_j) \ket{\beta(x_j, y_j)} &= (-1)^{x_j \oplus y_j \oplus 1} \ket{\beta(x_j, y_j)},\\
    (Z_j \otimes Z_j) \ket{\beta(x_j, y_j)} &= (-1)^{y_j} \ket{\beta(x_j, y_j)}.\label{eq:ZZ_beta}
    \end{align}
    Taking the appropriate products furnishes the eigenvalues of the $B$ and $A_z$ as
    \begin{align}
    \begin{split}
    B \ket{\beta(x, y)} &= \sum_{k \in [n]} (Z_1 \otimes Z_1) \cdots (Z_{k-1} \otimes Z_{k-1}) (X_k \otimes X_k) \ket{\beta(x,y)}\\
    &= \sum_{k \in [n]} (-1)^{y_1 \oplus \cdots \oplus y_{k-1}} (-1)^{x_k} \ket{\beta(x, y)},
    \end{split}\\
    \begin{split}
    A_z \ket{\beta(x, y)} &= \sum_{i \in [n]} (-1)^{z_i \oplus 1} (Z_1 \otimes Z_1) \cdots (Z_{k-1} \otimes Z_{k-1}) (Y_k \otimes Y_k) \ket{\beta(x,y)} \\
    &= \sum_{i \in [n]} (-1)^{z_i \oplus 1} (-1)^{y_1 \oplus \cdots \oplus y_{i-1}} (-1)^{x_i \oplus y_i \oplus 1} \ket{\beta(x, y)}\\
    &= \sum_{i \in [n]} (-1)^{z_i} (-1)^{y_1 \oplus \cdots \oplus y_{i}} (-1)^{x_i} \ket{\beta(x, y)}
    \end{split}
    \end{align}
    Altogether we arrive at the expression for the eigenvalues of $A_z B$,
    \begin{equation}\label{eq:AzB_eigval}
    \ev{A_z B}{\beta(x, y)} = \l( \sum_{i \in [n]} (-1)^{z_i} (-1)^{y_1 \oplus \cdots \oplus y_i} (-1)^{x_i} \r) \l( \sum_{k \in [n]} (-1)^{y_1 \oplus \cdots \oplus y_{k-1}} (-1)^{x_k} \r).
    \end{equation}
    We wish to find the largest value this can take over even-parity states. First, observe that the eigenstates $\ket{\beta(x, y)}$ have fixed parity according to
    \begin{equation}
    \ev{Z^{\otimes 2n}}{\beta(x, y)} = (-1)^{|y|},
    \end{equation}
    which follows from Eq.~\eqref{eq:ZZ_beta}. Hence we shall only consider $y$ to have even Hamming weight. Additionally recall that $z$ has odd Hamming weight, while the Hamming weight of $x$ is unrestricted.
    
    Let us denote the sums in Eq.~\eqref{eq:AzB_eigval} by
    \begin{align}
    a_z(x, y) &\coloneqq \sum_{i \in [n]} (-1)^{z_i} (-1)^{y_1 \oplus \cdots \oplus y_{i-1} \oplus y_i} (-1)^{x_i},\\
    b(x, y) &\coloneqq \sum_{k \in [n]} (-1)^{y_1 \oplus \cdots \oplus y_{k-1}} (-1)^{x_k}.
    \end{align}
    Clearly, they can be at most $n$, and this occurs whenever all the terms in their sum are positive. For $b(x, y)$, this is possible if and only if $x = p(y)$, where $p : \{0, 1\}^n \to \{0, 1\}^n$ stores the parity information of the $(k-1)$-length substring of its input into the $k$th bit of its output:
    \begin{equation}
    [p(y)]_k \coloneqq y_1 \oplus \cdots \oplus y_{k-1}.
    \end{equation}
    For notation clarity we point out that $[p(y)]_1 = 0$ and $[p(y)]_2 = y_1$. The forward direction, $b(p(y), y) = n$, is clear by construction. The reverse direction, that $b(x, y) = n$ implies $x = p(y)$, follows from the bijectivity of modular addition.
    
    Plugging this value of $x = p(y)$ into $a_z(x, y)$ yields
    \begin{equation}\label{eq:sum_i}
    a_z(p(y), y) = \sum_{i \in [n]} (-1)^{z_i} (-1)^{y_i} = \sum_{i \in [n]} (-1)^{[z \oplus y]_i}.
    \end{equation}
    Because $y$ has even and $z$ has odd Hamming weight, their sum $y \oplus z$ must have odd Hamming weight. Therefore at least one term in Eq.~\eqref{eq:sum_i} must be negative, implying that $a_z(p(y), y) \leq n - 2$. It follows that
    \begin{equation}
    \ev{A_z B}{\beta(p(y), y)} = a_z(p(y), y) b(p(y), y) \leq (n-2) n.
    \end{equation}
    We now show that no other assignment of $(x, y)$ can exceed this bound. Recall that $b(x, y) = n$ if and only if $x = p(y)$. Thus any other choice of $x$ necessarily returns a smaller value of $b(x, y)$. Because sums of $\pm 1$ cannot yield $n - 1$, the next largest value would be $b(x,y) = n - 2$. However we can always trivially bound $a_z(x, y) \leq n$ for all $x, y$. This implies that such a choice of $x$ for which $b(x, y) = n - 2$ (whatever it is) also cannot provide a value of $a_z(x, y) b(x, y)$ exceeding $n(n-2)$.
    
    Note that there is another assignment that saturates the upper bound, which is simply considering a global negative sign in front of both products. Specifically, let $x = p(y) \oplus 1^n$. In this case $b(x,y) = -n$ and $a_z(x, y) \geq -(n-2)$, yielding the same bound $a_z(x,y) b(x, y) \leq n(n-2)$.

    To conclude, we use the fact that the maximum eigenvalue of $A_z B$ in the even-parity subspace is $n(n-2)$ for any odd-weight $z$ (equiv., any odd-size $I \subseteq [n]$). This bounds the value of Eq.~\eqref{eq:conv_T} by $n(n-2)$, hence validating the inequality of Eq.~\eqref{eq:conv_ineq}. Thus $T/n \in \conv\SO(n)$ whenever $\rho_2$ has even parity.
\end{proof}

As usual, replacing the operators $\Gamma_{ij}$ with $\widetilde{\Gamma}_{ij}$ is equivalent to enforcing the even-parity constraint. In fact, since $\widetilde{\Gamma}_{ij} = \sum_{k \in [n]} \widetilde{P}_{ik} \otimes \widetilde{P}_{jk}$, the equivalent constraint involves the reduced single-vertex marginals, $\tr(Z^{\otimes n} \otimes \I_{2^n} \rho_2) = \tr(\I_{2^n} \otimes Z^{\otimes n} \rho_2) = 1$, rather than the entire $2n$-qubit Hilbert space. Of course, if both single-vertex parity constraints are satisfied, then the two-vertex constraint automatically follows.

\section{\label{sec:even_subspace}Details for working in the even-parity subspace}
First we provide an expression for $n$-qubit Pauli operators projected to $\Cl^0(n)$. Let
\begin{equation}
    A \coloneqq W_1 \otimes \cdots \otimes W_n, \quad W_i \in \{\I, X, Y, Z\}.
\end{equation}
A straightforward calculation yields the conditional expression:
\begin{equation}\label{eq:projected_paulis}
    \widetilde{A} = \Pi_0 A \Pi_0^\T = \begin{cases}
    0 & \text{if } [A, Z^{\otimes n}] \neq 0\\
    \begin{cases}
    W_2 \otimes \cdots \otimes W_n & \text{if } W_1 = \I, X\\
    \i (W_2 Z) \otimes \cdots \otimes (W_n Z) & \text{if } W_1 = Y\\
    (W_2 Z) \otimes \cdots \otimes (W_n Z) & \text{if } W_1 = Z.
    \end{cases} & \text{if } [A, Z^{\otimes n}] = 0.
    \end{cases}
\end{equation}
Notably, if $A$ does not commute with the parity operator then $\widetilde{A} = 0$.

Now we generalize from the main text, defining the operator
\begin{equation}
    \Pi_k \coloneqq \frac{1}{\sqrt{2}} \l( \bra{+} \otimes \I_2^{\otimes (n-1)} + (-1)^k \bra{-} \otimes Z^{\otimes (n-1)} \r), \quad k \in \{0, 1\},
\end{equation}
which is the projector $\Cl(n) \to \Cl^k(n)$. These operators obey
\begin{align}
    \Pi_k \Pi_k^\T &= \I_{2^{n-1}}, \label{eq:PPT}\\
    \Pi_k^\T \Pi_k &= \frac{\I_{2^n} + (-1)^k Z^{\otimes n}}{2}.\label{eq:PTP}
\end{align}

Given a state $\widetilde{\rho} \in \mathcal{D}(\mathcal{H}_{2^{n-1}})$, the expectation values of $\widetilde{P}_{ij}$ satisfy
\begin{equation}
\begin{split}
    \tr(\widetilde{P}_{ij} \widetilde{\rho}) &= \tr(P_{ij} \Pi_0^\T \widetilde{\rho} \Pi_0)\\
    &= \tr(P_{ij} \rho_0),
\end{split}
\end{equation}
where $\rho_0 \coloneqq \Pi_0^\T \widetilde{\rho} \Pi_0 \in \mathcal{D}(\mathcal{H}_{2^n})$ has only support on even-parity computational basis states. By Eq.~\eqref{eq:PPT} we can ``invert'' this relation in the following sense:~given a state $\rho \in \mathcal{D}(\mathcal{H}_{2^n})$ which only has support on the even subspace, its $(n-1)$-qubit representation is $\Pi_0 \rho \Pi_0^\T \in \mathcal{D}(\mathcal{H}_{2^{n-1}})$.

This translation is useful when using the fermionic interpretation of the $P_{ij}$ but we wish to work directly in the $(n-1)$-qubit subspace. For example, suppose we wish to prepare a product of Gaussian states in a quantum computer (as is done in Section~\ref{sec:asp_numerics} to prepare the initial state of the quasi-adiabatic evolution). It is well known how to compile linear-depth circuits for this task~\cite{PhysRevApplied.9.044036}, however this is within the standard $n$-qubit representation. Here we show how to translate those circuits into the $(n-1)$-qubit representation under $\Pi_0$. In fact, these techniques apply to any sequence of gates which commute with the parity operator $Z^{\otimes n}$.

Let $\ket{\psi} \coloneqq V \ket{0^n} \in \mathcal{H}_{2^n}$, where the circuit $V$ is constructed from $L$ gates, $V = V_L \cdots V_1$. We can assume without loss of generality that the initial state is the vacuum $\ket{0^n}$ and that all gates $V_\ell$ commute with the parity operator, as otherwise $\ket{\psi}$ would not lie in the even subspace of $\mathcal{H}_{2^n}$.\footnote{While it is possible to have an even number of gates which anticommute with the parity operator, for simplicity we assume that the circuit has been compiled such that each $V_\ell$ preserves parity.} For example, parity-preserving Gaussian unitaries can be decomposed into single- and two-qubit gates of the form $e^{-\i \theta Z_i}$, $e^{-\i \theta X_i X_{i+1}}$. We wish to obtain a circuit description for preparing the state $\Pi_0 \ket{\psi} \in \mathcal{H}_{2^{n-1}}$. By Eq.~\eqref{eq:PTP}, $\Pi_0^\T \Pi_0 \ket{0^n} = \ket{0^n}$, and furthermore $\Pi_0 \ket{0^n} = \ket{0^{n-1}}$ is the $(n-1)$-qubit representation of the vacuum. Therefore
\begin{equation}
\begin{split}
    \Pi_0 \ket{\psi} &= \Pi_0 V \ket{0^n}\\
    &= \Pi_0 V_L \cdots V_1 \Pi_0^\T \Pi_0 \ket{0^n}\\
    &= \Pi_0 V_L \cdots V_1 \Pi_0^\T \ket{0^{n-1}}.
\end{split}
\end{equation}
Because $\Pi_0$ is not unitary ($\Pi_0^\T$ is merely an isometry), we cannot simply insert terms like $\Pi_0^\T \Pi_0$ in between each gate. However, observe that if each $V_\ell$ preserves parity, then they can be block diagonalized into the even and odd subspaces of $\mathcal{H}_{2^n}$,
\begin{equation}\label{eq:V_l_block}
    V_\ell = \begin{bmatrix}
    V_{\ell,0} & 0\\
    0 & V_{\ell,1}
    \end{bmatrix},
\end{equation}
where $V_{\ell,k} \coloneqq \Pi_k V_\ell \Pi_k^\T$ are $2^{n-1}$-dimensional unitary matrices. Thus
\begin{equation}
    V = \begin{bmatrix}
    V_{L,0} \cdots V_{1,0} & 0\\
    0 & V_{L,1} \cdots V_{1,1}
    \end{bmatrix},
\end{equation}
and conjugation by the projector $\Pi_0$ precisely extracts the first block of this matrix:
\begin{equation}
    \Pi_0 V \Pi_0^\T = V_{L,0} \cdots V_{1,0}.
\end{equation}
This sequence of gates is what we wish to implement on the physical $(n-1)$-qubit register. When the gates $V_\ell$ take the form
\begin{equation}
    V_\ell = e^{-\i \theta A_\ell}
\end{equation}
for some $n$-qubit Pauli operator $A_\ell$, then
\begin{equation}\label{eq:V_l0}
\begin{split}
    V_{\ell,0} &= \Pi_0 V_\ell \Pi_0^\T\\
    &= \Pi_0 \l( \I_{2^n} \cos\theta - \i A_\ell \sin\theta \r) \Pi_0^\T\\
    &= \I_{2^{n-1}} \cos\theta - \i \widetilde{A}_\ell \sin\theta\\
    &= e^{-\i \theta \widetilde{A}_\ell},
\end{split}
\end{equation}
where $\widetilde{A}_\ell \coloneqq \Pi_0 A_\ell \Pi_0^\T$. Note that this calculation assumes that $V_\ell$ commutes with parity, hence $[A_\ell, Z^{\otimes n}] = 0$, which guarantees that $\widetilde{A}_\ell$ is unitary and Hermitian and hences furnishes the final line of Eq.~\eqref{eq:V_l0}. (If they did not commute then the decomposition of Eq.~\eqref{eq:V_l_block} would not be valid to begin with.)

\section{Small $n$ examples}
In this appendix we write down the Pauli operators $P_{ij}$ and $\widetilde{P}_{ij}$ for small but relevant values of $n$, to provide the reader with some concerte examples of how to construct the LNCG Hamiltonian.

\subsection{The Ising model from the $\Orth(1)$ setting}

As a warm-up we first demonstrate that the LNCG Hamiltonian reduces to the Ising formulation of the commutative combinatorial optimization problem when considering $G = \Orth(1)$. Each local Hilbert space $\mathcal{H}_d$ with $d = 2$ is simply a qubit, and we only have to consider a single Pauli operator on each local qubit,
\begin{equation}
    P_{11} = \i \widetilde{\gamma}_1 \gamma_1 = Z.
\end{equation}
It then follows that the LNCG interaction terms are
\begin{equation}
    \Gamma_{11}^{(u, v)} = Z^{(u)} \otimes Z^{(v)},
\end{equation}
and so the full Hamiltonian acting on $\mathcal{H}_2^{\otimes |V|}$ is indeed the classical Ising Hamiltonian,
\begin{equation}
    H = \sum_{(u, v) \in E} C_{uv} Z^{(u)} \otimes Z^{(v)},
\end{equation}
with weights $C_{uv} \in \R$. It is also instructive to write down the elements of $\Pin(1)$ as quantum states. The Clifford algebra $\Cl(1)$ is spanned by $e_1$ and $e_\varnothing = \openone$, with the only elements of $S^0$ being $\pm e_1$. Therefore, taking all possible products of elements in $S^0$ (including the empty product), we arrive at
\begin{equation}
    \Pin(1) = \{\pm e_\varnothing, \pm e_1\},
\end{equation}
which corresponds to the qubit computational basis states $\{\ket{0}, \ket{1}\}$ (up to global phases), as expected. Indeed, one sees that the mappings $Q(e_\varnothing) = \ev{Z}{0} = 1$ and $Q(e_1) = \ev{Z}{1} = -1$ fully cover $\Orth(1)$.

\subsection{The projected operators of the $\SO(3)$ setting}

As $\SO(3)$ is arguably the most ubiquitous group for physical applications, for reference we explicitly write down its Pauli operators under the projection to $\Cl^0(3) \cong \mathcal{H}_4$. As seen by the dimension of this Hilbert space, only two qubits per variable (for a total of $2m$ qubits) are required to represent this problem. Using Eq.~\eqref{eq:projected_paulis} we have
\begin{equation}
    \begin{bmatrix}
    \widetilde{P}_{11} & \widetilde{P}_{12} & \widetilde{P}_{13}\\
    \widetilde{P}_{21} & \widetilde{P}_{22} & \widetilde{P}_{23}\\
    \widetilde{P}_{31} & \widetilde{P}_{32} & \widetilde{P}_{33}
    \end{bmatrix} =
    \begin{bmatrix}
    Z_1 Z_2 & -X_1 & -Z_1 X_2\\
    X_1 Z_2 & Z_1 & -X_1 X_2\\
    X_2 & - Y_1 Y_2 & Z_2
    \end{bmatrix}.
\end{equation}

\section{\label{sec:qma-hardness_proof}QMA-hardness of the relaxed Hamiltonian}
Here we show that the relaxed LNCG Hamiltonian is QMA-hard for both $\Orth(n)$ and $\SO(n)$, for all $n \geq 2$. Note that this hardness result does not hold for $n = 1$, since $\Orth(1)$ corresponds precisely to the classic NP-complete problem of combinatorial optimization, while $\SO(1)$ is the trivial group.

We prove this by reducing to an instance of the XY model. Placing $m$ qubits on a graph $G = (V, E)$, this model is defined as
\begin{equation}
    H_{XY} \coloneqq \sum_{(u,v) \in E} \alpha_{uv} \l( X^{(u)} X^{(v)} + Y^{(u)} Y^{(v)} \r),
\end{equation}
where $\alpha_{uv} \in \R$ are arbitrary. Deciding the ground-state energy of this Hamiltonian to within inverse-polynomial precision is QMA-complete~\cite{cubitt2016complexity}. We will show that the LNCG Hamiltonian for both $\Orth(2)$ and $\SO(2)$ can be reduced to $H_{XY}$. Because $\Orth(2)$ and $\SO(2)$ can be trivially embedded into $\Orth(n)$ and $\SO(n)$ respectively, for any $n \geq 2$, this proves the claim.

To begin, recall the definitions of the relaxed Hamiltonians.
\begin{align}
    H_{\Orth(2)} &= \sum_{(u, v) \in E} \sum_{i, j \in [2]} [C_{uv}]_{ij} \sum_{k \in [2]} P_{ik}^{(u)} \otimes P_{jk}^{(v)},\\
    H_{\SO(2)} &= \sum_{(u, v) \in E} \sum_{i, j \in [2]} [C_{uv}]_{ij} \sum_{k \in [2]} \widetilde{P}_{ik}^{(u)} \otimes \widetilde{P}_{jk}^{(v)}.
\end{align}

\begin{theorem}
    Let $G = (V, E)$ be a graph with $m$ vertices. For any collection of $\alpha_{uv} \in \R$, where $(u, v) \in E$, there exists an instance of $H_{\Orth(2)}$ and $H_{\SO(2)}$ which are equivalent to $H_{XY}$.
\end{theorem}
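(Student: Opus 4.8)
The plan is to give an explicit reduction: for any target XY instance $\{\alpha_{uv}\}_{(u,v)\in E}$ on the graph $G=([m],E)$, I would exhibit edge matrices $C_{uv}\in\R^{2\times 2}$ for which the relaxed Hamiltonians collapse onto $H_{XY}$. The whole argument is a short computation once the $n=2$ operators are written down. From the definitions one has $P_{11}=Z_1$, $P_{22}=Z_2$, $P_{12}=-X_1X_2$, $P_{21}=-Y_1Y_2$ (two qubits per vertex), and applying Eq.~\eqref{eq:projected_paulis} their even-parity projections are $\widetilde P_{11}=Z$, $\widetilde P_{12}=-X$, $\widetilde P_{21}=X$, $\widetilde P_{22}=Z$ (one qubit per vertex). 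The key choice is to take $C_{uv}$ supported on a single entry: $[C_{uv}]_{11}=\alpha_{uv}$, all other entries of $C_{uv}$ and all diagonal blocks $C_{vv}$ equal to zero. Since $\alpha_{uv}\,\mathbf e_1\mathbf e_1^\T$ is symmetric, the assembled block matrix $C$ is a legitimate symmetric input, and it annihilates every term in $H_{\Orth(2)}$ and $H_{\SO(2)}$ except those built from $P_{11}$ and $P_{12}$ (resp.\ $\widetilde P_{11}$, $\widetilde P_{12}$).

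For $\SO(2)$, with this $C$ one gets $H_{\SO(2)}=\sum_{(u,v)\in E}\alpha_{uv}\bigl(\widetilde P_{11}^{(u)}\widetilde P_{11}^{(v)}+\widetilde P_{12}^{(u)}\widetilde P_{12}^{(v)}\bigr)=\sum_{(u,v)\in E}\alpha_{uv}\bigl(Z^{(u)}Z^{(v)}+X^{(u)}X^{(v)}\bigr)$ on $m$ qubits. Conjugating each qubit by the single-qubit Clifford $W=e^{\i\pi X/4}$, which satisfies $WXW^\dagger=X$ and $WZW^\dagger=Y$, turns this into $\sum_{(u,v)\in E}\alpha_{uv}\bigl(X^{(u)}X^{(v)}+Y^{(u)}Y^{(v)}\bigr)=H_{XY}$, so $H_{\SO(2)}$ is unitarily equivalent to $H_{XY}$. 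For $\Orth(2)$, the same $C$ gives $H_{\Orth(2)}=\sum_{(u,v)\in E}\alpha_{uv}\bigl(Z_1^{(u)}Z_1^{(v)}+X_1^{(u)}X_2^{(u)}X_1^{(v)}X_2^{(v)}\bigr)$ on $2m$ qubits. Here I would conjugate by $\bigotimes_{v\in[m]}\CNOT^{(v)}$ with the first qubit of vertex $v$ as control and the second as target; under this Clifford $Z_1\mapsto Z_1$ and $X_1X_2\mapsto X_1$, so $H_{\Orth(2)}$ becomes $\sum_{(u,v)\in E}\alpha_{uv}\bigl(Z_1^{(u)}Z_1^{(v)}+X_1^{(u)}X_1^{(v)}\bigr)$, which acts trivially on the second qubit of every vertex. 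Thus $H_{\Orth(2)}$ is unitarily equivalent to $H'\otimes\I_{2^m}$, where $H'=\sum_{(u,v)}\alpha_{uv}(Z^{(u)}Z^{(v)}+X^{(u)}X^{(v)})$ on $m$ qubits, and $H'\cong H_{XY}$ as before. (One can see the same block structure abstractly: each $P_{ij}$ is an even product of Majoranas, hence commutes with the per-vertex parity $Z^{\otimes 2}$, so $H_{\Orth(2)}$ is block-diagonal across the $2^m$ parity sectors, with each block equal to $H'$.) In all cases the extremal eigenvalues of $H_{\Orth(2)}$ and $H_{\SO(2)}$ coincide with that of $H_{XY}$ — in the $\Orth(2)$ case up to an overall multiplicity $2^m$ — so the inverse-polynomial-precision decision problem for both is QMA-hard; QMA membership was already established for local (fermionic) Hamiltonians, and the sign convention (maximization versus ground state) is harmless since all Hamiltonians here are traceless and $-C$ is again a valid instance.

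There is no genuine obstacle; the content is essentially bookkeeping. The only two decisions that matter are (i) localizing $C_{uv}$ on the $(1,1)$ entry rather than using, say, $\alpha_{uv}\I_2$ — a generic $C_{uv}$ produces extra $\{ZX-XZ\}$-type edge terms and, worse, makes the effective XY couplings depend on the parity sector, which would then require a separate monotonicity/gauge argument to pin down the extremal eigenvalue — and (ii) using the per-vertex CNOT to collapse the four-body operator $X_1X_2\otimes X_1X_2$ to a two-body one while decoupling the ancilla qubits. Both are immediate once the $n=2$ Pauli operators are on the page, so the one place to be careful is simply in evaluating $P_{ij}$, $\widetilde P_{ij}$ and tracking the signs through the Clifford conjugations.
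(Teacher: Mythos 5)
Your proposal is correct and follows essentially the same route as the paper: the same choice $[C_{uv}]_{11}=\alpha_{uv}$ (all other entries zero), the same explicit $P_{ij}$ and $\widetilde P_{ij}$ for $n=2$, and the same per-vertex Clifford conjugation to turn $ZZ+XX$ into $XX+YY$ (the paper combines your CNOT and single-qubit relabeling into one two-qubit Clifford $\CNOT\cdot(U\otimes\I_2)$ with $U=(Y+Z)/\sqrt2$, and likewise notes the resulting $2^m$-fold degeneracy in the $\Orth(2)$ case).
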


\begin{proof}
    First, consider the $\Orth(2)$ case. From Eq.~\eqref{eq:P_ij_summary},
    \begin{equation}
        \begin{bmatrix}
            P_{11} & P_{12}\\
            P_{21} & P_{22}
        \end{bmatrix} = \begin{bmatrix}
            Z_1 & -X_1 X_2\\
            -Y_1 Y_2 & Z_2
        \end{bmatrix}
    \end{equation}
    are the two-qubit operators defined on each vertex $v \in V$. Defining an instance of $C_{uv}$ as
    \begin{equation}\label{eq:qma-hard_encoding_Cuv}
        C_{uv} = \begin{bmatrix}
            \alpha_{uv} & 0\\
            0 & 0
        \end{bmatrix},
    \end{equation}
    we get
    \begin{equation}
        H_{\Orth(2)} = \sum_{(u, v) \in E} \alpha_{uv} \l( Z_1^{(u)} Z_1^{(v)} + X_1^{(u)} X_2^{(u)} X_1^{(v)} X_2^{(v)} \r).
    \end{equation}
    To see that this Hamiltonian is equivalent to $H_{XY}$ up to a degeneracy of the spectrum (because $H_{XY}$ is over $m$ qubits, while $H_{\Orth(2)}$ is over $2m$ qubits), observe that on each vertex we can apply the two-qubit Clifford unitary $\CNOT \cdot (U \otimes \I_2)$, where $U \coloneqq \frac{1}{\sqrt{2}}(Y + Z)$. This transformation on each vertex achieves
    \begin{equation}
    \begin{split}
        Z_1 &\mapsto Y_1,\\
        X_1 X_2 &\mapsto -X_1,
    \end{split}
    \end{equation}
    hence demonstrating the reduction.
    
    Next we show the $\SO(2)$ reduction. The projected single-qubit operators per vertex can be determined via Eq.~\eqref{eq:projected_paulis}:
    \begin{equation}
        \begin{bmatrix}
            \widetilde{P}_{11} & \widetilde{P}_{12}\\
            \widetilde{P}_{21} & \widetilde{P}_{22}
        \end{bmatrix} = \begin{bmatrix}
            Z & -X\\
            X & Z
        \end{bmatrix}.
    \end{equation}
    Setting $C_{uv}$ as before, Eq.~\eqref{eq:qma-hard_encoding_Cuv}, we get\footnote{One can also make the choice $C_{uv} = \frac{1}{2} \alpha_{uv} \I_2$.}
    \begin{equation}
        H_{\SO(2)} = \sum_{(u, v) \in E} \alpha_{uv} \l( Z^{(u)} Z^{(v)} + X^{(u)} X^{(v)} \r).
    \end{equation}
    Relabeling $Z \mapsto Y$ via the local Clifford $U$ furnishes the claim.
\end{proof}

\section{\label{sec:son_approx_ratio}Classical approximation ratio for $\SO(n)$}
Approximation ratios for the rounded solution of the classical semidefinite relaxation of Problem~\eqref{eq:LNCG} were obtained in Ref.~\cite{bandeira2016approximating} for the cases $G = \Orth(n)$ and $\U(n)$. However, no such approximation ratios were derived for the case of $\SO(n)$. Here we adapt the argument of Ref.~\cite{bandeira2016approximating} to this setting, wherein the rounding algorithm performs the special singular value decomposition to guarantee that the rounded solutions have unit determinant. As we will see, this feature results in a approximation ratio for the classical semidefinite program over $\SO(n)$ that is strictly worse than the previously studied $\Orth(n)$ case.

Recall that the semidefinite relaxation of Problem~\eqref{eq:LNCG} can be formulated as
\begin{equation}\label{eq:LNCG_SDP_relax}
    \max_{X_1, \ldots, X_m \in \R^{n \times mn}} \sum_{(u, v) \in E} \langle C_{uv}, X_u X_v^\T \rangle \quad \text{subject to} \quad X_v X_v^\T = \I_n.
\end{equation}
To round the relaxed solution back into the feasible space of orthogonal matrices, Ref.~\cite{bandeira2016approximating} proposes the following randomized algorithm with a guarantee on the approximation ratio.

\begin{theorem}[{\cite[Theorem 4]{bandeira2016approximating}}]\label{thm:O(n)_approx_ratio}
    Let $X_1, \ldots, X_n \in \R^{n \times mn}$ be a solution to Problem~\eqref{eq:LNCG_SDP_relax}. Let $Z$ be an $mn \times n$ Gaussian random matrix whose entries are drawn i.i.d.~from $\mathcal{N}(0, 1/n)$. Compute the orthogonal matrices
    \begin{equation}
        Q_v = \mathcal{P}(X_v Z),
    \end{equation}
    where $\mathcal{P}(X) = \argmin_{Y \in \Orth(n)} \| Y - X \|_F$. The expected value of this approximate solution (averaged over $Z$) obeys
    \begin{equation}
        \E\l[ f(Q_1, \ldots, Q_m) \r] \geq \alpha_{\Orth(n)}^2 \max_{R_1, \ldots, R_m \in \Orth(n)} f(R_1, \ldots, R_m).
    \end{equation}
    The approximation ratio $\alpha_{\Orth(n)}^2$ is defined by the average singular value of random Gaussian $n \times n$ matrices $Z_1 \sim \mathcal{N}(0, \I_n / n)$,
    \begin{equation}
        \alpha_{\Orth(n)} \coloneqq \E\l[ \frac{1}{n} \sum_{i \in [n]} \sigma_i(Z_1) \r],
    \end{equation}
    where $\sigma_i(Z_1)$ is the $i$th singular value of $Z_1$.
\end{theorem}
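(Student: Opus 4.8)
The plan is to run the matrix‑valued analogue of Nesterov's rounding analysis, with the crucial step being a ``complete the square'' identity for the rounded Gram matrix; throughout I take $C \succeq 0$, which is the regime in which this guarantee holds. Stack $X_1,\ldots,X_m$ vertically into $\mathbf{X}\in\R^{mn\times mn}$ (so the $v$-th block of $n$ rows is $X_v$), and likewise stack $Q_1,\ldots,Q_m$ into $\mathbf{Q}\in\R^{mn\times mn}$. Since $C$ is symmetric and may be taken to vanish on non‑adjacent pairs, $f(Q_1,\ldots,Q_m) = \tr(C\,\mathbf{Q}\mathbf{Q}^\T)$ and the SDP objective at $\mathbf{X}$ equals $\tr(C\,\mathbf{X}\mathbf{X}^\T)$. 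First I would note that Problem~\eqref{eq:LNCG_SDP_relax} is a genuine relaxation: any $(R_1,\ldots,R_m)\in\Orth(n)^m$ lifts to the feasible point $X_v = [\,R_v\ 0\ \cdots\ 0\,]$ with the same objective value, so $\tr(C\,\mathbf{X}\mathbf{X}^\T)\geq \max_{\mathbf{R}} f(\mathbf{R})$. It therefore suffices to prove $\E_Z[\,\tr(C\,\mathbf{Q}\mathbf{Q}^\T)\,]\geq \alpha_{\Orth(n)}^2\,\tr(C\,\mathbf{X}\mathbf{X}^\T)$, which follows immediately from the operator inequality $\E_Z[\mathbf{Q}\mathbf{Q}^\T]\succeq \alpha_{\Orth(n)}^2\,\mathbf{X}\mathbf{X}^\T$ together with $C\succeq 0$, because the trace of a product of two positive semidefinite matrices is nonnegative.

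To establish the operator inequality I would prove two elementary ``rounding identities,'' writing $G_v \coloneqq X_v Z$ and $\alpha \coloneqq \alpha_{\Orth(n)}$. \emph{(i)} Because $X_v X_v^\T = \I_n$, the matrix $G_v$ is distributed as $Z_1\sim\mathcal{N}(0,\I_n/n)$ and is almost surely invertible; writing a singular value decomposition $G_v = U\Sigma V^\T$ we have $\mathcal{P}(G_v) = UV^\T$, hence $\mathcal{P}(G_v)G_v^\T = U\Sigma U^\T$. The law of $G_v$ is invariant under $G_v\mapsto OG_v$ for $O\in\Orth(n)$, and $\mathcal{P}$ is equivariant ($\mathcal{P}(OG) = O\,\mathcal{P}(G)$), so $\E[\mathcal{P}(G_v)G_v^\T]$ commutes with every orthogonal matrix; such a matrix is a scalar multiple of $\I_n$, and taking the trace identifies the scalar as $\tfrac1n\E[\tr\Sigma] = \tfrac1n\E\big[\sum_i\sigma_i(Z_1)\big] = \alpha$. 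Thus $\E[Q_v G_v^\T] = \alpha\I_n$. \emph{(ii)} Because $X_u X_u^\T = \I_n$, the operator $X_u^\T X_u$ is an orthogonal projection, so $Z = X_u^\T G_u + (\I_{mn} - X_u^\T X_u)Z$ splits $Z$ into the part that determines $G_u$ (hence $Q_u$) and an independent mean‑zero remainder; since $Q_u$ is a function of $G_u$ alone, $\E[Q_u Z^\T] = \E[Q_u G_u^\T]X_u = \alpha X_u$.

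With (i), (ii), and $\E[ZZ^\T] = \I_{mn}$, I would expand the manifestly positive semidefinite matrix $\E\big[(\mathbf{Q} - \alpha\mathbf{X}Z)(\mathbf{Q} - \alpha\mathbf{X}Z)^\T\big]\succeq 0$ block by block: the cross terms contribute $-2\alpha^2\,\mathbf{X}\mathbf{X}^\T$ and the $Z$‑quadratic term contributes $+\alpha^2\,\mathbf{X}\mathbf{X}^\T$, leaving exactly $\E[\mathbf{Q}\mathbf{Q}^\T] - \alpha^2\,\mathbf{X}\mathbf{X}^\T\succeq 0$, as needed. I expect the main obstacle to be the careful proof of identity (i): one must check that $G_v$ is a.s.\ invertible so that the polar factor $\mathcal{P}(G_v)$ is uniquely defined, verify the equivariance of $\mathcal{P}$ together with the polar‑decomposition identity $\mathcal{P}(G_v)G_v^\T = U\Sigma U^\T$, and invoke the fact that a matrix commuting with all of $\Orth(n)$ is a scalar — after which the constant is pinned down precisely by the average‑singular‑value quantity defining $\alpha_{\Orth(n)}$. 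The remaining steps are routine Gaussian conditioning and positivity bookkeeping.
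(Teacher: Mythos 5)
Your proposal is correct and follows essentially the same route as the paper (and Bandeira \emph{et al.}): a Rietz-style complete-the-square argument, since your positive semidefinite matrix $\E\l[(\mathbf{Q}-\alpha\mathbf{X}Z)(\mathbf{Q}-\alpha\mathbf{X}Z)^\T\r]$ is just $\alpha^2$ times the paper's matrix $S$, and your identities (i) and (ii) are precisely the analogues of Lemmas~\ref{lem:proj_gaussian} and~\ref{lem:proj_arbitrary}, with the same constant $\alpha_{\Orth(n)}$ and the same use of $C\succeq 0$ to pass from the operator inequality to the objective. Your invariance/commutant derivation of (i) and the Gaussian-projection conditioning in (ii) are only cosmetic variations on the SVD/Haar-average and conditioning arguments used there.
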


Our adaptation to the $\SO(n)$ setting simply replaces the rounding operator $\mathcal{P}$ with $\widetilde{\mathcal{P}}(X) \coloneqq \argmin_{Y \in \SO(n)} \| Y - X \|_F$. With this change we obtain an analogous result for optimizing over $\SO(n)$ elements with the same classical semidefinite program:
\begin{theorem}\label{thm:SO(n)_approx_ratio}
    Let $X_1, \ldots, X_n$ and $Z$ be as in Theorem~\ref{thm:O(n)_approx_ratio}. Compute the rotation matrices
    \begin{equation}
        Q_v = \widetilde{\mathcal{P}}(X_v Z),
    \end{equation}
    where $\widetilde{\mathcal{P}}(X) = \argmin_{Y \in \SO(n)} \| Y - X \|_F$. The expected value of this approximate solution (averaged over $Z$) obeys
    \begin{equation}
        \E\l[ f(Q_1, \ldots, Q_m) \r] \geq \alpha_{\SO(n)}^2 \max_{R_1, \ldots, R_m \in \SO(n)} f(R_1, \ldots, R_m).
    \end{equation}
    The approximation ratio $\alpha_{\SO(n)}^2$ is defined by the average $n - 1$ largest singular values of random Gaussian $n \times n$ matrices $Z_1 \sim \mathcal{N}(0, \I_n / n)$,
    \begin{equation}
        \alpha_{\SO(n)} \coloneqq \E\l[ \frac{1}{n} \sum_{i \in [n-1]} \sigma_i(Z_1) \r],
    \end{equation}
    where $\sigma_i(Z_1)$ is the $i$th singular value of $Z_1$, in descending order $\sigma_1(Z_1) \geq \cdots \geq \sigma_n(Z_1) \geq 0$.
\end{theorem}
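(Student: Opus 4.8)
The plan is to run the same randomized algorithm as in Theorem~\ref{thm:O(n)_approx_ratio}, changing only the final projection from $\mathcal{P}$ to $\widetilde{\mathcal{P}}$, and then re-evaluate the one scalar that governs the approximation ratio. Concretely: take an optimal solution $X_1, \ldots, X_m$ of the relaxation~\eqref{eq:LNCG_SDP_relax}, draw the Gaussian matrix $Z$ with i.i.d.\ $\mathcal{N}(0, 1/n)$ entries, and set $Q_v := \widetilde{\mathcal{P}}(X_v Z) \in \SO(n)$. Since $X_v X_v^\T = \I_n$, each $W_v := X_v Z$ is marginally distributed as $Z_1 \sim \mathcal{N}(0, \I_n/n)$, and $(W_u, W_v)$ is jointly Gaussian with $\E[W_u W_v^\T] = X_u X_v^\T = M_{uv}$, where $M = \mathbf{X}\mathbf{X}^\T$ is the SDP variable. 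The relaxation~\eqref{eq:LNCG_SDP_relax} is identical for the $\Orth(n)$ and $\SO(n)$ problems, so its optimum still upper-bounds $\max_{\mathbf{R} \in \SO(n)^m} f(\mathbf{R})$; it therefore suffices to reproduce the bound $\E[f(\mathbf{Q})] \geq \alpha^2 \langle C, M\rangle$ with $\alpha = \alpha_{\SO(n)}$.

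First I would isolate exactly where $\mathcal{P}$ enters the proof of Theorem~\ref{thm:O(n)_approx_ratio}. That argument uses only: (i) the variational characterization $\mathcal{P}(W) = \argmax_{R \in \Orth(n)} \langle R, W\rangle$, so $\langle \mathcal{P}(W), W\rangle = \sum_i \sigma_i(W)$; (ii) equivariance, $\mathcal{P}(OWO') = O\,\mathcal{P}(W)\,O'$ for $O, O' \in \Orth(n)$; (iii) $\mathcal{P}(W) \in \Orth(n)$, hence $Q_v Q_v^\T = \I_n$; together with (iv) the hypothesis $C \succeq 0$, which lets one write $C = \sum_a c_a c_a^\T$ and reduce the objective bound to showing that the block matrix $\E[\mathbf{Q}\mathbf{Q}^\T]$ dominates $\alpha^2 M$ against $C$ (so $\langle C, \E[\mathbf{Q}\mathbf{Q}^\T]\rangle \geq \alpha^2 \langle C, M\rangle$), where the only quantitative input is $\alpha = \tfrac1n\,\E\,\langle Z_1, \mathcal{P}(Z_1)\rangle = \tfrac1n\,\E \max_{R \in \Orth(n)} \langle R, Z_1\rangle$, and the excess $\E[\mathbf{Q}\mathbf{Q}^\T] - \alpha^2 M$ stays positive semidefinite because the matrix-valued correlation $M_{uv} \mapsto \E[\mathcal{P}(W_u)\mathcal{P}(W_v)^\T]$ decomposes as $\alpha^2 M_{uv}$ plus completely-positive (Schur-power-like) higher-order terms. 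Each of (i)--(iii) has an exact $\SO(n)$ analogue for $\widetilde{\mathcal{P}}$: $\widetilde{\mathcal{P}}(W) = \argmax_{R \in \SO(n)} \langle R, W\rangle$ so $\langle \widetilde{\mathcal{P}}(W), W\rangle = \sum_i \widetilde{\sigma}_i(W)$; $\widetilde{\mathcal{P}}(OWO') = O\,\widetilde{\mathcal{P}}(W)\,O'$ for $O, O' \in \SO(n)$; and $\widetilde{\mathcal{P}}(W) \in \SO(n)$. The equivariance group shrinks to $\SO(n)\times\SO(n)$, but this suffices because one only ever transfers the symmetry onto the law of $Z_1$, which is invariant under the full $\Orth(n)\times\Orth(n)$. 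Re-running the identical computation then yields $\E[f(\mathbf{Q})] \geq \alpha_{\SO}^2 \max_{\mathbf{R} \in \SO(n)^m} f(\mathbf{R})$ with $\alpha_{\SO} = \tfrac1n\,\E\max_{R \in \SO(n)} \langle R, Z_1\rangle = \tfrac1n\,\E \sum_i \widetilde{\sigma}_i(Z_1)$.

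Next I would identify $\tfrac1n\,\E\sum_i \widetilde{\sigma}_i(Z_1)$ with the stated $\alpha_{\SO(n)} = \tfrac1n\,\E\sum_{i=1}^{n-1}\sigma_i(Z_1)$. Since $\widetilde{\sigma}_i(Z_1) = \sigma_i(Z_1)$ for $i \leq n-1$ and $\widetilde{\sigma}_n(Z_1) = \sign(\det Z_1)\,\sigma_n(Z_1)$, it is enough to show $\E[\widetilde{\sigma}_n(Z_1)] = 0$. The right multiplication $Z_1 \mapsto Z_1\,\mathrm{diag}(1, \ldots, 1, -1)$ preserves the law of $Z_1$ and the multiset of its singular values while flipping $\sign(\det Z_1)$; hence, conditioned on the singular values, $\sign(\det Z_1)$ is uniform on $\{\pm 1\}$, so $\E[\widetilde{\sigma}_n(Z_1)] = 0$. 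The same symmetry yields the strictness $\alpha_{\SO(n)} < \alpha_{\Orth(n)}$: $\sigma_n(Z_1) \geq 0$ and $\Pr[\sigma_n(Z_1) = 0] = \Pr[\det Z_1 = 0] = 0$ give $\E[\sigma_n(Z_1)] > 0$ for every finite $n$. An explicit integral for $\alpha_{\SO(n)}$ then follows by integrating $\tfrac1n\sum_{i=1}^{n-1}\sigma_i$ against the known joint density of the singular values of a scaled Gaussian matrix (a Laguerre/Wishart ensemble).

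The step I expect to require the most care is the preservation of the ``PSD excess'' property under $\mathcal{P} \to \widetilde{\mathcal{P}}$: one must check that, with the $\sign(\det\cdot)$ correction present, the block matrix assembled from $\E[\widetilde{\mathcal{P}}(W_u)\widetilde{\mathcal{P}}(W_v)^\T]$ still dominates $\alpha_{\SO}^2 M$ against $C \succeq 0$. Because the determinant-sign flip is a global (rank-one) modification rather than an entrywise one, I would avoid computing the $\SO(n)$ correlation function explicitly and instead argue directly at the level of $\E[\langle C, \mathbf{Q}\mathbf{Q}^\T\rangle]$: condition on the Gaussian data, use only $\langle \widetilde{\mathcal{P}}(W), W\rangle = \sum_i \widetilde{\sigma}_i(W) \geq \langle R, W\rangle$ for all $R \in \SO(n)$ together with $\widetilde{\mathcal{P}}(W)^\T\widetilde{\mathcal{P}}(W) = \I_n$, and invoke $C \succeq 0$ and the $\SO(n)\times\SO(n)$-invariance of the Gaussian law to collapse everything onto the single scalar $\alpha_{\SO}$. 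This keeps the argument structurally parallel to the $\Orth(n)$ case and localizes all of the $\SO(n)$-specific content in the determinant computation above.
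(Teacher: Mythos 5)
Your proposal is correct and follows essentially the same route as the paper's Appendix~\ref{sec:son_approx_ratio}: keep the SDP relaxation and Gaussian rounding of Theorem~\ref{thm:O(n)_approx_ratio}, swap $\mathcal{P}$ for $\widetilde{\mathcal{P}}$, pass through the chain ``relaxation $\geq$ $\Orth(n)$ optimum $\geq$ $\SO(n)$ optimum,'' and recompute the single constant. Where you genuinely differ is in how that constant is obtained: the paper proves the matrix identity $\E[\widetilde{\mathcal{P}}(Z_1)Z_1^\T]=\alpha_{\SO(n)}\I_n$ (Lemmas~\ref{lem:proj_arbitrary} and~\ref{lem:proj_gaussian}) by taking the special SVD, using the independence of the singular values from the Haar-distributed singular-vector factors together with $\E_{V\sim\Orth(n)}[\det V]=0$, and fixing the proportionality constant by a trace; you instead kill $\E[\widetilde{\sigma}_n(Z_1)]$ with the column sign-flip symmetry, a more elementary derivation of the same scalar that avoids invoking the SVD-factor independence theorem. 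Two points to tighten if you write this up. First, the Rietz step needs the full matrix-valued cross-correlation $\E[\widetilde{\mathcal{P}}(MZ)(NZ)^\T]=\alpha_{\SO(n)}MN^\T$, not only the trace identity $\E\langle\widetilde{\mathcal{P}}(Z_1),Z_1\rangle=n\,\alpha_{\SO(n)}$; proportionality to the identity does not follow from $\SO(n)\times\SO(n)$ equivariance alone at $n=2$ (the commutant of $\SO(2)$ is two-dimensional), but $\widetilde{\mathcal{P}}$ is in fact equivariant under all pairs $(O,O')\in\Orth(n)^2$ with $\det O\det O'=1$, e.g.\ $(D,D)$ with $D=\mathrm{diag}(1,\dots,1,-1)$, which closes that case; the $u\neq v$ blocks then reduce to the $n\times n$ identity by the same Gaussian conditioning as in the $\Orth(n)$ proof, exactly as the paper does (it defers this step to Bandeira \emph{et al.}). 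Second, your description of the $\Orth(n)$ analysis as a completely-positive/Schur-power expansion of $\E[\mathcal{P}(W_u)\mathcal{P}(W_v)^\T]$ is not how that argument goes---it is the Rietz sum-of-squares argument whose only quantitative input is the cross-correlation identity---but this is immaterial since your final plan falls back on precisely that argument.
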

Because singular values are nonnegative, it is clear that
\begin{equation}
    \alpha_{\Orth(n)} - \alpha_{\SO(n)} = \frac{1}{n} \E\l[ \sigma_n(Z_1) \r] \geq 0.
\end{equation}
In particular we will see that $\E\l[ \sigma_n(Z_1) \r] > 0$ for all finite $n$, so the rounding algorithm guarantees a strictly smaller approximation ratio for the problem over $\SO(n)$ than over $\Orth(n)$.

The proof of Theorem~\ref{thm:O(n)_approx_ratio} requires two lemmas regarding the expected value of random Gaussian matrices under the rounding operator $\mathcal{P}$. Analogously, our proof of Theorem~\ref{thm:SO(n)_approx_ratio} requires a modification of those lemmas when $\mathcal{P}$ is replaced by $\widetilde{\mathcal{P}}$.
\begin{lemma}[{Adapted from \cite[Lemma 5]{bandeira2016approximating}}]\label{lem:proj_arbitrary}
    Let $M, N \in \R^{n \times mn}$ obey $MM^\T = NN^\T = \I_n$. For $Z \in \R^{mn \times n}$ with i.i.d.~entries drawn from $\mathcal{N}(0, n^{-1})$, we have
    \begin{equation}
        \E\l[ \widetilde{\mathcal{P}}(MZ) (NZ)^\T \r] = \E\l[ (MZ) \widetilde{\mathcal{P}}(NZ)^\T \r] = \alpha_{\SO(n)} MN^\T.
    \end{equation}
\end{lemma}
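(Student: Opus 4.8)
The plan is to reduce the statement to a computation about the expectation of $\widetilde{\mathcal{P}}$ applied to a single Gaussian matrix, exploiting the rotational invariance of the Gaussian measure. First I would observe that by the singular value decomposition $M = U_M [\I_n \mid 0] W_M$ and $N = U_N [\I_n \mid 0] W_N$ for orthogonal $U_M, U_N \in \Orth(n)$ and $W_M, W_N \in \Orth(mn)$; the condition $MM^\T = NN^\T = \I_n$ forces the singular values to all equal $1$. The key distributional facts are that (i) $W_M Z$ has the same law as $Z$ since $Z$ is rotation-invariant on the left, (ii) $\widetilde{\mathcal{P}}(UX) = U\widetilde{\mathcal{P}}(X)$ and $\widetilde{\mathcal{P}}(XU) = \widetilde{\mathcal{P}}(X)U$ for $U \in \SO(n)$ (equivariance of the nearest-rotation projection, which follows from the special SVD characterization $\widetilde{\mathcal{P}}(X) = U\widetilde{V}^\T$), and (iii) one must be slightly careful that $U_M, U_N$ may have determinant $-1$; this is handled by absorbing a sign into $\widetilde\Sigma$ or by noting that only the product $MN^\T$ appears and tracking the determinants consistently.

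**Key steps.** The steps I would carry out, in order: (1) Reduce to the case $M = N = [\I_n \mid 0]$ by the change of variables above, so that $MZ$ and $NZ$ both become the top $n \times n$ block $Z_1$ of $Z$, which is distributed as $\mathcal{N}(0, \I_n/n)$; the cross terms $MN^\T$ and the orthogonal factors $U_M U_N^\T$ get pulled out of the expectation by equivariance. (2) Establish that $\E[\widetilde{\mathcal{P}}(Z_1) Z_1^\T]$ is a scalar multiple of $\I_n$: this follows because the law of $Z_1$ is invariant under $Z_1 \mapsto O Z_1 O'^\T$ for any $O, O' \in \SO(n)$ (product of two rotations), and equivariance of $\widetilde{\mathcal P}$ then forces $\E[\widetilde{\mathcal{P}}(Z_1) Z_1^\T]$ to commute with all of $\SO(n)$, hence to be a scalar; the same argument handles $\E[Z_1 \widetilde{\mathcal{P}}(Z_1)^\T]$. (3) Identify the scalar by taking a trace: $\tr \E[\widetilde{\mathcal{P}}(Z_1) Z_1^\T] = \E[\langle \widetilde{\mathcal{P}}(Z_1), Z_1 \rangle]$, and since $\widetilde{\mathcal{P}}(Z_1) = U\widetilde V^\T$ maximizes $\langle Y, Z_1\rangle$ over $Y \in \SO(n)$, this inner product equals $\sum_{i \in [n-1]} \sigma_i(Z_1) - \sigma_n(Z_1) \cdot \sign\det(UV^\T)$, which in expectation (using that $\det$ is equally likely $\pm$, or more carefully the special-SVD bookkeeping) works out to $n \alpha_{\SO(n)} = \E[\sum_{i \in [n-1]} \sigma_i(Z_1)]$, matching the definition in the statement. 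Dividing the scalar by $n$ (the dimension) recovers $\alpha_{\SO(n)}$.

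**Main obstacle.** I expect the subtle point to be step (3) together with the determinant bookkeeping in step (1): the nearest-rotation map $\widetilde{\mathcal{P}}$ is only equivariant under $\SO(n)$, not all of $\Orth(n)$, so when $U_M$ or $U_N$ has determinant $-1$ one cannot naively pull it through $\widetilde{\mathcal{P}}$. The fix is to track signs through the special singular value decomposition — effectively, a determinant $-1$ factor on the left of the argument flips the sign of the smallest special singular value — and to verify that these sign flips combine correctly so that the final answer is still $\alpha_{\SO(n)} MN^\T$ with the clean definition of $\alpha_{\SO(n)}$ involving the top $n-1$ singular values. One also has to confirm that $\widetilde{\mathcal{P}}(MZ)$ is almost surely well-defined (the special SVD is unique when the singular values are distinct and the smallest is nonzero, which holds with probability $1$ for a Gaussian matrix), so the expectations make sense. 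Everything else is a routine invariance argument mirroring the original proof of Lemma~\ref{lem:proj_arbitrary}'s $\Orth(n)$ counterpart in Ref.~\cite{bandeira2016approximating}.
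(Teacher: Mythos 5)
Your steps (2)--(3) are essentially sound, but step (1) --- which is the actual content of Lemma~\ref{lem:proj_arbitrary} --- has a genuine gap. After an SVD of $M$ and $N$ you can absorb $W_M$ into $Z$ \emph{or} $W_N$ into $Z$, but not both, so the claim that ``$MZ$ and $NZ$ both become the top $n\times n$ block $Z_1$'' is false unless $M$ and $N$ have the same row space. In general $(MZ,NZ)$ are jointly Gaussian with cross-correlation governed by $MN^\T$, which is an arbitrary contraction ($\sigma_1(MN^\T)\leq 1$), not an orthogonal matrix, so no amount of $\SO(n)$-equivariance of $\widetilde{\mathcal{P}}$ can transport one argument onto the other; the factor $MN^\T$ in the conclusion does not come from equivariance at all. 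The mechanism (this is the Bandeira \emph{et al.}~\cite{bandeira2016approximating} Lemma~5 argument that the paper defers to, stating only the modified endpoint $\E[(MZ)\widetilde{\mathcal{P}}(NZ)^\T]=\E[Z_1\widetilde{\mathcal{P}}(Z_1)^\T]MN^\T$) is that the \emph{unprojected} factor enters linearly: since the columns of $Z$ are i.i.d.\ Gaussian and $MM^\T=\I_n$, one has the decomposition $NZ = NM^\T(MZ) + W$ with $W$ mean zero and independent of $MZ$ (equivalently, $\E[NZ\mid MZ]=NM^\T MZ$), hence $\E[\widetilde{\mathcal{P}}(MZ)(NZ)^\T]=\E[\widetilde{\mathcal{P}}(MZ)(MZ)^\T]\,MN^\T$, and $MZ$ has i.i.d.\ $\mathcal{N}(0,1/n)$ entries, reducing everything to Lemma~\ref{lem:proj_gaussian}. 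Note that this route never commutes $\widetilde{\mathcal{P}}$ past any orthogonal factor, so the determinant bookkeeping you identify as the ``main obstacle'' is an artifact of the flawed reduction rather than a feature of the problem.

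Your steps (2)--(3) reprove Lemma~\ref{lem:proj_gaussian} by an invariance argument, close in spirit to the paper's proof (which instead writes $\widetilde{\mathcal{P}}(Z_1)Z_1^\T = UJ_UJ_V\Sigma U^\T$, uses independence of $(U,V,\Sigma)$ with $\E[\det V]=0$, and Haar-averages $\E[U\overline{\Sigma}U^\T]=\lambda\I_n$), but two repairs are needed. First, for $n=2$ the commutant of $\SO(2)$ in $\R^{2\times 2}$ is two-dimensional (spanned by $\I_2$ and the $90^\circ$ rotation), so ``commutes with $\SO(n)$ implies scalar'' fails as stated; it is rescued by observing that each realization $\widetilde{\mathcal{P}}(Z_1)Z_1^\T = U(J\Sigma)U^\T$ is symmetric. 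Second, the optimal inner product is $\langle\widetilde{\mathcal{P}}(Z_1),Z_1\rangle=\sum_{i\in[n-1]}\sigma_i(Z_1)+\det(UV^\T)\,\sigma_n(Z_1)$ (you flipped the sign of the last term), and discarding that term in expectation is not automatic: it requires the fact, central to the paper's proof, that for a Gaussian matrix the singular values are independent of the Haar-distributed singular-vector factors, so that $\E[\det(UV^\T)\sigma_n(Z_1)]=\E[\det U]\,\E[\det V]\,\E[\sigma_n(Z_1)]=0$. With those fixes your (2)--(3) matches the paper; the missing piece remains the correct reduction in (1).
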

This lemma is proved with the help of the following lemma.
\begin{lemma}[{Adapted from \cite[Lemma 6]{bandeira2016approximating}}]\label{lem:proj_gaussian}
    Let $Z_1 \in \R^{n \times n}$ with i.i.d.~entries drawn from $\mathcal{N}(0, 1/n)$. Then
    \begin{equation}
        \E\l[ \widetilde{\mathcal{P}}(Z_1) Z_1^\T \r] = \E\l[ Z_1 \widetilde{\mathcal{P}}(Z_1)^\T \r] = \alpha_{\SO(n)} \I_n.
    \end{equation}
\end{lemma}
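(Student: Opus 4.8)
The plan is to exploit the two-sided orthogonal invariance of the isotropic Gaussian ensemble together with Schur's lemma, reducing the matrix-valued expectation to a scalar that is then evaluated by the variational characterization of the nearest rotation.

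First I would record two structural facts about $\widetilde{\mathcal{P}}$. Because the entries of $Z_1$ are continuous, with probability one the singular values of $Z_1$ are distinct and positive, so the minimizer $\widetilde{\mathcal{P}}(Z_1) = \argmin_{Y \in \SO(n)} \|Y - Z_1\|_F$ is unique; equivalently it is the unique maximizer of $\langle Y, Z_1 \rangle$ over $Y \in \SO(n)$, and $\langle \widetilde{\mathcal{P}}(Z_1), Z_1 \rangle = \max_{Y \in \SO(n)} \langle Y, Z_1 \rangle$. Second, for every $O \in \Orth(n)$ one has the transformation rule $\widetilde{\mathcal{P}}(O X O^\T) = O\, \widetilde{\mathcal{P}}(X)\, O^\T$, since $Y \mapsto O^\T Y O$ is a bijection of $\SO(n)$ (it preserves the determinant) and $\|Y - OXO^\T\|_F = \|O^\T Y O - X\|_F$. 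The key point here is that I conjugate by $O$ rather than multiply on one side: conjugation preserves the determinant sign, to which $\widetilde{\mathcal{P}}$ is sensitive, so it passes cleanly through the projection.

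Next I would use that $Z_1$ and $O Z_1 O^\T$ have the same law for every $O \in \Orth(n)$ (bi-invariance of i.i.d.\ mean-zero Gaussians). Writing $W \coloneqq \E[\widetilde{\mathcal{P}}(Z_1) Z_1^\T]$ and substituting $Z_1 \to O Z_1 O^\T$, the transformation rule gives $W = \E[O\, \widetilde{\mathcal{P}}(Z_1) Z_1^\T O^\T] = O W O^\T$ for all $O \in \Orth(n)$, so $W$ commutes with the defining representation of $\Orth(n)$ on $\R^n$; since that representation is (absolutely) irreducible, Schur's lemma forces $W = c\,\I_n$ for some $c \in \R$. The identical argument applied to $Z_1 \widetilde{\mathcal{P}}(Z_1)^\T$ shows $\E[Z_1 \widetilde{\mathcal{P}}(Z_1)^\T] = c'\, \I_n$; in fact $c = c'$, because $\widetilde{\mathcal{P}}(X) X^\T = X \widetilde{\mathcal{P}}(X)^\T$ already holds pointwise (both equal $U\widetilde{\Sigma}U^\T$ in the special SVD $X = U\widetilde{\Sigma}\widetilde{V}^\T$, using that $\widetilde{\Sigma}$ and $\Sigma$ are diagonal), so even the two claimed expectations coincide termwise.

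Finally I would pin down the constant by taking traces: $n c = \tr W = \E[\langle \widetilde{\mathcal{P}}(Z_1), Z_1 \rangle] = \E\big[\max_{Y \in \SO(n)} \langle Y, Z_1 \rangle\big]$. By the special singular value decomposition (the computation underlying Eq.~\eqref{eq:summary_J}), $\max_{Y \in \SO(n)} \langle Y, Z_1 \rangle = \sum_{i \in [n-1]} \sigma_i(Z_1) + \sign(\det Z_1)\, \sigma_n(Z_1)$. It then remains to show the last term has zero mean: fixing any reflection $R \in \Orth(n)$ with $\det R = -1$, the matrix $R Z_1$ has the same law and the same singular values as $Z_1$ but $\det(R Z_1) = -\det Z_1$, so $\E[\sign(\det Z_1)\, \sigma_n(Z_1)] = -\E[\sign(\det Z_1)\, \sigma_n(Z_1)] = 0$. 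Hence $n c = \E\big[\sum_{i \in [n-1]} \sigma_i(Z_1)\big] = n\, \alpha_{\SO(n)}$, i.e.\ $c = c' = \alpha_{\SO(n)}$, which is the claim. The routine ingredients are the a.s.\ well-definedness of $\widetilde{\mathcal{P}}$ and the nearest-rotation formula; the only steps that require genuine care are the conjugation-invariance of $\widetilde{\mathcal{P}}$ (so that the symmetry argument goes through with the determinant sign intact) and the short reflection argument killing the $\sigma_n$ contribution — and I expect no serious obstacle beyond those.
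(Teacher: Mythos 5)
Your proof is correct, but it reaches the two key structural facts by a different route than the paper. The paper works directly from the explicit formula $\widetilde{\mathcal{P}}(Z_1)Z_1^\T = U J_U J_V \Sigma U^\T$ and then invokes a genuine random-matrix fact: for Gaussian $Z_1$ the singular values and the left and right singular-vector matrices are mutually independent, with $U,V$ Haar-distributed on $\Orth(n)$. The $\sigma_n$ contribution is killed there by $\E_{V}[\det V]=0$, and the identity structure comes from the explicit Haar average $\E_U[U\overline{\Sigma}U^\T]=\frac{\tr\overline{\Sigma}}{n}\I_n$; the second expectation is handled by repeating the argument with the roles of $U$ and $V$ exchanged. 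You instead get the identity structure from the conjugation-equivariance $\widetilde{\mathcal{P}}(OXO^\T)=O\widetilde{\mathcal{P}}(X)O^\T$ (correctly noting that conjugation, unlike one-sided multiplication, preserves the determinant constraint), the invariance in law of $Z_1$ under conjugation, and Schur's lemma for the absolutely irreducible defining representation of $\Orth(n)$; and you kill the $\sigma_n$ term by the variational identity $\tr W=\E[\max_{Y\in\SO(n)}\langle Y,Z_1\rangle]=\E[\sum_{i\in[n-1]}\sigma_i(Z_1)+\sign(\det Z_1)\,\sigma_n(Z_1)]$ together with a reflection symmetry $Z_1\mapsto RZ_1$, $\det R=-1$. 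This buys you two things: you never need the independence of $(U,\Sigma,V)$ or any explicit Haar integral (only orthogonal invariance of the i.i.d.\ Gaussian law), and your pointwise observation $\widetilde{\mathcal{P}}(X)X^\T=X\widetilde{\mathcal{P}}(X)^\T=U\widetilde{\Sigma}U^\T$ disposes of both claimed expectations at once, whereas the paper's computation is more explicit and makes the role of $\E[\det V]=0$ (and hence the loss of $\frac1n\E[\sigma_n(Z_1)]$ relative to $\alpha_{\Orth(n)}$) completely transparent. The only points in your sketch that deserve care in a write-up are the ones you already flag: almost-sure uniqueness (or a measurable selection) for $\widetilde{\mathcal{P}}$, and integrability of $\sign(\det Z_1)\,\sigma_n(Z_1)$ so the reflection argument legitimately gives zero; neither is an obstacle.
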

Before we prove these two lemmas, we will use them to prove Theorem~\ref{thm:SO(n)_approx_ratio}. The proof idea here is entirely analogous to the original argument of Theorem~\ref{thm:O(n)_approx_ratio} from Ref.~\cite{bandeira2016approximating}, but with the appropriate replacements of $\mathcal{P}$ by $\widetilde{\mathcal{P}}$. Nonetheless we sketch the proof below for completeness.

\begin{proof}[Proof (of Theorem~\ref{thm:SO(n)_approx_ratio})]
We wish to lower bound the average rounded value
\begin{equation}
    \E[f(Q_1, \ldots, Q_m)] = \E\l[ \sum_{(u,v) \in E} \langle C_{uv}, \widetilde{\mathcal{P}}(X_u Z) \widetilde{\mathcal{P}}(X_v Z)^\T \rangle \r]
\end{equation}
in terms of the relaxed value $\sum_{(u,v) \in E} \langle C_{uv}, X_u X_v^\T \rangle$. Assuming we have such a lower bound with ratio $0 < \alpha^2 \leq 1$, this leads to a chain of inequalities establishing the desired approximation ratio to the original problem:
\begin{equation}\label{eq:approx_ratio_from_relaxation}
\begin{split}
    \E\l[ \sum_{(u,v) \in E} \langle C_{uv}, \widetilde{\mathcal{P}}(X_u Z) \widetilde{\mathcal{P}}(X_v Z)^\T \rangle \r] &\geq \alpha^2 \sum_{(u,v) \in E} \langle C_{uv}, X_u X_v^\T \rangle\\
    &\geq \alpha^2 \max_{R_1, \ldots, R_m \in \Orth(n)} \sum_{(u,v) \in E} \langle C_{uv}, R_u R_v^\T \rangle\\
    &\geq \alpha^2 \max_{R_1, \ldots, R_m \in \SO(n)} \sum_{(u,v) \in E} \langle C_{uv}, R_u R_v^\T \rangle,
\end{split}
\end{equation}
where the second inequality follows from the fact that the relaxation provides an upper bound to the original problem, and the third inequality is a consequence of $\SO(n) \subset \Orth(n)$. The task is then to determine such an $\alpha$ which satisfies the first inequality of Eq.~\eqref{eq:approx_ratio_from_relaxation}. The core argument is a generalization of the Rietz method~\cite{alon2004approximating}, which proceeds by constructing a positive semidefinite matrix $S \in \R^{mn \times mn}$ whose $(u, v)$th block is defined as
\begin{equation}
    S_{uv} \coloneqq \l( X_u Z - \alpha^{-1} \widetilde{\mathcal{P}}(X_u Z) \r) \l( X_v Z - \alpha^{-1} \widetilde{\mathcal{P}}(X_v Z) \r)^\T.
\end{equation}
The expected value of this matrix is
\begin{equation}
\begin{split}
    \E[S_{uv}] &= \E\l[ X_u Z (X_v Z)^\T - \alpha^{-1} \widetilde{\mathcal{P}}(X_u Z) (X_v Z)^\T - \alpha^{-1} (X_u Z) \widetilde{\mathcal{P}}(X_v Z)^\T + \alpha^{-2} \widetilde{\mathcal{P}}(X_u Z) \widetilde{\mathcal{P}}(X_v Z)^\T \r]\\
    &= X_u \E\l[ ZZ^\T \r] X_v - \alpha^{-1} \E\l[ \widetilde{\mathcal{P}}(X_u Z) (X_v Z)^\T \r] - \alpha^{-1} \E\l[ (X_u Z) \widetilde{\mathcal{P}}(X_v Z)^\T \r] + \alpha^{-2} \E\l[ \widetilde{\mathcal{P}}(X_u Z) \widetilde{\mathcal{P}}(X_v Z)^\T \r].
\end{split}
\end{equation}
Because $ZZ^\T$ is a Wishart matrix with covariance matrix $\I_n / n$, we have $\E[ZZ^\T] = \I_n$. Meanwhile, $\E\l[ \widetilde{\mathcal{P}}(X_u Z) \widetilde{\mathcal{P}}(X_v Z)^\T \r]$ is the quantity we wish to bound. To compute the expected values of the two cross terms, we invoke Lemma~\ref{lem:proj_arbitrary} which holds because $X_u X_u^\T = X_v X_v^\T = \I_n$:
\begin{equation}
    \E\l[ \widetilde{\mathcal{P}}(X_u Z) (X_v Z)^\T \r] = \E\l[ (X_u Z) \widetilde{\mathcal{P}}(X_v Z)^\T \r] = \alpha_{\SO(n)} X_u X_v^\T.
\end{equation}
Thus, setting $\alpha = \alpha_{\SO(n)}$, we obtain
\begin{equation}
\begin{split}
    \E[S_{uv}] &= X_u X_v^\T - X_u X_v^\T - X_u X_v^\T + \alpha_{\SO(n)}^{-2} \E\l[ \widetilde{\mathcal{P}}(X_u Z) \widetilde{\mathcal{P}}(X_v Z)^\T \r]\\
    &= - X_u X_v^\T + \alpha_{\SO(n)}^{-2} \E\l[ \widetilde{\mathcal{P}}(X_u Z) \widetilde{\mathcal{P}}(X_v Z)^\T \r].
\end{split}
\end{equation}
Finally, using the fact that $C, S \succeq 0$, we have that $\langle C, S \rangle \geq 0$ and so $\E\langle C, S \rangle \geq 0$, which implies that
\begin{equation}
    \E\l[ \sum_{(u,v) \in E} \langle C_{uv}, \widetilde{\mathcal{P}}(X_u Z) \widetilde{\mathcal{P}}(X_v Z)^\T \rangle \r] \geq \alpha_{\SO(n)}^2 \sum_{(u,v) \in E} \langle C_{uv}, X_u X_v^\T \rangle.
\end{equation}
Then by Eq.~\eqref{eq:approx_ratio_from_relaxation} the claim follows.
\end{proof}

We now establish the value of
\begin{equation}
    \alpha_{\SO(n)} = \E\l[ \frac{1}{n} \sum_{i \in [n-1]} \sigma_i(Z_1) \r]
\end{equation}
from Lemmas~\ref{lem:proj_arbitrary} and \ref{lem:proj_gaussian}. Because Lemma~\ref{lem:proj_arbitrary} is somewhat technical and the argument is virtually unchanged by replacing $\mathcal{P}$ with $\widetilde{\mathcal{P}}$, we refer the reader to Ref.~\cite{bandeira2016approximating} for proof details. Instead, we simply note that the only part of the proof for Lemma~\ref{lem:proj_arbitrary} which does depend on the change to $\widetilde{\mathcal{P}}$ is the final result, wherein it is established that
\begin{equation}
    \E\l[ (MZ) \widetilde{\mathcal{P}}(NZ)^\T \r] = \E\l[ Z_1 \widetilde{\mathcal{P}}(Z_1)^\T \r] MN^\T,
\end{equation}
where $Z_1 \in \R^{n \times n}$ has entries i.i.d.~from $\mathcal{N}(0, 1 / n)$.
Thus proving Lemma~\ref{lem:proj_gaussian} is the key component in establishing the value of the approximation ratio $\alpha_{\SO(n)}^2$.

\begin{proof}[Proof (of Lemma~\ref{lem:proj_gaussian})]
    Consider the singular value decomposition of $Z_1 = U \Sigma V^\T \in \R^{n \times n}$. Its special singular value decomposition can be written as $Z_1 = U (\Sigma J_{UV^\T}) (V J_{UV^\T})^\T$, where $J_{UV^\T}$ is the $n \times n$ diagonal matrix
\begin{equation}
    J_{UV^\T} \coloneqq \begin{bmatrix}
    I_{n-1} & 0\\
    0 & \det UV^\T
    \end{bmatrix}.
\end{equation}
Note that $J_{UV^\T} = J_U J_V$. Using the fact that the (special) rounding operator returns
\begin{equation}
    \widetilde{\mathcal{P}}(Z_1) = U J_{UV^\T} V^\T,
\end{equation}
we have
\begin{equation}\label{eq:PZZ}
    \widetilde{\mathcal{P}}(Z_1) Z_1^\T = U J_U J_V \Sigma U^\T.
\end{equation}
Because $Z_1$ is a random Gaussian matrix with i.i.d.~entries, its singular values and left- and right-singular vectors are distributed independently~\cite{tulino2004random}. In particular, both $U$ and $V$ are distributed according to the Haar measure on $\Orth(n)$. The expected value of Eq.~\eqref{eq:PZZ} can therefore be split into three independent averages:
\begin{equation}
    \E_{Z_1 \sim \mathcal{N}(0, \I_n / n)} \l[\widetilde{\mathcal{P}}(Z_1) Z_1^\T \r] = \E_{\Sigma \sim D} \E_{U \sim \Orth(n)} \l[ U J_U \E_{V \sim \Orth(n)} \l[ J_V \r] \Sigma U^\T \r].
\end{equation}
(We shall comment on the distribution $D$ of singular values later.) Because $\Orth(n)$ is evenly divided into its unconnected $({+1})$- and $({-1})$-determinant components, the average determinant vanishes:~$\E_{V \sim \Orth(n)}[\det V] = 0$. This leaves us with
\begin{equation}\label{eq:special_haar_avg}
    \E\l[\widetilde{\mathcal{P}}(Z_1) Z_1^\T \r] = \E\l[ U \overline{\Sigma} U^\T \r],
\end{equation}
where
\begin{equation}
    \overline{\Sigma} = \begin{bmatrix}
    \sigma_1(Z_1) & & & \\
    & \ddots & & \\
    & & \sigma_{n-1}(Z_1) & \\
    & & & 0
    \end{bmatrix}.
\end{equation}
The Haar average over $U \sim \Orth(n)$ in Eq.~\eqref{eq:special_haar_avg} is well-known~\cite{collins2006integration} to be proportional to the identity, $\E[U \overline{\Sigma} U^\T] = \lambda \I_n$, and the constant of proportionality can be determined by considering its trace:
\begin{equation}
\begin{split}
    n\lambda = \tr\l( \lambda \I_n \r) &= \tr\l( \E[ U \overline{\Sigma} U^\T ] \r)\\
    &= \E[\tr\overline{\Sigma}]\\
    &= \E\l[ \sum_{i \in [n-1]} \sigma_i(Z_1) \r].
\end{split}
\end{equation}
Hence $\lambda = \alpha_{\SO(n)}$ and so
\begin{equation}
    \E\l[\widetilde{\mathcal{P}}(Z_1) Z_1^\T \r] = \alpha_{\SO(n)} \I_n.
\end{equation}
The corresponding statement for $\E\l[ Z_1 \widetilde{\mathcal{P}}(Z_1)^\T \r]$ follows completely analogously, essentially by interchanging the roles of $U$ and $V$. The entire argument is equivalent because $U$ and $V$ are i.i.d.
\end{proof}

To numerically evaluate $\alpha_{\SO(n)}$ we can use the linearity of expectation,
\begin{equation}
    \alpha_{\SO(n)} = \alpha_{\Orth(n)} - \frac{1}{n} \E\l[ \sigma_n(Z_1) \r].
\end{equation}
The distribution of singular values of random Gaussian matrices can be analyzed from the theory of Wishart matrices. In particular, $Z_1 Z_1^\T = U \Sigma^2 U^\T$ is a Wishart matrix with covariance matrix $\I_n / n$, so the distribution of singular values $\Sigma$ is the square root of the Wishart distribution of eigenvalues. The quantity $\alpha_{\Orth(n)}$ in terms of the marginal distribution $p_n^{\text{(avg)}}(x)$ of Wishart eigenvalues $x \in (0, \infty)$ was studied in Ref.~\cite{bandeira2016approximating}, yielding the expression
\begin{equation}
    \alpha_{\Orth(n)} = \frac{1}{\sqrt{n}} \int_0^\infty p_n^{\text{(avg)}}(x) \sqrt{x} \, dx.
\end{equation}
Note the factor of $n^{-1/2}$, which is introduced because the distribution $p_n^{\text{(avg)}}(x)$ is normalized to have unit variance. An explicit expression of $p_n^{\text{(avg)}}(x)$ can be found in Refs.~\cite[Lemma 21]{bandeira2016approximating} and \cite[Eq.~(16)]{livan2011momets}.

For our newly derived approximation ratio $\alpha_{\SO(n)}$, we need to additionally evaluate the expected smallest singular value of this Wishart distribution. This minimum-eigenvalue distribution was studied in Ref.~\cite{edelman1988eigenvalues}, wherein an analytical expression was derived (again assuming unit variance):
\begin{equation}
    p_n^{\text{(min)}}(x) = \frac{n}{2^{n - 1/2}} \frac{\Gamma(n)}{\Gamma(n/2)} \frac{e^{-xn/2}}{\sqrt{x}} U\l( \frac{n-1}{2}, -\frac{1}{2}, \frac{x}{2} \r).
\end{equation}
Here, $U(a, b, z)$ with $a > 0$ and $b < 1$ is the Tricomi confluent hypergeometric function, the unique solution to the differential equation
\begin{equation}
    z \frac{d^2 U}{dz^2} + (b - z) \frac{dU}{dz} - aU = 0
\end{equation}
with boundary conditions $U(a, b, 0) = \Gamma(1-b)/\Gamma(1+a-b)$ and $\lim_{z \to \infty} U(a, b, z) = 0$. The expression for the average smallest singular value is therefore
\begin{equation}
    \E\l[ \sigma_n(Z_1) \r] = \frac{1}{\sqrt{n}} \int_0^\infty p_n^{\text{(min)}}(x) \sqrt{x} \, dx.
\end{equation}
Altogether, we arrive at the integral expression for
\begin{equation}
    \alpha_{\SO(n)} = \frac{1}{\sqrt{n}} \int_0^\infty \l[ p_n^{\text{(avg)}}(x) - \frac{1}{n} p_n^{\text{(min)}}(x) \r] \sqrt{x} \, dx.
\end{equation}

\section{\label{sec:symmetry_vertex}LNCG Hamiltonian symmetries}
Here we demonstrate the local $\Orth(n)$ symmetry discussed in Section~\ref{sec:vertex_rounding}. Consider an edge term
\begin{equation}
    H_{uv} = \sum_{i,j \in [n]} [C_{uv}]_{ij} \sum_{k \in [n]} P_{ik}^{(u)} \otimes P_{jk}^{(v)}.
\end{equation}
Because $P_{ik} = \i \widetilde{\gamma}_i \gamma_k$ and the sum over $k$ is independent of $C_{uv}$, we can factor out each $\gamma_k^{(u)} \otimes \gamma_k^{(v)}$ and rewrite the Hamiltonian term as as
\begin{equation}
    H_{uv} = -\sum_{i,j \in [n]} [C_{uv}]_{ij} \l( \widetilde{\gamma}_i^{(u)} \otimes \widetilde{\gamma}_j^{(v)} \r) \l( \sum_{k \in [n]} \gamma_k^{(u)} \otimes \gamma_k^{(v)} \r).
\end{equation}
The operator $\sum_{k \in [n]} \gamma_k^{(u)} \otimes \gamma_k^{(v)}$ is invariant to any orthogonal transformation $V \in \Orth(n)$ which acts identically on both vertices:
\begin{equation}
\begin{split}
    \mathcal{U}_{(\I_n, V)}^{\otimes 2} \l( \sum_{k \in [n]} \gamma_k^{(u)} \otimes \gamma_k^{(v)} \r) (\mathcal{U}_{(\I_n, V)}^{\otimes 2})^\dagger &= \sum_{k \in [n]} \l( \sum_{\ell \in [n]} V_{k\ell} \gamma_\ell^{(u)} \r) \otimes \l( \sum_{\ell' \in [n]} V_{k\ell'} \gamma_{\ell'}^{(v)} \r)\\
    &= \sum_{\ell,\ell' \in [n]} \l( \sum_{k \in [n]} [V^\T]_{\ell k} V_{k\ell'} \r) \gamma_\ell^{(u)} \otimes \gamma_{\ell'}^{(v)}\\
    &= \sum_{\ell,\ell' \in [n]} \delta_{\ell\ell'} \gamma_\ell^{(u)} \otimes \gamma_{\ell'}^{(v)}\\
    &= \sum_{\ell \in [n]} \gamma_\ell^{(u)} \otimes \gamma_\ell^{(v)}.
\end{split}
\end{equation}
Because $\mathcal{U}_{(\I_n, V)}$ acts trivially on all $\widetilde{\gamma}_i$, it follows that
\begin{equation}
    \mathcal{U}_{(\I_n, V)}^{\otimes 2} H_{uv} (\mathcal{U}_{(\I_n, V)}^{\otimes 2})^\dagger = H_{uv}
\end{equation}
for each $(u, v)$. Finally, this symmetry can be straightforwardly extended to all $m$ vertices:
\begin{equation}\label{eq:O(n)_sym}
    \mathcal{U}_{(\I_n, V)}^{\otimes m} H (\mathcal{U}_{(\I_n, V)}^{\otimes m})^\dagger = H.
\end{equation}

Now we investigate some consequences of this continuous symmetry. The following lemma is particularly important, as it necessitates the use of the one-body perturbation $\zeta H_1$ to break this symmetry when preparing of eigenstates of $H$.

\begin{lemma}
Let $\ket{\psi}$ be a nondegenerate eigenstate of $H$. Then for each single-vertex marginal $\sigma_v \coloneqq \tr_{\neg v} \op{\psi}{\psi}$, $v \in [m]$, we have
\begin{equation}
    Q(\sigma_v) = 0.
\end{equation}
\end{lemma}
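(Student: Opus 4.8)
The plan is to exploit the continuous symmetry established in Appendix~\ref{sec:symmetry_vertex}, Eq.~\eqref{eq:O(n)_sym}: that $H$ commutes with $\mathcal{U}_{(\I_n, V)}^{\otimes m}$ for every $V \in \Orth(n)$. First I would use nondegeneracy: since the eigenspace of $\ket{\psi}$ is one-dimensional and $\mathcal{U}_{(\I_n, V)}^{\otimes m}$ commutes with $H$, the vector $\mathcal{U}_{(\I_n, V)}^{\otimes m}\ket{\psi}$ lies in that same eigenspace, hence equals $e^{\i \chi(V)}\ket{\psi}$ for some phase. Therefore the pure state $\op{\psi}{\psi}$ is invariant under conjugation by $\mathcal{U}_{(\I_n, V)}^{\otimes m}$. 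Because this unitary factorizes as $\bigotimes_{w \in [m]} \mathcal{U}_{(\I_n, V)}^{(w)}$, tracing out every vertex except $v$ passes the invariance down to the single-vertex marginal, giving $\sigma_v = \mathcal{U}_{(\I_n, V)}\, \sigma_v\, \mathcal{U}_{(\I_n, V)}^\dagger$ for all $V \in \Orth(n)$.

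Next I would push this symmetry through the definition $[Q(\sigma_v)]_{ij} = \tr(P_{ij}\sigma_v)$ with $P_{ij} = \i \widetilde{\gamma}_i \gamma_j$. Since $\mathcal{U}_{(\I_n, V)}$ acts trivially on every $\widetilde{\gamma}_i$ and rotates the $\gamma_j$ by $V$ (Eqs.~\eqref{eq:gamma_tilde_gaussian}--\eqref{eq:gamma_gaussian}), a short computation gives $\mathcal{U}_{(\I_n, V)}^\dagger P_{ij} \mathcal{U}_{(\I_n, V)} = \sum_{\ell \in [n]} V_{\ell j}\, P_{i\ell}$. Combining this with the conjugation invariance of $\sigma_v$ and the cyclicity of the trace yields $[Q(\sigma_v)]_{ij} = \sum_{\ell} V_{\ell j}\, [Q(\sigma_v)]_{i\ell}$, that is, the matrix identity $Q(\sigma_v) = Q(\sigma_v)\, V$ holding for every $V \in \Orth(n)$.

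Finally I would extract the conclusion: taking $V = -\I_n \in \Orth(n)$ gives $Q(\sigma_v) = -Q(\sigma_v)$, hence $Q(\sigma_v) = 0$; alternatively, averaging over the Haar measure gives $Q(\sigma_v) = Q(\sigma_v)\int_{\Orth(n)} V\, \mathrm{d}V = 0$ since the barycenter of $\Orth(n)$ vanishes for $n \geq 2$. The argument transfers to $\widetilde{H}$: the parity-preserving Gaussian unitaries $\mathcal{U}_{(\I_n, V)}$ with $V \in \SO(n)$ descend to the even-parity subspace, $\widetilde{H}$ commutes with their $m$-fold tensor product, and one obtains $\widetilde{Q}(\sigma_v) = \widetilde{Q}(\sigma_v)\, V$ for all $V \in \SO(n)$; since $\int_{\SO(n)} V\, \mathrm{d}V = 0$ for $n \geq 2$ (or simply $V = -\I_n$ when $n$ is even), again $\widetilde{Q}(\sigma_v) = 0$.

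I do not expect a real obstacle here; the main points requiring care are that nondegeneracy is genuinely essential—for a degenerate eigenspace the symmetry only permutes states within the eigenspace rather than fixing $\ket{\psi}$—and that one must track the direction of the adjoint action so that the rotation by $V$ lands on the $\gamma_j$ (equivalently, on the column index of $Q$) and not on the $\widetilde{\gamma}_i$; otherwise the resulting identity would be the vacuous $Q(\sigma_v) = Q(\sigma_v)$ for $V = \I_n$ alone.
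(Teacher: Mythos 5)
Your proposal is correct and takes essentially the same route as the paper: both exploit the $\Orth(n)$ symmetry of $H$ together with nondegeneracy to force the one-body data of $\ket{\psi}$ to be invariant under an action under which it must also transform linearly, and then kill it by group averaging. The differences are only cosmetic — you pass the invariance to the marginal $\sigma_v$ and use cyclicity of the trace (plus the clean $V=-\I_n$ shortcut), whereas the paper expands $\op{\psi}{\psi}$ in the Majorana operator basis and invokes the vanishing of the Haar first moment $\int_{\Orth(n)} d\mu(V)\, V = 0$; your bookkeeping of which Majorana species is rotated by $\mathcal{U}_{(\I_n,V)}$ is also the one consistent with the paper's definitions.
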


\begin{proof}
Consider the expansion of its density matrix $\op{\psi}{\psi}$ in the Majorana operator basis, up to the relevant one-body expectation values:
\begin{equation}
    \op{\psi}{\psi} = \frac{1}{d^m} \l( \openone^{\otimes m} + \sum_{v \in [m]} \sum_{i,j \in [n]} [Q(\sigma_v)]_{ij} \i \widetilde{\gamma}_i^{(v)} \gamma_j^{(v)} + \cdots \r),
\end{equation}
where we recall that $[Q(\sigma_v)]_{ij} = \ev{\i \widetilde{\gamma}_i^{(v)} \gamma_j^{(v)}}{\psi}$. Due to the symmetry [Eq.~\eqref{eq:O(n)_sym}], for every $V \in \Orth(n)$ the state $\ket{\psi(V)} = {\mathcal{U}}_{(\I_n, V)}^{\otimes m} \ket{\psi}$ is also an eigenvector of $H$ with the same eigenvalue. The one-body expectation values of $\ket{\psi}$ are therefore transformed as
\begin{equation}
\begin{split}
    {\mathcal{U}}_{(\I_n, V)}^{\otimes m} \l( \sum_{v \in [m]} \sum_{i,j \in [n]} [Q(\sigma_v)]_{ij} \i \widetilde{\gamma}_i^{(v)} \gamma_j^{(v)} \r) ({\mathcal{U}}_{(\I_n, V)}^{\otimes m})^\dagger &= \sum_{v \in [m]} \sum_{i,j \in [n]} [Q(\sigma_v)]_{ij} \i \sum_{i' \in [n]} V_{ii'} \widetilde{\gamma}_{i'}^{(v)} \gamma_{j}^{(v)}\\
    &= \sum_{v \in [m]} \sum_{i',j \in [n]} [V^\T Q(\sigma_v)]_{i'j} \i \widetilde{\gamma}_{i'}^{(v)} \gamma_{j}^{(v)}.
\end{split}
\end{equation}
Now suppose that $\ket{\psi}$ is nondegenerate. Then we have that $\op{\psi}{\psi} = \op{\psi(V)}{\psi(V)}$ for all $V \in \Orth(n)$, and in particular we can take the Haar integral over $\Orth(n)$ of this identity:
\begin{equation}
\begin{split}
    \int_{\Orth(n)} d\mu(V) \op{\psi(V)}{\psi(V)} = \int_{\Orth(n)} d\mu(V) \op{\psi}{\psi} = \op{\psi}{\psi},
\end{split}
\end{equation}
where $\mu$ is the normalized Haar measure satisfying $\mu(\Orth(n)) = 1$. Because the Haar integral over linear functions vanishes, i.e., $\int_{\Orth(n)} d\mu(V) \, V_{ij} = 0$~\cite{collins2006integration}, it follows that
\begin{equation}
    \int_{\Orth(n)} d\mu(V) \, V^\T Q(\sigma_v) = 0.
\end{equation}
Furthermore, because $\i \widetilde{\gamma}_{i}^{(v)} \gamma_{j}^{(v)}$ are linearly independent (as elements of an operator basis), the equality $\op{\psi}{\psi} = \int_{\Orth(n)} d\mu(V) \op{\psi(V)}{\psi(V)}$ implies that
\begin{equation}
    Q(\sigma_v) = \int_{\Orth(n)} d\mu(V) \, V^\T Q(\sigma_v) = 0
\end{equation}
for all $v \in [m]$.
\end{proof}

\section{\label{sec:pin_circuits}The Pin group from quantum circuits}
In the main text we showed that each $x \in \Pin(n)$ corresponds to the eigenstates of a family of free-fermion Hamiltonians, which are (pure) fermionic Gaussian states. Here we provide an alternative perspective of this correspondence, using quantum circuits which prepare such states.

Recall that every $x \in \Pin(n)$ can be written as
\begin{equation}
    x = u_1 \cdots u_k
\end{equation}
for some $k \leq n$, where we may expand each $u_j \in S^{n-1}$ in the standard basis as
\begin{equation}
    u_j = \sum_{i \in [n]} v_i^{(j)} e_i
\end{equation}
for some unit vector $v^{(j)} \in \R^n$. The product of these unit vectors can be expressed using the right-multiplication operator $\rho_{u_j}$ acting on the identity element,
\begin{equation}
\begin{split}
    x &= e_\varnothing x\\
    &= e_\varnothing u_1 \cdots u_k\\
    &= (\rho_{u_k} \cdots \rho_{u_1})(e_\varnothing)\\
\end{split}
\end{equation}
On the other hand, consider the so-called Clifford loader~\cite{kerenidis2022quantum}, a circuit primitive defined (in our notation) as
\begin{equation}
    \Gamma(v) = \sum_{i \in [n]} v_i \gamma_i
\end{equation}
for any unit vector $v \in \R^n$. It is straightforward to check that this operator is Hermitian and unitary, and Ref.~\cite{kerenidis2022quantum} provides an explicit circuit constructions based on two-qubit Givens rotation primitives.\footnote{Givens rotations themselves are representations of fermionic Gaussian transformations acting on two modes at a time.} Using the relation $\gamma_i = \rho_i \alpha$ and acting this circuit on the vacuum state $\ket{0^n} \equiv \ket{e_\varnothing}$, we see that the state
\begin{equation}
\begin{split}
    \ket{x} &= \Gamma(v^{(k)}) \cdots \Gamma(v^{(1)}) \ket{0^n}\\
    &= (\rho_{u_k} \alpha \cdots \rho_{u_1} \alpha)\ket{e_\varnothing}\\
    &= (-1)^{\binom{k}{2}} (\rho_{u_k} \cdots \rho_{u_1})\ket{e_\varnothing}
\end{split}
\end{equation}
indeed is equivalent to $x \in \Pin(n)$, up to a global sign (recall that $\alpha^2 = \openone$ and $\alpha\ket{e_\varnothing} = \ket{e_\varnothing}$). In other words, the Clifford loader is precisely the quantum-circuit representation of generators of the $\Pin$ group.

It is worth noting that in Ref.~\cite{kerenidis2022quantum} they construct ``subspace states'' from this composition of Clifford loaders. In the language of fermions, subspace states are Slater determinants:~free-fermion states with fixed particle number. Preparing Slater determinants in this fashion requires that the unit vectors $v^{(1)}, \ldots, v^{(k)}$ be linearly independent (and thus, without loss of generality, they can be made orthonormal while preserving the subspace that they span, hence the alternative name). However, the definition of the Pin group demands all possible unit vectors in such products, not just those which are linearly independent. Indeed, one can see that if the state $\ket{x}$ is a Slater determinant, then its trace is an integer, as
\begin{equation}
    \tr[Q(x)] = \langle x | \sum_{i \in [n]} \i \widetilde{\gamma}_i \gamma_i | x \rangle = \langle x | (n \I_{2^n} - 2 N) | x \rangle = n - 2k \in \{-n, \ldots, n\},
\end{equation}
where $N = \sum_{i \in [n]} a_i^\dagger a_i$ is the total number operator. Clearly not all orthogonal matrices have integer trace, so Slater determinants are insufficient to cover all of $\Pin(n)$. To reach the remaining elements, we note that if the unit vectors are linearly dependent, then one can show that the $\ket{x} = \Gamma(v^{(k)}) \cdots \Gamma(v^{(1)}) \ket{0^n}$ does not have fixed particle number, so $\ev{N}{x}$ is not necessarily an integer.

\section{\label{sec:tomography}Measurement schemes}
In this section we comment on the efficient schemes available for measuring the relevant expectation values. This is important even in the context of a phase-estimation approach, as one needs to obtain the values of the decision variables to perform the rounding procedure.

\subsection{Tomography of edge marginals}

To measure the energy (for variational approaches) or to perform edge rounding, we require the expectation values of the two-body observables
\begin{align}
    \Gamma_{ij}^{(u, v)} &= \sum_{k \in [n]} P_{ik}^{(u)} \otimes P_{jk}^{(v)}, \quad G = \Orth(n),\\
    \widetilde{\Gamma}_{ij}^{(u, v)} &= \sum_{k \in [n]} \widetilde{P}_{ik}^{(u)} \otimes \widetilde{P}_{jk}^{(v)}, \quad G = \SO(n),
\end{align}
for each $(u, v) \in E$ and $i, j \in [n]$. When considering $G = \Orth(n)$, because each $P_{ik}^{(u)} \otimes P_{jk}^{(v)}$ is a fermionic two-body operator, we can straightforwardly apply the partial tomography schemes developed for local fermionic systems, such as Majorana swap networks~\cite{bonet2020nearly} or classical shadows~\cite{zhao2021fermionic}. In either case, the measurement circuits required are fermionic Gaussian unitaries and the sample complexity is $\Ord(N^2/\epsilon^2)$, where $N = n|V|$ is the total number of qubits and $\epsilon > 0$ is the desired estimation precision of each expectation value.

\subsection{Tomography of vertex marginals}

The vertex-rounding procedure requires the expectation values of only single-qudit observables $P_{ij}^{(v)}$ or $\widetilde{P}_{ij}^{(v)}$ on each vertex $v \in V$, $i, j \in [n]$. In this case the observables being measured commute across vertices, so it suffices to talk about the tomography of a single vertex, as the same process can be executed in parallel across all vertices. Again, because these operators are fermionic one-body observables, the same fermionic partial tomography technology~\cite{bonet2020nearly,zhao2021fermionic} can be applied here, incurring a sampling cost of $\Ord(n/\epsilon^2)$. In fact, further constant-factor savings can be achieved in the one-body setting by using the measurement scheme introduced in Ref.~\cite{arute2020hartree}. This scheme requires only particle-conserving fermionic Gaussian unitaries, which can be compiled with only half the depth of the more general Gaussian unitaries required of the previous two methods. Note that each operator $P_{ij}$ is of the form of either $XX$ or $YY$ when $|i - j| = 1$, and so they correspond precisely to the observables measured to reconstruct the real part of the fermionic one-body reduced density matrix~\cite{arute2020hartree}.

\subsection{Estimating observables via gradient method}

Ref.~\cite{huggins2021nearly} introduces a quantum algorithm for estimating a large collection of (generically noncommuting) $M$ observables $\{O_j \mid j \in [M]\}$ to precision $\epsilon$ by encoding their expectation values into the gradient of a function. This function is implemented as a quantum circuit which prepares the state of interest and applies $\Ot(\sqrt{M}/\epsilon)$ gates of the form $c\text{-}e^{-\i\theta O_j}$, controlled on $\Ord(M\log(1/\epsilon))$ ancilla qubits. Finally, using the algorithm of Ref.~\cite{gilyen2019optimizing} for gradient estimation, one calls this circuit $\Ot(\sqrt{M}/\epsilon)$ times to estimate the encoded expectation values (the notation $\Ot(\cdot)$ suppresses polylogarithmic factors). Although this approach demands additional qubits and more complicated circuitry, it has the striking advantage of a quadratically improved scaling in the number of state preparations with respect to estimation error $\epsilon$, compared to the refinement of sampling error in tomographic approaches. In our context, we have either $M = n^3|E|$ or $M = n^2|V|$ observables of interest (satisfying a technical requirement of having their spectral norms bounded by $1$), corresponding to the measurement of edge or vertex terms respectively. The gates required are then simply controlled Pauli rotations.


\end{document}